\documentclass[11pt]{article}
\usepackage{amsmath,amssymb,theorem,enumerate,graphicx,multirow,multicol,diagbox}
\usepackage{array}

\author{Marc Demange\footnote{RMIT University, School of  Science, Australia; e-mail: {\tt marc.demange@rmit.edu.au}} 
\quad
Dominique de Werra\footnote{Ecole Polytechnique F\'ed\'erale de Lausanne, EPFL, Switzerland; e-mail: {\tt dominique.dewerra@epfl.ch}}}
\vspace{3 cm}

\title{Complexity of choosability with a small palette of colors\thanks{This research was carried out while the first author was visiting EPFL. The support of this institution is gratefully acknowledged.
}}
\date{April 17, 2017}

\textheight=22cm
\topmargin=-1cm
\textwidth=15cm
\parindent=0 cm
\newtheorem{theorem}{Theorem}[section]
\newtheorem{claim}{Claim}[section]
\newtheorem{corollary}{Corollary}[section]
\newtheorem{lemma}{Lemma}

\newtheorem{proposition}{Proposition}[section]
\newtheorem{remark}{Remark}[section]

\newtheorem{pf claim}{Proof of Claim}
\newenvironment{proof}[1][Proof]{\noindent \textbf{#1. }}{\ \rule{0.5em}{0.5em}}

\begin{document}

\begin{titlepage}
\maketitle
\thispagestyle{empty}
\noindent

\begin{abstract}
A graph is $\ell$-choosable if, for any choice of lists of $\ell$ colors for each vertex, there is a list coloring, which is a coloring where each vertex receives a color from its list. We study complexity issues of choosability of graphs when the number $k$ of colors is limited. We get results which differ surprisingly from the usual case where $k$ is implicit and which extend known results for the usual case. We also exhibit some classes of graphs (defined by structural properties of their blocks) which are choosable. 
\end{abstract}

\textbf{Keywords:} List coloring, choosability, bipartite graph, planar graph, triangle-free graph, grid, chocolate.

\end{titlepage}

\section{Introduction}

List coloring and choosability in graphs have been extensively studied these last years. The literature provides a huge collection of results for most of which the number of colors is implicit. Here, we intend to explore choosability and list coloring issues in which the number of available colors (denoted by $k$ in the sequel) is fixed. To our knowledge, this idea was introduced in~\cite{cogis} and~\cite{ganjali} and further studied in particular in~\cite{kral} and more recently in~\cite{kang} where the set of available colors is called the spectrum and in~\cite{bonamy} where it is called the palette.  It turns out that limiting the size of the palette leads to some original and unexpected results. We shall derive below some results in this direction, in particular related to complexity.

The motivation for such research originates in some types of scheduling problems. A time interval of length $k$ is divided into $k$ unit time periods. We have a collection $V$ of tasks with unit processing times to schedule in this time interval. Each task $v$ to be scheduled has a collection $L(v)$ of possible time periods for its processing and should be assigned to exactly one period in the list $L(v)$, while taking into account a set $E$ of pairwise incompatibilities between tasks: some pairs of tasks cannot be processed during the same period because they may use a common resource (like a machine). It is usual in  such applications to assume that the total number of possible periods is fixed to some value $k$. 

More formally, we are given a graph $G=(V,E)$ and a palette ${\cal K}=\{1,2, \ldots, k\}$ of colors. Each vertex $v$ is associated with a list $L(v)\subseteq {\cal K}$ of possible colors. We denote such data by $(G,L)$ where $L$ will be called a list assignment, or list system. One has to find whether there is a vertex $k$-coloring $G$ which gives to each vertex $v$ a color $c(v)\in L(v)$ in such a way that $c(u)\neq c(v)$ for every edge $uv\in E$. Such a coloring is a {\em list $k$-coloring}.  

An interesting situation occurs when the size of lists are all equal to a fixed constant $\ell$; this happens in particular when one allows a fixed number of time preferences for each task to give more flexibility in the construction of the schedule. In this work we are interested in finding  whether or not such a coloring exists for any list system $L$ satisfying $|L(v)|=\ell$ for each $v\in V$. This property, called {\em choosability}, is particularly relevant since it  indicates that whatever the lists will be, it will be possible to find a feasible schedule. Then this schedule can be built using other criteria than the strict limitation of available colors. This can also be a meaningful information when deciding the number $\ell$ of alternatives to be chosen for each task.

After recalling some basic results on choosability and list coloring in Section~\ref{sec:notations}, we exhibit in Section~\ref{sec:polynomial} some classes of choosable graphs defined by blocks for which list colorings can be constructed in polynomial time. We discuss new complexity results in Section~\ref{sec:hardness} for the problem of deciding whether a graph is choosable, these results hold for small color palettes and restricted classes of graphs; 
We discuss some additional remarks about complexity in Section~\ref{sec:remarks} and finally, Section~\ref{sec:conclusion} presents a table summarizing the main results, concluding remarks as well as some open questions.

\section{Notations}\label{sec:notations} 

All graphs considered in this work are simple and finite.
Given a graph $G$, a vertex $k$-coloring assigns to every vertex $v$ a color $c(v)\in\{1, \ldots, k\}$ with $c(u)\neq c(v)$ for every edge $uv\in E$.
Given a graph $G$ and a list assignment $L$ with a palette of colors ${\cal K}$ ($k=|{\cal K}|)$, a {\em list $k$-coloring} is a $k$-coloring such that each vertex gets a color in its list. $L$ is {\em feasible} if there is a list $k$-coloring for $L$ while it is {\em infeasible} in the opposite case.

Given a graph $G=(V,E)$ and a function $f: V\longrightarrow {\mathbb N}$, an $f$-list assignment $L$ is a list assignment satisfying $|L(v)|=f(v), \forall v\in V$. 
For an edge $uv$, $G\setminus \{uv\}$ denotes the graph obtained from $G$ by removing $uv$ from the edge set. We will denote the neighborhood of a vertex $u$ by $\Gamma(u)$. For a set of vertices $X$, $G[X]$ denotes the subgraph of $G$ induced by $X$. A graph $G'=(V,E')$ with $E'\subset E$ is called {\em partial graph} of $G$ and a {\em partial induced subgraph} is a partial graph of an induced subgraph.  A set of vertices is {\em stable} if it consists of pairwise non-adjacent vertices. A {clique} $K_m$ is a complete graph on $m$ vertices; $K_{p,q}$ denotes a complete bipartite graph with parts of size $p$ and $q$ and $K_{p,q,r}$ denotes a complete tripartite graph with parts of size $p$, $q$ and $r$. $C_q$  denotes an elementary cycle of size $q$ (connected graph with $q$ vertices, all  of degree~2). An elementary path that is not a cycle has two end vertices of degree~1 and all other vertices of degree~2, called {\em internal}; the length is the number of edges. We denote by $G(p,q)$ the grid-graph (or shortly grid) of size $p\times q$ with vertex set $V = ((i,j) i = 1 \dots p; j = 1, \dots q)$ and edge set
$$
\begin{array}{rcl}
E = && \{ \left[ (i,j),(i,{j+1}) \right] i = 1, \dots,p,\  j = 1, \dots, q- 1 \} \\
& \bigcup & \{ \left[ (i,j),({i+1},j) \right],  i = 1, \dots,q-1,\  j = 1, \dots, q \}    \\
\end{array}
$$
Of course, $G(p,q)$ and $G(q,p)$ are isomorphic. $G(2,3)$ (or $G(3,2)$) is called a {\em chocolate}. By {\em subgrid} we will mean an (induced) subgraph of a grid.
All graph-theoretical terms not defined here can be found in~\cite{cb73}. For complexity concepts we refer the reader to~\cite{gj}.

\vspace{0.5 cm}

Given  a graph $G=(V,E)$ and a function $f: V\longrightarrow {\mathbb N}$, $G$ is called {\em $[f, k]$-choosable} if it has a list $k$-coloring for every list system $L$ satisfying $\forall v\in V, |L(v)|=f(v)$. 
If $\forall v\in V, f(v)=\ell$, then $G$ is simply called  {\em $[\ell, k]$-choosable}. 

We assume that colors in each list $L(v)$ are distinct and consequently that the size $k$ of the palette satisfies $k\geq f$, i.e., $k\geq \max (f(v), v\in V)$.

A graph is {\em $f$-choosable} (resp. {\em $\ell$-choosable}) if it is $[f,k]$-choosable for any $k\geq f$ (resp. if it is $[\ell,k]$-choosable for any $k\geq \ell$). 
We must also have $k\geq \chi(G)$ for such a coloring to exist, where $\chi(G)$ is the chromatic number of the graph $G$.

Note that a graph is $[f,k]$-choosable (resp. $f$-choosable, $\ell$-choosable) if and only if it is the case for each connected component.

\begin{remark}
A similar notion of choosability with $k$ colors, denoted by $(\ell,k)$-choosability, has been used for instance in~\cite{ktchoos} and~\cite{ganjali}. The only difference is that in these works $k$ denotes the number of colors in the union of all lists. Notice that with such a definition, choosability with $k$ colors does not imply choosability with $k-1$ colors. Consider for instance a two-edge path on vertices $x,y,z$; it would be 1-choosable if $|L(x)\cup L(y)\cup L(z)|=3$ while it is trivially not 1-choosable with only two colors.

The same notation was used in~\cite{kral} for a concept identical with our definition of $[\ell,k]$-choosability. We chose another notation to avoid any confusion.
\end{remark}

Our definition does not require that a list assignment uses all colors of the palette, which ensures that $[f,k]$-choosability 
implies $[f,k']$-choosability for any $k'$ such that $\max (f(v), v\in V)\leq k'\leq k$. Moreover, $[f,k]$-choosability 
implies also $[f',k]$-choosability 
for any $f'$ such that  $k\geq f'\geq f$. Note that $[k,k]$-choosability 
is exactly the $k$-colorability and that $[f,k]$-choosability 
is an hereditary property.

 Note also that the concept of $[\ell,k]$-choosability, based on usual graph colorings, has been generalized in~\cite{kang} to the case of $t$-improper colorings where each color class induces a subgraph of maximum degree at most~$t$.

In~\cite{gutner}, a notion of $\ell$-choice criticality was introduced. Here we give another definition where the size of the palette  appears explicitly. In Section~\ref{sec:hardness} it will allow us to derive hardness results for smaller and sometimes optimal palette sizes.

Given a graph $G=(V,E)$, a function $f: V\longrightarrow {\mathbb N}$ and $V'\subset V$, $G$ is  {\em $([f,k], V')$-critical} if there is an infeasible $f$-list assignment $L$ such that $\forall v\in V, L(v)\subset \{1, \ldots, k\}$ and $|\cup_{v\in V'}L(v)|\leq k-1$  but if we  replace $f(v)$ by $f(v)+1$ for an arbitrary vertex $v\in V'$, then $G$ becomes $[f,k]$-choosable. Note that if a graph is $\ell$-choice critical, as defined in~\cite{gutner}, then there is a vertex $v_0$ and a palette size~$k$ such that  $G$ is $([f,k],\{v_0\})$ critical  with$f(v)=\ell, \forall v\neq v_0$ and $f(v_0)=\ell-1$. 

$[\ell, k]$-LISTCOL will denote the following problem: given an instance $(G,L)$, where $G=(V,E)$ is a graph and $L$ a list assignment such that $\forall v\in V, 
L(v)\subseteq \{1, \ldots, k\}$ and $|L(v)|=\ell$, decide whether there is a  $k$-coloring satisfying all list requirements. 
$\ell$-LISTCOL is defined in a similar way when no limitation is imposed on the total number of available colors.  For a fixed set $\Lambda$ of positive integers, $[\Lambda, k]$-LISTCOL  is the similar problem with the condition that all  list sizes are chosen in $\Lambda$. So, $[\ell, k]$-LISTCOL is the simplified notation for $[\{\ell\}, k]$-LISTCOL.

$[\ell, k]$-CH (resp. $\ell$-CH) denotes the problem of deciding whether a given graph $G$ is $[\ell, k]$-choosable (resp. $\ell$-choosable). For a fixed set $\Lambda$ of positive integers, {\em $[\Lambda, k]$-CH} is the problem defined as follows: an instance is a graph $G$ and a function $f: V(G)\rightarrow \Lambda$ and the problem is to decide whether $G$ is $[f,k]$-choosable. 

For a bipartite graph $G=(B\cup W, E)$, we often consider the special case where all lists in each part have the same size, which corresponds to the function $f$ defined by: 
\begin{eqnarray}\label{eq:pq}
f(v)=
\left\{
\begin{array}{lll}
p & \mbox{for} &  v \in B \\
q & \mbox{for} &  v \in W
\end{array}
\right.
\end{eqnarray}

For bipartite graphs  
{$[f,k]$-choosability}, $f$-choosability and\\ $([f,k], V')$-criticality will be respectively called  $[(p,q),k]$-choosability, $(p,q)$-choosability and $([(p,q),k], V')$-criticality if $f$ is defined by (\ref{eq:pq}). 
Similarly we define the problems $[(p,q),k]$-LISTCOL, $(p,q)$-LISTCOL, $[(p,q),k]$-CH and $(p,q)$-CH whose instances are bipartite graphs $(B\cup W,E)$ with lists of size $p$ in $B$ and $q$ in $W$. We shall always assume that $k\geq \max(p,q)$.

\section{Classes of choosable graphs defined by blocks}\label{sec:polynomial} 

In this section we will present some classes of graphs defined by structural properties of their blocks. We will in particular examine how choosability properties of their blocks can be extended to the entire graph and we will show that the related list coloring problem can be solved in polynomial time. 
We recall that a {\em cut vertex} in a graph is a vertex whose removal disconnects the graph. A  {\em block} is a maximal connected subgraph without cut vertex. 

A graph will be called {\em ideally} $[\ell,k]$-choosable if for any 
$\ell$-list assignment with colors in $\{1, \ldots, k\}$ and any vertex $v_0$, if one assigns to $v_0$ any color   $c\in L(v_0)$ it is always possible to extend the list coloring to the entire  $G$. 

\begin{proposition}\label{prop: block}
Assume in each connected component of a graph $G$ all blocks but one are ideally $[\ell,k]$-choosable (resp. $\ell$-choosable) and the last one is $[\ell,k]$-choosable (resp. $\ell$-choosable), then $G$ itself is $[\ell,k]$-choosable (resp. $\ell$-choosable).
\end{proposition}

\begin{proof}
We assume $G$ is connected, otherwise we apply the following to each connected component.

Let ${\cal B}=\{B_1, \ldots, B_b\}$ be the set of blocks of $G$, where $B_2, \ldots, B_b$ are ideally $[\ell,k]$-choosable (resp. $\ell$-choosable) and $B_1$ is $[\ell,k]$-choosable (resp. $\ell$-choosable). Let ${\cal C}=\{c_1, \ldots, c_q\}$ be the set of cut vertices of $G$. We construct the bipartite graph $\hat G=({\cal B},{\cal C},E)$, where $B_jc_i\in E$ if $c_i\in B_j$.

		It is known that $\hat G$ is a tree, called the {\em block-tree} since $G$ is connected~\cite{diestel}. Hence ordering ${\cal B}$ using the breadth-first order from the root $B_1$,  
$B_2, \ldots, B_b$ are ordered in such a way that, for any $j, 2\leq j\leq b$, $B_j$ has exactly one vertex, say $v_j$, with $v_j\in B_1\cup \ldots \cup B_{j-1}$:  $v_j$ is the only predecessor of $B_i$ when directing the edges from $B_1$ to the leaves. 

Consider any list assignment. A suitable coloring can always be constructed for $B_1$ from the assumptions. Now, having found a suitable coloring for the vertices of $B_1\cup \ldots \cup B_{j-1}$ $(j\geq 2)$ we impose the color $c(v_j)$ given to vertex $v_j\in B_1\cup \ldots \cup B_{j-1}$ and since $B_j$ is ideally $[\ell,k]$-choosable (resp. $\ell$-choosable) we can extend the list coloring to the vertices of $B_j$. By repeating this until the vertices of $B_b$ are colored, we get a suitable coloring of $G$. 
\end{proof}

\begin{remark}\label{rem:blocks}
The blocks of a connected graph together with the associated block-tree can be found in $O(|V|+|E|)$~\cite{gondran}. The breadth-first order   of this tree can be obtained in $O(|V|)$ since there are at most $|V|$ blocks. For a non connected graph, the connected components as well as the related block-trees and their breadth-first order can be computed in $O(|V|+|E|)$. We then obtain the following consequence of the proof of Proposition~\ref{prop: block}.

Let ${\cal C}$ be a class of connected
 graphs which are ideally $[\ell,k]$-choosable (resp. $\ell$-choosable) and for which $[\ell,k]$-LISTCOL (resp. $\ell$--LISTCOL) is polynomial of complexity $O(T)$. Then $[\ell,k]$-LISTCOL (resp. $\ell$--LISTCOL) is polynomial of complexity $O(\max(T, |V|+|E|))$ for the class of graphs whose blocks are in ${\cal C}$. 
\end{remark}

Line-perfect graphs  are the graphs $G$ for which the line graph $L(G)$ is perfect~\cite{trotter-line-perfect}. They are characterized by the fact that their blocks are isomorphic to $K_4$, $K_{1,1,p}, p\geq 1$ or any 2-connected bipartite graph~\cite{maffray}.

More generally, {\em quasi line-perfect}  graphs are graphs whose blocks are isomorphic to $K_4$, a 2-connected bipartite graph, $K_{1,1,p}$ or an odd cycle $C_{2p+1}, (p\geq 1)$. 
Note in particular that line-perfect graphs and cacti (blocks are cycles or edges) are quasi line-perfect.

In the following we derive some results related to the different types of blocks involved in quasi line-perfect graphs.

\begin{proposition}\label{claim: A}
Bipartite graphs are ideally $[3,4]$-choosable and $[3,4]$-LISTCOL can be solved in $O(|V|)$ in this class. 
\end{proposition}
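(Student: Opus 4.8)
The plan is to exploit the fact that a palette of only four colors, combined with lists of size three, is extremely restrictive: each list $L(v)$ misses exactly one color of $\{1,2,3,4\}$. The crucial observation I would record first is that any two-element subset $P\subseteq\{1,2,3,4\}$ meets every admissible list, i.e. $P\cap L(v)\neq\emptyset$ for all $v$; indeed $L(v)$ omits a single color while $|P|=2$, so $P$ cannot be contained in the one-element complement $\{1,2,3,4\}\setminus L(v)$.

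Building on this, for plain $[3,4]$-choosability and for $[3,4]$-LISTCOL I would split the four colors into two disjoint pairs and align them with the bipartition. Writing $G=(B\cup W,E)$, set $P_B=\{1,2\}$ and $P_W=\{3,4\}$. Color every $v\in B$ with an arbitrary color of $L(v)\cap P_B$ and every $v\in W$ with an arbitrary color of $L(v)\cap P_W$; both intersections are nonempty by the observation above. Since every edge of a bipartite graph joins $B$ to $W$, its endpoints receive colors from the disjoint sets $P_B$ and $P_W$ and are therefore distinct, so the coloring is proper. Each choice only inspects a list of size three and a pair of size two, i.e. it costs constant time per vertex, which yields the announced $O(|V|)$ running time once the bipartition is available (and shows moreover that every such instance is feasible).

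For the stronger ``ideally'' statement I would add one twist to accommodate the prescribed vertex. Given any $v_0$ and any $c_0\in L(v_0)$, assume without loss of generality that $v_0\in B$. Rather than fixing $P_B=\{1,2\}$, I would choose the partition of $\{1,2,3,4\}$ into two pairs so that $c_0$ lands in the pair $P_B$ assigned to $B$ (pair $c_0$ with any of the three remaining colors and let $P_W$ be the complementary pair). Then I set $c(v_0)=c_0$, which is legitimate because $c_0\in L(v_0)\cap P_B$, and color the remaining vertices exactly as before, using $L(v)\cap P_B$ on $B$ and $L(v)\cap P_W$ on $W$. The same disjointness argument shows this extends the imposed color to a proper list coloring, establishing ideal $[3,4]$-choosability.

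I do not expect a genuine obstacle here: the whole argument rests on the single combinatorial fact that a $2$-set hits every $3$-list drawn from four colors, together with the freedom to route the two color pairs to the two sides of the bipartition. The only point requiring a little care is the ``ideally'' clause, where one must be sure the partition can always be chosen so as to place the imposed color $c_0$ in the pair serving $v_0$'s side; since pairing $c_0$ with any of the three other colors works, this is always possible.
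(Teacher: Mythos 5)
Your proof is correct and rests on the same elementary fact the paper exploits, namely that a $3$-list drawn from a $4$-color palette omits only one color, so each side of the bipartition can be handled with just two colors. Your packaging via a fixed $2{+}2$ partition of the palette aligned with $B$ and $W$ (with the imposed color $c_0$ routed to $v_0$'s side) is a minor cosmetic variant of the paper's argument, which colors $B$ with $c_0$ plus one further common color and then lets each $W$-vertex avoid those two; both yield the $O(|V|)$ bound.
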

\begin{proof}
Let $G=(B\cup W, E)$ be a bipartite graph, assume $v_0\in B$ and choose $c\in L(v_0)$. Color with $c$ all vertices  $v\in B$ such that $c\in L(v)$; since lists are of length~3 and the palette includes only four colors, all other vertices in $B$ have the same list and  all can be colored with the same color $c'$. Since $|L(u)|=3$ for each $u\in W$, we can find in $L(u)$ a color $c(u)\neq c,c'$, which allows to extend the list coloring to~$W$. The related complexity is $O(|V|)$.
\end{proof}

\begin{proposition}\label{claim: C}
Let $G$ be a graph with an order  $v_1, \ldots, v_n$ of its vertices and the related acyclic orientation of its edges ($v_iv_j$ oriented from $v_i$ to $v_j$ if $i<j$). Let $d^-_G(v_i)$ be the number of edges oriented towards $v_i$, $i=1, \ldots, n$. Consider  a list assignment $L$ with colors in $\{1, \ldots, k\}$. If  $\forall i=1, \ldots, n, |L(v_i)|\geq d^-_G(v_i)+1$, then for any choice of color $c\in L(v_1)$ for $v_1$, there is a list $k$-coloring with $v_1$ colored $c$. It can be constructed in $O(|V|+|E|)$.
\end{proposition}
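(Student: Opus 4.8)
The plan is to color the vertices greedily in the given order $v_1, \ldots, v_n$, exploiting the fact that when we reach $v_i$ the only already-colored neighbors are precisely its in-neighbors, of which there are exactly $d^-_G(v_i)$. First I would set $c(v_1)=c$. This is legitimate: since $v_1$ is first, every edge incident to it is oriented away from it, so $d^-_G(v_1)=0$ and no constraint can be violated; moreover $c\in L(v_1)$ by hypothesis, consistent with the bound $|L(v_1)|\geq d^-_G(v_1)+1=1$.

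Then I would process $v_2, \ldots, v_n$ in increasing order, maintaining the invariant that $v_1, \ldots, v_{i-1}$ already carry a valid partial list coloring. When $v_i$ is reached, the neighbors of $v_i$ that are already colored are exactly the $v_j$ with $j<i$ and $v_jv_i\in E$, that is, the in-neighbors of $v_i$; there are $d^-_G(v_i)$ of them, so they forbid at most $d^-_G(v_i)$ colors. Since $|L(v_i)|\geq d^-_G(v_i)+1$, at least one color of $L(v_i)$ is not forbidden, and I would assign such a color to $v_i$. The resulting assignment is a proper list $k$-coloring: for every edge $v_jv_i$ with $j<i$, the vertex $v_i$ was colored strictly after $v_j$ and, by construction, avoided the color of $v_j$; and every color used lies in the corresponding list. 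This is the classical greedy/degeneracy argument, here adapted to the prescribed acyclic orientation, so I expect no deep obstacle in the correctness part.

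The point requiring care is the running time. To reach $O(|V|+|E|)$ rather than a bound depending on $k$ or on the full list sizes, I would keep a boolean array indexed by colors that flags the colors currently forbidden for the vertex being processed. For each $v_i$ I mark the colors of its $d^-_G(v_i)$ in-neighbors, scan the entries of $L(v_i)$ until an unmarked color is found, and then unmark those entries again so the array is clean for the next vertex. The key observation is that at most $d^-_G(v_i)$ entries of $L(v_i)$ are forbidden, so inspecting any $d^-_G(v_i)+1$ distinct list entries necessarily exposes an available color; hence the scan, the marking, and the unmarking at $v_i$ all cost $O(d^-_G(v_i)+1)$. Summing over all vertices gives $\sum_i \bigl(d^-_G(v_i)+1\bigr)=|E|+|V|$, which is the claimed $O(|V|+|E|)$ (up to a one-time initialization of the forbidden-color array). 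The main thing to be careful about is precisely this bookkeeping: without the per-vertex reset, the complexity would degrade, so the analysis hinges on keeping the work at each vertex proportional to its in-degree.
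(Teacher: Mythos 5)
Your proof is correct and follows the same greedy argument as the paper: color $v_1$ with $c$, then process $v_2,\ldots,v_n$ in order, at each step choosing a color of $L(v_i)$ not used by the at most $d^-_G(v_i)$ already-colored in-neighbors. The paper states the complexity bound without detail, whereas you supply the mark/scan/unmark bookkeeping that actually justifies $O(|V|+|E|)$; this is a welcome elaboration, not a different approach.
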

\begin{proof}
$v_1$ is colored $c$; consider vertices in the order $v_2, \ldots, v_n$ and color $v_i$ with the first available color in $L(v_i)$ (not used for coloring its predecessors in the order). This is always possible by the assumptions. The complexity is $O(|V|+|E|)$. 
\end{proof}

\begin{corollary}~\label{cor:acyclic}
If  $G=K_{1,1,p}$ ($p\geq 1$), then $G$ is ideally $3$-choosable and $3$-LISTCOL can be solved in $O(|V|)$.
\end{corollary}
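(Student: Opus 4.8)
The plan is to reduce everything to Proposition~\ref{claim: C}. Since the lists have size $\ell=3$, it suffices, for each choice of the pre-colored vertex $v_0$, to exhibit a vertex ordering $v_1=v_0,v_2,\ldots,v_n$ whose associated acyclic orientation satisfies $d^-_G(v_i)\leq 2$ for every $i$; then $|L(v_i)|=3\geq d^-_G(v_i)+1$ and Proposition~\ref{claim: C} yields, for any $c\in L(v_0)$, a list $k$-coloring with $v_0$ colored $c$, which is exactly ideal choosability.

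First I would fix notation: write $K_{1,1,p}$ with parts $\{a\}$, $\{b\}$ and $C=\{c_1,\ldots,c_p\}$, so that $a$ and $b$ are adjacent, each $c_i$ is adjacent to both $a$ and $b$, and the $c_i$ are pairwise non-adjacent. Thus $\deg(c_i)=2$ while $a$ and $b$ are the only high-degree vertices, with $\deg(a)=\deg(b)=p+1$.

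Next I would split into two cases according to the type of $v_0$. If $v_0\in\{a,b\}$, say $v_0=a$, I would take the order $a,b,c_1,\ldots,c_p$; here $a$ is a source, $b$ receives only the edge $ab$, and each $c_i$ receives only the two edges from $a$ and $b$, so all in-degrees are at most $2$. If $v_0=c_j$ for some $j$, I would take the order $c_j,a,b$ followed by the remaining $c_i$ ($i\neq j$); now $c_j$ is a source, $a$ receives only the edge $c_ja$, the vertex $b$ receives exactly the two edges $c_jb$ and $ab$, and each later $c_i$ receives only the two edges from $a$ and $b$. In every case $d^-_G(v_i)\leq 2$, so Proposition~\ref{claim: C} applies for every $k\geq 3$, proving ideal $3$-choosability.

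Finally, for the complexity, I would invoke the $O(|V|+|E|)$ bound of Proposition~\ref{claim: C} and observe that $K_{1,1,p}$ has $|E|=2p+1=O(|V|)$, so the coloring is built in $O(|V|)$; the same orderings solve $3$-LISTCOL in $O(|V|)$ without fixing a pre-colored vertex. The only point requiring care is the second case, where both high-degree vertices $a,b$ must be placed after the chosen $c_j$: one must verify that the later of $a,b$ still has in-degree exactly $2$ (one edge from $c_j$ together with the single edge $ab$), which is precisely what keeps us within the bound $\ell-1=2$.
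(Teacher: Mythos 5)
Your proof is correct and follows essentially the same route as the paper: both reduce to Proposition~\ref{claim: C} via an ordering that places $a$ and $b$ among the first three vertices so that every in-degree is at most $2$, and both use $|E|=2|V|-3$ for the $O(|V|)$ bound. Your explicit two-case analysis merely spells out what the paper states in one line ("any order such that $\{a,b\}\subset\{v_1,v_2,v_3\}$").
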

\begin{proof}
$K_{1,1,p}$ consists of a clique $K_2=\{a,b\}$ completely linked to a stable set $S_p$ of size~$p$.
Choose an arbitrary vertex $v_1$ and consider any order $v_2, v_3, \ldots, v_n$ of other vertices such  that $\{a,b\}\subset \{v_1, v_2, v_3\}$. Orienting each edge $v_iv_j, i<j$ from $i$ to $j$ we have $\forall i=1, \ldots, n$, $d^-_G(v_i)\leq 2$. 
Using Proposition~\ref{claim: C}, we conclude that $K_{1,1,p}$  is ideally $3$-choosable. The related complexity is $O(|V|)$ since $|E|=2|V|-3$.
\end{proof}

\begin{corollary}\label{claim: B}
A graph  $G$ of maximum degree $\Delta$ is ideally $(\Delta+1)$-choosable and $(\Delta+1)$-LISTCOL can be solved in $O(|V|+|E|)$. In particular a clique $K_n$ is ideally $n$-choosable and a cycle $C_n$ is ideally 3-choosable.
\end{corollary}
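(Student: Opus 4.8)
The plan is to obtain this as a direct consequence of Proposition~\ref{claim: C}. Given any $(\Delta+1)$-list assignment $L$ with colors in $\{1,\ldots,k\}$, any prescribed vertex $v_0$, and any chosen color $c\in L(v_0)$, I would first relabel the vertices so that $v_0$ becomes $v_1$, and then take \emph{any} linear order $v_1,v_2,\ldots,v_n$ of the vertices with its associated acyclic orientation (edge $v_iv_j$ directed from the smaller to the larger index).

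The key observation is that the hypothesis $|L(v_i)|\geq d^-_G(v_i)+1$ of Proposition~\ref{claim: C} holds automatically, and for \emph{every} such ordering. Indeed, under any acyclic orientation the in-degree of a vertex never exceeds its total degree, so $d^-_G(v_i)\leq \deg_G(v_i)\leq \Delta$ for all $i$; hence $d^-_G(v_i)+1\leq \Delta+1=|L(v_i)|$. Proposition~\ref{claim: C} then guarantees, for the chosen color $c\in L(v_1)=L(v_0)$, a list $k$-coloring extending this assignment, produced in $O(|V|+|E|)$ by the greedy procedure. Since $v_0$ and $c$ were arbitrary and the argument is valid for every $k\geq \Delta+1$, the graph $G$ is ideally $(\Delta+1)$-choosable, and the complexity bound for $(\Delta+1)$-LISTCOL follows from the same greedy construction.

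The two special cases are then immediate: a clique $K_n$ has maximum degree $\Delta=n-1$, so $\Delta+1=n$, giving ideal $n$-choosability; a cycle $C_n$ has $\Delta=2$, so $\Delta+1=3$, giving ideal $3$-choosability.

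The statement is a clean corollary, so I do not expect a genuine obstacle. The single point that deserves a moment's care is that the greedy bound must hold \emph{uniformly over all orderings that place the distinguished vertex $v_0$ first}, so that an arbitrary choice of $v_0$ is admissible; this is exactly what the inequality $d^-_G(v_i)\leq\Delta$ delivers, independently of how the remaining vertices are ordered.
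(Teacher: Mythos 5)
Your argument is correct and is essentially the paper's own proof: both reduce the statement to Proposition~\ref{claim: C} by placing the distinguished vertex first in an arbitrary ordering and noting that $d^-_G(v_i)\leq \deg_G(v_i)\leq \Delta$ makes the in-degree hypothesis automatic. You merely spell out the inequality that the paper leaves implicit.
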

\begin{proof}
Given $G$, a $(\Delta+1)$-list assignment and an arbitrary vertex $v_1$,  by taking any order $v_1, \ldots, v_n$ and the corresponding  orientation, the result follows from Proposition~\ref{claim: C}.
\end{proof}

Note that an even cycle $C_{2p}$ is 2-choosable~\cite{rubin} but not ideally 2-choosable. 

Consider any graph $G$ and compute the list of its blocks and the block-tree in $O(|V|+|E|)$~\cite{gondran}. Denoting by $D$ the maximum degree of the blocks, we  deduce from Corollary~\ref{claim: B} that blocks are $(D+1)$-ideally choosable and then we can use Proposition~\ref{prop: block} and Remark~\ref{rem:blocks} to derive the following:

\begin{proposition}
Consider a graph $G$ and let $D$ be the maximum degree of its blocks, then $G$ is $(D+1)$-choosable and $(D+1)$-LISTCOL can be solved in $O(|V|+|E|)$.
\end{proposition}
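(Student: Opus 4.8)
The plan is to deduce the statement directly from the block machinery already assembled. Since $[\ell,k]$- and $\ell$-choosability hold independently in each connected component, and since the block-tree decomposition has been shown computable in $O(|V|+|E|)$ (Remark~\ref{rem:blocks}), the whole problem reduces to establishing a choosability property of the individual blocks and then gluing the colorings along the block-tree via Proposition~\ref{prop: block}. The only quantity linking the blocks to the global list size is $D$, the maximum degree taken over all blocks: by definition every block $B$ has $\Delta(B)\leq D$.

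First I would show that \emph{every} block is ideally $(D+1)$-choosable. The natural tool is Corollary~\ref{claim: B}, which gives that a block $B$ is ideally $(\Delta(B)+1)$-choosable. The mild subtlety is that a block of degree $d_B<D$ is then only guaranteed to be ideally $(d_B+1)$-choosable, whereas I need ideal choosability for the uniform list size $D+1$. I would bypass this either by a one-line monotonicity remark (enlarging each list can only help: restrict an arbitrary $(D+1)$-list assignment to size-$(d_B+1)$ sublists while keeping the prescribed color of $v_0$, apply ideal $(d_B+1)$-choosability, and observe that the resulting coloring is legal for the original larger lists), or, more cleanly, by invoking Proposition~\ref{claim: C} directly: for any vertex order of $B$ and the induced acyclic orientation one has $d^-_B(v_i)\leq \deg_B(v_i)\leq D$, so lists of size $D+1\geq d^-_B(v_i)+1$ satisfy its hypothesis, yielding extension from any chosen vertex and color, i.e. ideal $(D+1)$-choosability.

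With all blocks ideally $(D+1)$-choosable, I would then invoke Proposition~\ref{prop: block}. Its hypothesis asks that in each component all blocks but one be ideally choosable and the remaining (root) block be choosable; since ideal $(D+1)$-choosability trivially implies plain $(D+1)$-choosability (pick any vertex, any color from its list, and extend), this hypothesis is met a fortiori. Hence $G$ is $(D+1)$-choosable. For the algorithmic claim, Corollary~\ref{claim: B} also provides that $(D+1)$-LISTCOL is solvable on each block in time linear in its size, so the total work over all blocks is $O(|V|+|E|)$; combined with the $O(|V|+|E|)$ block-tree computation, Remark~\ref{rem:blocks} packages this as $(D+1)$-LISTCOL being solvable in $O(\max(T,|V|+|E|))=O(|V|+|E|)$.

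I do not expect any genuine obstacle here: the result is an immediate corollary of the preceding lemmas. The only point requiring a sentence of care is the list-size uniformization discussed above, namely that bounding each block's degree by the global maximum $D$ still secures ideal $(D+1)$-choosability for blocks of smaller degree; handling it via Proposition~\ref{claim: C} rather than Corollary~\ref{claim: B} keeps the argument self-contained and avoids even the monotonicity remark.
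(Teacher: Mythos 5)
Your proof is correct and follows essentially the same route as the paper: deduce from Corollary~\ref{claim: B} (equivalently, Proposition~\ref{claim: C}) that every block is ideally $(D+1)$-choosable, then glue along the block-tree via Proposition~\ref{prop: block} and Remark~\ref{rem:blocks} for both the choosability and the $O(|V|+|E|)$ complexity claims. Your explicit handling of the list-size uniformization for blocks of degree below $D$ is a point the paper leaves implicit, but it is the same argument.
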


We derive the following for quasi-line perfect graphs:

\begin{proposition}\mbox{}\\
(i) Quasi-line perfect graphs are 4-choosable and 4-LISTCOL is solvable in $O(|V|+|E|)$ in this class.\\
(ii) 3-colorable quasi-line perfect graphs are $[3,4]$-choosable and $[3,4]$-LISTCOL is solvable in $O(|V|+|E|)$ in this class.
\end{proposition}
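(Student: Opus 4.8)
The plan is to reduce both statements to a block-by-block analysis, exactly in the spirit of Proposition~\ref{prop: block} and Remark~\ref{rem:blocks}. By definition the blocks of a quasi-line perfect graph are of four types: $K_4$, a $2$-connected bipartite graph, $K_{1,1,p}$ $(p\geq 1)$, or an odd cycle $C_{2p+1}$ $(p\geq 1)$. If each admissible block type is \emph{ideally} choosable for the relevant list size, then Proposition~\ref{prop: block} yields choosability of the whole graph (take any block as the root $B_1$, every other block being ideally choosable, hence choosable), and Remark~\ref{rem:blocks} turns the per-block linear-time list-coloring procedures into a global $O(|V|+|E|)$ algorithm. So the whole argument reduces to classifying the blocks.

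First I would collect the per-block ideal-choosability facts already at hand. By Corollary~\ref{claim: B}, $K_4$ is ideally $4$-choosable and a cycle (in particular an odd cycle) is ideally $3$-choosable; by Corollary~\ref{cor:acyclic}, $K_{1,1,p}$ is ideally $3$-choosable; and by Proposition~\ref{claim: A}, every bipartite graph is ideally $[3,4]$-choosable. For part~(i) I need all four types to be ideally $[4,4]$-choosable: this holds for $K_4$ directly, and for the other three by monotonicity in the list size, i.e. ideally $[3,4]$-choosability (resp. ideally $3$-choosability) implies ideally $[4,4]$-choosability. Indeed, given a $4$-list assignment with colors in $\{1,2,3,4\}$ and a prescribed color $c$ at a vertex $v_0$, shrink every list to a $3$-sublist keeping $c\in L(v_0)$, apply the stronger property, and read the resulting coloring as one for the original lists. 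Hence every block is ideally $[4,4]$-choosable and the block-wise list coloring ($4$-coloring on each block) is linear, so Proposition~\ref{prop: block} gives $[4,4]$-choosability and Remark~\ref{rem:blocks} the $O(|V|+|E|)$ algorithm, which is the content of~(i).

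For part~(ii) the only additional ingredient is the characterization of $3$-colorability. Since the chromatic number of a graph equals the maximum chromatic number over its blocks (two blocks share at most one cut vertex, so block colorings can be permuted to agree there and glued), and since the four block types have chromatic numbers $4$ (for $K_4$), $2$ (bipartite), $3$ ($K_{1,1,p}$) and $3$ (odd cycle), a quasi-line perfect graph is $3$-colorable if and only if none of its blocks is a $K_4$. For such a graph every block is a $2$-connected bipartite graph, a $K_{1,1,p}$, or an odd cycle, each of which is ideally $[3,4]$-choosable by the facts above. Proposition~\ref{prop: block} then gives $[3,4]$-choosability, and Remark~\ref{rem:blocks}, fed with the $O(|V|)$ and $O(|V|+|E|)$ block procedures of Proposition~\ref{claim: A}, Corollary~\ref{cor:acyclic} and Corollary~\ref{claim: B}, gives the $O(|V|+|E|)$ algorithm.

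The main obstacle — and the reason the palette stays at four colors — is the $2$-connected bipartite block. Cliques, $K_{1,1,p}$ and cycles are ideally choosable for any palette, but a bipartite block is controlled only through the small-palette Proposition~\ref{claim: A}: with lists of size $3$ and exactly four colors, all vertices of one side avoiding the fixed color share the unique remaining $3$-subset and can be colored uniformly, a collapse that genuinely needs $k=4$. I would therefore invoke Proposition~\ref{claim: A} in precisely this limited-palette form, so the choosability statements are understood for the palette $\{1,2,3,4\}$; everything else is the routine gluing supplied by Proposition~\ref{prop: block} together with the bookkeeping of Remark~\ref{rem:blocks}.
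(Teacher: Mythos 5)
Your proposal is essentially the paper's own proof: the paper simply cites Propositions~\ref{prop: block} and~\ref{claim: A}, Remark~\ref{rem:blocks} and Corollary~\ref{claim: B}, together with the observation that the $3$-colorable quasi-line perfect graphs are exactly those without a $K_4$ block; you assemble the same ingredients and only add the routine monotonicity step (ideal $[3,4]$- or ideal $3$-choosability of a block implies ideal $[4,4]$-choosability) and the block-wise computation of $\chi$. One point deserves emphasis for part~(i): as you yourself note, the $2$-connected bipartite blocks are controlled only through the four-color Proposition~\ref{claim: A}, so what you actually establish is $[4,4]$-choosability and a linear-time algorithm for $[4,4]$-LISTCOL, not $4$-choosability in the paper's sense of $[4,k]$-choosability for every $k\ge 4$. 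This restriction is not a defect of your write-up relative to the paper's: the paper's one-line proof relies on exactly the same four-color statement, and the unrestricted claim cannot hold, since a large complete bipartite graph (e.g.\ $K_{35,35}$ with the standard Erd\H{o}s--Rubin--Taylor lists drawn from seven colors) is itself a single $2$-connected bipartite block, hence quasi-line perfect, yet is not $4$-choosable. Your explicit caveat that the choosability statements are to be read with the palette fixed at $\{1,2,3,4\}$ is therefore the more careful reading, and under that reading both parts of your argument are correct and coincide with the paper's.
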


\begin{proof}
This follows immediately from Propositions~\ref{prop: block} and~\ref{claim: A}, Remark~\ref{rem:blocks}  and from Corollary~\ref{claim: B} noticing that 3-colorable quasi-line perfect graphs are exactly graphs whose blocks are isomorphic to a 2-connected bipartite graph, $K_{1,1,p}$ or an odd cycle $C_{2p+1}, (p\geq 2)$.
\end{proof}

We also mention that {\em{ block-cactus}} graphs have been defined as graphs whose blocks are either cliques or cycles of length at least 
four~\cite{block-cactus}. We derive the following:

\begin{proposition}\mbox{}\\
Let $G$ be a block-cactus graph, let $p$ be the maximum size of a clique in $G$.\\
(i) If no block is an odd cycle $C_{2q+1}$ and each connected component has at most one even cycle $C_{2q}$ as a block, then $G$ is $p$-choosable.\\
(ii) In all other cases, $G$ is $\max(p,3)$-choosable.
\end{proposition}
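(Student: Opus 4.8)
The plan is to reduce everything to block-level choosability via Proposition~\ref{prop: block}, whose hypothesis requires every block but one (per connected component) to be \emph{ideally} choosable, the exceptional root block needing only to be choosable. The constituents of a block-cactus are cliques of size at most $p$ and cycles of length at least four, and Corollary~\ref{claim: B} already records their ideal choosability: a clique $K_m$ is ideally $m$-choosable and a cycle $C_n$ is ideally $3$-choosable. Since ideal choosability is inherited when list sizes grow (given an enlarged list assignment and a prescribed color of $v_0$, restrict each list to a sublist of the required size containing that color and invoke the smaller case), a clique of size $m\le\ell$ is ideally $\ell$-choosable and every cycle is ideally $\ell$-choosable for all $\ell\ge 3$.

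For part (ii) I would set $\ell=\max(p,3)$ and observe that, because $\ell\ge p$ and $\ell\ge 3$, \emph{every} block is ideally $\ell$-choosable: clique blocks have size at most $p\le\ell$, and each cycle block, even or odd, is ideally $3$-choosable hence ideally $\ell$-choosable. With all blocks ideally $\ell$-choosable, Proposition~\ref{prop: block} (taking any block as root in each component) at once gives that $G$ is $\max(p,3)$-choosable. Note that this argument is in fact valid for every block-cactus graph; the phrase ``in all other cases'' merely signals that $\max(p,3)$ is the sharpest bound the method yields once the hypotheses of (i) fail.

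For part (i) the argument splits on $p$. When $p\ge 3$ one has $\max(p,3)=p$ and the conclusion follows verbatim from the computation just made. The genuinely new content is the case $p=2$ (the case $p\le 1$ being the trivial edgeless graph): here every clique block is an edge $K_2$, which is ideally $2$-choosable by Corollary~\ref{claim: B}, and, since no block is an odd cycle, every cycle block is an even cycle. Even cycles are $2$-choosable by~\cite{rubin} but, as noted just after Corollary~\ref{claim: B}, \emph{not} ideally $2$-choosable, so they cannot serve as interior blocks. This is exactly where the hypothesis ``at most one even cycle per component'' enters: in each connected component I would designate its unique even cycle (if present) as the root block $B_1$, for which plain $2$-choosability suffices, while all remaining blocks are edges and hence ideally $2$-choosable; a component with no even cycle is a tree, for which any edge may be taken as root. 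Proposition~\ref{prop: block} then yields $2$-choosability of each component, so $G$ is $2=p$-choosable.

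The only real obstacle is precisely the failure of even cycles to be ideally $2$-choosable, which is what dictates both structural conditions in (i): an odd cycle block is not even $2$-choosable and so cannot be a root at palette size $2$, while a second even cycle in a component would be forced into the interior role it cannot play. Both obstructions dissolve at palette size $3$, where all cycles become ideally $3$-choosable, which is what accounts for the jump to $\max(p,3)$ in part (ii).
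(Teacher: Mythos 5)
Your proof is correct and follows essentially the same route as the paper's: ideal choosability of cliques and cycles from Corollary~\ref{claim: B}, the plain $2$-choosability of even cycles from~\cite{rubin}, and Proposition~\ref{prop: block} with the unique even cycle taken as the root block in the $p=2$ case of (i). You merely spell out the case analysis (including the degenerate $p\le 1$ case, which the paper handles with its remark that a component containing an even cycle forces $p\ge 2$) in more detail than the paper does.
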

\begin{proof}
Based on Corollary~\ref{claim: B}, a clique $K_p$ is ideally $p$-choosable and a cycle $C_q$	is ideally 3-choosable. If $q$ is even, it is also 2-choosable~\cite{rubin}.

We then apply Prop~\ref{prop: block}. Note that in case (i), if there is at least one connected component with  one even cycle $C_{2q}$ as a block, then $p\geq 2$.
\end{proof}

\section{Hardness of $[\ell,k]$-choosability}\label{sec:hardness} 

To our knowledge hardness of choosability has been mainly studied in~\cite{rubin,gutner} and~\cite{gutner-tarsi}. These results are derived without any consideration of the size of the palette. Moreover they essentially deal with   bipartite and/or planar graphs. 

In this section we derive some hardness results for $[\ell,k]$-choosability for $\ell\leq 3$, small palette size~$k$ and restricted classes of graphs. It improves known hardness results.
A popular research direction in choosability is to find classes of choosable graphs. For these classes the related choosability decision problem is trivial. In particular, one objective of this section is to find  "minimum"  extensions of these classes where deciding whether graphs are choosable becomes hard.

Note that, in general, $[\ell,k]$-CH does not a priori belong to NP $\cup$ co-NP but is in the class $\Pi_2^p$ (see~\cite{papadimitriou} for definitions related to complexity theory). Indeed, to check whether a fixed graph is choosable (with given list sizes), it is enough to check for any list system satisfying the size constraints  whether the related LISTCOL instance is satisfiable, a problem in NP.

In very special cases we may have  choosability problems which are in NP (see Proposition~\ref{prop: 2,3-choos 3 col}) or in co-NP (see Proposition~\ref{prop: co-np}).\\

\subsection{2-choosability with bounded palette}\label{subsec:2 choos}

2-choosability has been completely characterized in~\cite{rubin}. The {\em core} of a graph is the subgraph obtained by repeatedly  removing a vertex of degree~1 together with its incident edge until the graph contains only isolated vertices and vertices of degree at least~2. The following result is the basis of the characterization in~\cite{rubin}:
\begin{claim}\label{claim:core}
	For any graph $G=(V,E)$ and any function $f: V\longrightarrow {\mathbb N}\setminus \{0,1\}$ assigning to every vertex a value at least~2, $G$ is  $[f,k]$-choosable if and only if its core is $[f,k]$-choosable.
\end{claim}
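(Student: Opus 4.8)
The plan is to prove both directions by showing that a vertex of degree~1 can always be safely removed and re-colored, so that the core captures exactly the obstruction to choosability. The key observation is that if $v$ is a vertex of degree~1 with unique neighbor $u$, and $f(v)\geq 2$, then in any list assignment $L$ with $|L(v)|=f(v)\geq 2$, once $u$ receives a color $c(u)$, the list $L(v)$ still contains at least one color different from $c(u)$ (since $|L(v)|\geq 2$), so the coloring always extends to $v$. This is the engine of the whole argument and it is where the hypothesis $f(v)\geq 2$ is essential.

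First I would set up notation: let $G'$ be the graph obtained from $G$ by deleting a single degree-$1$ vertex $v_0$ together with its incident edge, and let $f'$ be the restriction of $f$ to $V(G')$. I would prove the single-deletion equivalence, namely that $G$ is $[f,k]$-choosable if and only if $G'$ is $[f',k]$-choosable, and then iterate. For the forward direction, if $G$ is $[f,k]$-choosable, I would take an arbitrary $f'$-list assignment $L'$ on $G'$, extend it to an $f$-list assignment $L$ on $G$ by choosing any list of size $f(v_0)$ for $v_0$ with colors in $\{1,\dots,k\}$ (possible since $k\geq f$), obtain a list $k$-coloring of $(G,L)$ by choosability, and restrict it to $G'$; this witnesses feasibility of $(G',L')$. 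For the reverse direction, given an arbitrary $f$-list assignment $L$ on $G$, I would restrict to $G'$, color $(G',L')$ using the choosability of $G'$, and then use the degree-$1$ extension observation above to pick a color for $v_0$ from $L(v_0)$ avoiding the single color of its neighbor.

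Having established the single-step equivalence, I would iterate the deletion. Since the core is obtained by repeatedly removing degree-$1$ vertices until none remain, and each removal preserves $[f,k]$-choosability of the restricted instance in both directions, a finite induction on the number of removed vertices yields that $G$ is $[f,k]$-choosable if and only if its core is $[f,k]$-choosable. I would note that the isolated vertices possibly created or remaining in the core are harmless: any vertex with no constraints can be colored from its (nonempty) list, so they do not affect choosability.

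The main obstacle I anticipate is a bookkeeping subtlety rather than a conceptual one: as vertices of degree~1 are peeled off, a vertex that originally had higher degree may become degree~1 at a later stage, and one must be careful that the function values $f$ restricted along the way remain $\geq 2$ throughout, which holds automatically because $f$ is never modified and takes values in ${\mathbb N}\setminus\{0,1\}$ by hypothesis. A second point requiring care is that the order in which degree-$1$ vertices are removed could differ, but since the core is well-defined (independent of the removal order, as is standard), the equivalence holds regardless; I would either invoke this well-definedness or phrase the induction so that it applies to any valid removal sequence.
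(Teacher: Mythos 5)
Your proof is correct and follows essentially the same route as the paper: the paper's one-line argument greedily recolors the removed degree-$1$ vertices in reverse order (your reverse direction), and the forward direction is the hereditary property of $[f,k]$-choosability under taking induced subgraphs, which is exactly your list-extension argument. Your version just makes the single-deletion induction and the order-independence of the core explicit.
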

\begin{proof}
	Any list coloring of the core can be greedily completed to the whole graph by taking vertices in the reverse order of their removal when computing the core.
\end{proof}

  For positive integers $a,b,c,d$ we denote by $\theta_{a,b,c}$ (resp. $\theta_{a,b,c,d}$)  the graph consisting of three (resp. four) internal vertex disjoint elementary paths of  length (number of edges) $a,b,c$ (resp.  $a,b,c,d$) between two fixed vertices. For instance, $\theta_{2,2,2}=K_{2,3}$, $\theta_{2,2,2,2}=K_{2,4}$ and Figure~\ref{fig:theta-2224} represents $\theta_{2,2,2,4}$.  For positive integers $p,q$ and non negative integer $r$ we denote by $\Gamma_{p,q,r}$  
  the graph consisting of two disjoint elementary cycles $C_p$ and $C_q$ linked by an elementary path of length $r$ with all internal vertices outside the two cycles. $\Gamma_{p,q,0}$ consists of two elementary cycles $C_p$ and $C_q$ sharing a single vertex.  The following theorem is proved in~\cite{rubin}:

\begin{theorem}\label{prop: 2-4-choos}~\cite{rubin}
A connected graph is $2$-choosable if and only if its core belongs to $T=\{K_1, C_{2m+2}, \theta_{2,2,2m}, m\geq 1\}$.
\end{theorem}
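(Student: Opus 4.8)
The plan is to prove Theorem~\ref{prop: 2-4-choos} by combining a reduction to 2-connected graphs (via the core) with a separate analysis of each candidate in $T$ and a structural argument ruling out everything else. By Claim~\ref{claim:core}, a connected graph $G$ is $2$-choosable if and only if its core is, so I may restrict attention to graphs that equal their own core, i.e.\ graphs in which every vertex is isolated or has degree at least~$2$. Since $G$ is connected and we may assume it has an edge, this means working with a connected graph of minimum degree at least~$2$. The strategy splits into the two implications of the equivalence.

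For the ``if'' direction I would verify directly that each member of $T$ is $2$-choosable. The graph $K_1$ is trivially $1$-choosable, hence $2$-choosable. For the even cycle $C_{2m+2}$ I would invoke the known fact (already cited in the excerpt as \cite{rubin}, and used in Corollary~\ref{claim: B}'s surrounding remark) that even cycles are $2$-choosable; concretely, given $2$-lists around the cycle, if some two adjacent vertices admit distinct forced colors one propagates greedily, and the parity of the cycle guarantees the propagation closes up consistently. For $\theta_{2,2,2m}$ I would argue by a case analysis on the lists at the two degree-$3$ hub vertices $u,v$: if their lists share at most one common color one can fix colors on $u,v$ and extend along each of the three internally disjoint paths (each path behaves like a path whose endpoints are precolored, which is list-colorable when the interior lists have size~$2$), and the even-length bookkeeping on the length-$2m$ path is what makes all three extensions simultaneously possible. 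The main content here is checking that the three paths can be colored \emph{simultaneously} after committing to colors at $u$ and $v$.

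For the ``only if'' direction I would prove the contrapositive: if the core $H$ of $G$ is connected, has minimum degree at least~$2$, and is \emph{not} in $T$, then $H$ is not $2$-choosable. The key structural observation is that a connected graph with minimum degree at least~$2$ that is none of $K_1,C_{2m+2},\theta_{2,2,2m}$ must contain, as a subgraph realized by its core structure, one of a small list of ``bad'' obstructions: an odd cycle, two cycles sharing structure (a $\Gamma_{p,q,r}$), or a theta graph $\theta_{a,b,c}$ outside the family $\theta_{2,2,2m}$ (for instance $\theta_{2,2,2m+1}$ of the wrong parity, or $\theta_{a,b,c}$ with $a\ge 3$). For each such obstruction I would exhibit an explicit bad $2$-list assignment with no list coloring. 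Odd cycles fail because $C_{2p+1}$ with all lists equal to $\{1,2\}$ has no proper $2$-coloring; the theta and $\Gamma$ obstructions require tailored lists at the branch vertices chosen so that every propagation around every path reaches a conflict, exploiting the parity mismatch. I would then use monotonicity of choosability under taking subgraphs/induced subgraphs (the hereditary property noted in Section~\ref{sec:notations}, since $[f,k]$-choosability is hereditary) to lift the failure from the obstruction to all of $H$, and hence to $G$ via Claim~\ref{claim:core}.

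The main obstacle will be the ``only if'' direction, specifically proving the structural dichotomy that a minimum-degree-$\ge2$ connected graph not in $T$ necessarily contains one of the finitely many bad obstructions, and then, for each obstruction type, constructing the infeasible $2$-list assignment and checking exhaustively that no list coloring exists. The parity arguments for $\theta$-graphs are delicate: one must track, along each path of given length, how a choice at one endpoint constrains the admissible colors at the other, and show that the three or four simultaneous constraints at the hub vertices are jointly unsatisfiable precisely when the graph leaves the family $\theta_{2,2,2m}$. Since the full case analysis is the substance of the original proof in \cite{rubin}, I would lean on that reference for the complete verification of the obstruction list and present the explicit bad assignments only for the representative cases (odd cycle, $\theta$ of wrong parity, and $\Gamma_{p,q,r}$).
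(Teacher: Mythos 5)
First, note that the paper does not prove Theorem~\ref{prop: 2-4-choos}: it is quoted from~\cite{rubin}, and what the paper reproduces is the skeleton of that proof because it is reused for Theorem~\ref{prop: 2-3-choos}. That skeleton has three ingredients: the core reduction (Claim~\ref{claim:core}), the classification of cores outside $T$ by a finite list of obstruction types occurring as partial induced subgraphs (Lemma~\ref{lem:rubin1}), and the vertex-merging reduction (Lemma~\ref{lem:rubin2}), which collapses each infinite parametrized obstruction family to a single small graph for which one explicit infeasible $2$-list assignment suffices. Your outline reproduces the first two but omits the third entirely, so your plan for the ``only if'' direction would require exhibiting infeasible $2$-list assignments uniformly over all of $\Gamma_{2p,2q,r}$ and all bad theta graphs, which you do not do. More seriously, your proposed obstruction list is wrong: you list odd cycles, $\Gamma_{p,q,r}$, and three-path thetas $\theta_{a,b,c}$ outside the family, but this misses the four-path thetas $\theta_{2,2,2,2m}$ of type (4) in Lemma~\ref{lem:rubin1}. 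The graph $K_{2,4}=\theta_{2,2,2,2}$ is a concrete counterexample to your dichotomy: it is connected, has minimum degree~$2$, is not in $T$, and is not $2$-choosable, yet it contains no odd cycle, no $\Gamma_{p,q,r}$, and no three-path theta outside $\{\theta_{2,2,2m}\}$. (Also, $\theta_{2,2,2m+1}$ is not a separate case, since two paths of lengths $2$ and $2m+1$ between the hubs form an odd cycle; the genuine third type is $\theta_{a,b,c}$ with $a,b\neq 2$.)

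There is a second concrete problem in the ``if'' direction. For $\theta_{2,2,2m}$ the step ``fix colors on $u,v$ and extend along each of the three internally disjoint paths'' is false as stated: each length-$2$ path consists of a single internal vertex adjacent to both hubs, and if its $2$-list equals $\{c(u),c(v)\}$ no extension exists. The $2$-choosability of $\theta_{2,2,2m}$ (nontrivial already for $K_{2,3}$) requires choosing the hub colors through a case analysis of how the lists at $u$, $v$ and the two middle vertices interact, combined with the parity of the long path; you flag this as ``the main content'' but do not supply it. Since both the structural classification and the obstruction verification are ultimately deferred to~\cite{rubin} --- which is consistent with the paper's own treatment of the statement as a cited result --- the proposal is a description of where the proof lives rather than a proof, and the two specific defects above (the missing $\theta_{2,2,2,2m}$ type and the unjustified extension step) would have to be repaired before it could be expanded into one.
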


The proof  is based on the two following lemmas which we will use as well for proving Theorem~\ref{prop: 2-3-choos}:

\begin{lemma}\label{lem:rubin1}~\cite{rubin}
	Let $H$ be the core of a connected graph $G$ such that $H\notin T$. Then $H$ contains a partial induced subgraph belonging to one of the following types:\\
	(1) an odd cycle $C_{2p+1}$, $p\geq 1$;\\
	(2) $\Gamma_{2p,2q,r}$, $p,q\geq 2, r\geq 0$;\\
	(3) $\theta_{a,b,c}$, $a,b \neq 2$, $c\geq 1$;\\
	(4) $\theta_{2,2,2,2m}, m\geq 1$.
\end{lemma}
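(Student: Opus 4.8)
The plan is to argue by a case analysis on the structure of the core $H$. Being the core of a connected graph, $H$ is connected, and either $H=K_1$ (excluded since $K_1\in T$) or $H$ has minimum degree at least~2; I assume the latter. The organizing idea is that each target type is the witness for one structural situation: an odd cycle detects non-bipartiteness, $\Gamma_{2p,2q,r}$ detects a cut vertex, and the two theta-families cover the 2-connected bipartite case. Throughout I would argue by contradiction, assuming $H$ contains no partial induced subgraph of any of types (1)--(4), and aim to conclude $H\in T$. First I dispose of the non-bipartite case: if $H$ is not bipartite it contains an odd cycle, and a shortest odd cycle is chordless (a chord would split it into a strictly shorter odd cycle), hence is an induced $C_{2p+1}$, giving type~(1). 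So from now on $H$ is bipartite and every cycle has even length $\geq 4$.

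Next I treat the bipartite non-2-connected case using the block decomposition. Since $H$ has minimum degree at least~2, every leaf block of the block-tree is 2-connected: a bridge used as a leaf block would force an endpoint of degree~1. Hence each leaf block contains an even cycle. Choosing even cycles $C$ and $C'$ in two distinct leaf blocks, they share at most one (cut) vertex; joining them by a shortest path of $H$, whose interior is then disjoint from $C\cup C'$, yields a subdivision $\Gamma_{|C|,|C'|,r}$ with $r\geq 0$. As all cycles are even of length $\geq 4$, this is exactly type~(2) with $p,q\geq 2$; taking the induced subgraph on these vertices and deleting the superfluous edges gives the required partial induced subgraph. So we may further assume $H$ is 2-connected.

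It remains to treat $H$ 2-connected and bipartite. If $H$ is a cycle it is even, hence in $T$, contrary to hypothesis; otherwise $H$ contains a theta subgraph $\theta_{a,b,c}$ whose three path-lengths share a common parity. If that parity is odd then $a,b,c$ are all $\neq 2$ and we already have type~(3); if it is even, then either two of the lengths are $\geq 4$, again giving type~(3), or at least two equal~2 and the theta is of the form $\theta_{2,2,2m}$. Thus, under the standing assumption that $H$ avoids types (1)--(4), every theta of $H$ must be a $\theta_{2,2,2m}$. I would then fix one such theta $\Theta$ on branch vertices $x,y$ and show, using 2-connectivity, that nothing can be attached to it: growing $\Theta$ by one ear (a path with both endpoints on $\Theta$ and interior outside it) and examining, according to the parity-constrained length of the ear and the three possible pairs of paths of $\Theta$ on which its endpoints land, I would verify that each configuration creates either a fourth internally disjoint $x$--$y$ path (producing $\theta_{2,2,2,2m}$ of type~(4), or a third length-$2$ path that again yields type~(4)) or a theta with two paths of length $\geq 4$ or of odd parity (type~(3)). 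Since no such subgraph exists, no ear can be added, forcing $H=\Theta=\theta_{2,2,2m}\in T$, the desired contradiction.

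I expect this last step to be the main obstacle. The ear-by-ear argument in the 2-connected bipartite case requires a careful and somewhat lengthy enumeration of how the endpoints of a new ear can attach to the three paths of $\theta_{2,2,2m}$, using the bipartite parity restrictions to control the possible lengths and invoking Menger's theorem / 2-connectivity to guarantee the auxiliary paths that realize the forbidden theta-configurations. By contrast, the reductions to the non-bipartite and non-2-connected cases are short and self-contained, and the parity dichotomy on a single theta is immediate.
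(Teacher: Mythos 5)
First, note that the paper does not actually prove this lemma: it is quoted from Erd\"os, Rubin and Taylor \cite{rubin}, so there is no in-paper proof to compare against. Your overall architecture (an induced odd cycle if $H$ is non-bipartite; a $\Gamma_{2p,2q,r}$ from two leaf blocks if $H$ is bipartite but not $2$-connected; a theta/ear analysis in the $2$-connected bipartite case) is the natural one, and the first two reductions, as well as the parity dichotomy showing that every theta must be a $\theta_{2,2,2m}$, are correct as written.

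The gap is in the final, admittedly sketched, step, and it is not merely a matter of unexecuted details: the claimed outcome of the ear analysis is false. You assert that attaching an ear to $\Theta=\theta_{2,2,2m}$ always produces a configuration of type (3) or (4). Take $\Theta$ with branch vertices $x,y$, short paths through $a_1,a_2$ and long path $x,b_1,\dots,b_{2m-1},y$, and attach an ear of length $2$ (internal vertex $c$) joining $b_i$ to $b_{i+2}$ --- a legal ear of the correct parity with both endpoints internal to the long path. The resulting graph $H'$ is $2$-connected and bipartite, its only degree-$3$ vertices are $x,y,b_i,b_{i+2}$, no vertex has degree $4$, and each of the two pairs $\{x,y\}$ and $\{b_i,b_{i+2}\}$ carries exactly two internally disjoint paths of length $2$ plus one long even path; hence $H'$ contains no subgraph of type (1), (3) or (4). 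What it does contain is a type-(2) configuration: the $4$-cycle on $\{x,a_1,y,a_2\}$ and the $4$-cycle on $\{b_i,b_{i+1},b_{i+2},c\}$ are disjoint and joined by the path $b_i,b_{i-1},\dots,b_1,x$, giving $\Gamma_{4,4,i}$ (a length-$2$ ear from $x$ to $b_2$ similarly yields only $\Gamma_{4,4,0}$). Your framing of type (2) as the witness for a cut vertex is the root of the error: $\Gamma_{2p,2q,r}$ configurations are indispensable in the $2$-connected case as well, and the case enumeration at the heart of your argument must be redone with type (2) admitted as a possible outcome. Since that enumeration is precisely the step you flag as the main obstacle and leave unexecuted, the proof is incomplete, and as described it would fail on the graph $H'$ above.
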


\begin{lemma}\label{lem:rubin2}
	Let $G$ be a bipartite graph and $G'$ be obtained from $G$ by removing a vertex $v$, merging in a single vertex $v'$ all neighbors of $v$ and merging each family of parallel edges into a single edge. We have for any $k\geq 2$: if $G$ is $[2,k]$-choosable, then $G'$ is $[2,k]$-choosable.
\end{lemma}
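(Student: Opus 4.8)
The plan is to prove the contrapositive: assuming $G'$ is not $[2,k]$-choosable, I would exhibit a bad $2$-list assignment for $G$, which shows that $G$ is not $[2,k]$-choosable either. So let $L'$ be a $2$-list assignment on $G'$ with colors in $\{1,\ldots,k\}$ admitting no list $k$-coloring, and write $L'(v')=\{a,b\}$ with $a\neq b$ for the merged vertex $v'$.

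Next I would lift $L'$ to a list assignment $L$ on $G$ as follows. Every vertex $x$ of $G$ lying outside $\{v\}\cup\Gamma(v)$ corresponds to a vertex of $G'$ distinct from $v'$, so I give it its old list $L(x)=L'(x)$. To the removed vertex $v$ and to every neighbor $u\in\Gamma(v)$ I would assign the common list $\{a,b\}$. This is a valid $2$-list assignment with colors in $\{1,\ldots,k\}$, using $k\geq 2$.

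The heart of the argument is a correspondence between proper colorings. Because $G$ is bipartite, $\Gamma(v)$ is a stable set, so in any list $k$-coloring $c$ of $(G,L)$ the vertex $v$ receives one of $a,b$ and, since each $u\in\Gamma(v)$ also has list $\{a,b\}$ and is adjacent to $v$, every neighbor of $v$ is forced to the same remaining color; call it $c^{*}\in\{a,b\}$. I would then define a coloring $c'$ of $G'$ by $c'(x)=c(x)$ off $v'$ and $c'(v')=c^{*}$, and check that it is a proper list $k$-coloring of $(G',L')$. The only new constraints come from edges incident to $v'$, which are exactly the edges of $G$ joining some $u\in\Gamma(v)$ to a vertex $w\notin\{v\}\cup\Gamma(v)$; there $c'(w)=c(w)\neq c(u)=c^{*}=c'(v')$, and collapsing parallel edges changes nothing since all neighbors of $v$ share the color $c^{*}$. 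Hence a list $k$-coloring of $(G,L)$ would yield one of $(G',L')$, contradicting the choice of $L'$; so $(G,L)$ has none and $G$ is not $[2,k]$-choosable.

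I expect the main obstacle to be verifying this correspondence cleanly, in particular confirming that assigning the two-element list $\{a,b\}$ to $v$ and to all of $\Gamma(v)$ forces the neighbors of $v$ to a single common color and thereby lets them play the role of the merged vertex $v'$. The bipartiteness hypothesis enters exactly here: it guarantees that $\Gamma(v)$ is stable, so such a common forced color exists (and $v'$ carries no self-loop); without it the forcing gadget would be inconsistent.
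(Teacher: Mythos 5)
Your proof is correct and follows essentially the same route as the paper's: both assign the list $L'(v')$ to $v$ and all of its neighbors, observe that bipartiteness forces all neighbors of $v$ to a single common color in any list coloring, and thereby transfer an infeasible assignment for $G'$ to one for $G$. Your write-up is in fact somewhat more detailed in verifying that the induced coloring of $G'$ is proper, but the argument is the same.
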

\begin{proof}

The lemma is proved in~\cite{rubin} for $k\geq 4$ but the same argument is valid for $k=2,3$.  Suppose $G'$ is not $[2,k]$-choosable, $k\geq 3$ and show that $G$ is not $[2,k]$-choosable. Consider an infeasible 2-list assignment with $k$ colors 	for $G'$ and consider the following 2-list assignment with $k$ colors 	for $G$: $v$ and all its neighbors are assigned the list $L(v')$ and all other vertices have the same list as in $G'$. Then, in a list coloring of $G$ all neighbors of $v$ have necessarily the same color and thus, a list coloring for $G$ would induce a list coloring for $G'$.
\end{proof}

Using Lemmas~\ref{lem:rubin1} and~\ref{lem:rubin2}, the end of the proof of Theorem~\ref{prop: 2-4-choos} reduces the list of graphs of the four types in Lemma~\ref{lem:rubin1} to four different graphs and gives obstructions (infeasible 2-list assignments) for these graphs to show that they are not 2-choosable. 
 As noticed in~\cite{kral}, since all these obstructions involve at most four colors, it turns out that:
 \begin{proposition}\label{prop: 2-4k-choos}~\cite{kral}
A graph $G$ is $2$-choosable if and only if $\exists k\geq 4$ such that $G$ is $[2,k]$-choosable.
\end{proposition}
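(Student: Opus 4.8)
The plan is to prove the two implications of Proposition~\ref{prop: 2-4k-choos} separately, exploiting the fact that $[2,k]$-choosability is monotone in $k$ by an observation already made in the excerpt: $[f,k]$-choosability implies $[f,k']$-choosability for every $k'$ with $\max(f(v))\leq k'\leq k$. This monotonicity does the easy direction for free. If $G$ is $2$-choosable, then by definition it is $[2,k]$-choosable for every $k\geq 2$, and in particular there exists (indeed every) $k\geq 4$ witnessing the right-hand side. So the forward implication is immediate and the content lies entirely in the converse.

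For the converse I would argue the contrapositive: if $G$ is \emph{not} $2$-choosable, I want to produce an infeasible $2$-list assignment that uses at most four colors, so that $G$ fails to be $[2,k]$-choosable for \emph{every} $k\geq 4$, contradicting the existence of any such $k$ witnessing $[2,k]$-choosability. First I would reduce to the core: by Claim~\ref{claim:core} with $f\equiv 2$, $G$ is $2$-choosable if and only if its core is, and any infeasible assignment on the core completes greedily to $G$ without introducing new colors (the removed degree-$1$ vertices just need a second color outside their unique neighbor's color, which is always available from two colors already present). So it suffices to treat $H$, the core of $G$. Since $G$ is not $2$-choosable, $H\notin T$, and Lemma~\ref{lem:rubin1} hands me a partial induced subgraph $H'$ of $H$ of one of the four listed types.

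The key step is then to exhibit, for each of the four obstruction types, an explicit infeasible $2$-list assignment using at most four colors; this is precisely the finite case analysis that the excerpt attributes to the end of Rubin's proof and to the remark in~\cite{kral}. For the odd cycle $C_{2p+1}$ the standard all-lists-equal-to-$\{1,2\}$ assignment is infeasible and uses two colors. For types (2), (3) and (4) one uses the reduction of Lemma~\ref{lem:rubin2} to collapse each type to a single small graph (the four distinct graphs mentioned in the excerpt) and verifies infeasibility there with at most four colors; Lemma~\ref{lem:rubin2} is the right tool because it preserves $[2,k]$-choosability for $k\geq 3$, so an obstruction on the reduced graph pulls back to an obstruction on $H'$ with the same palette. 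The main obstacle is marshalling these four explicit four-color obstructions correctly rather than any deep idea: one must check that the infeasible assignments genuinely use no more than four colors and that passing from $H'$ back up to $H$ and then to $G$ (via partial-induced-subgraph heredity and the core completion) never forces a fifth color. Once that bookkeeping is done, the at-most-four-color obstruction shows $G$ is not $[2,4]$-choosable, hence by monotonicity not $[2,k]$-choosable for any $k\geq 4$, completing the contrapositive.
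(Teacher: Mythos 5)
Your proposal is correct and follows essentially the same route as the paper: the forward direction is immediate from the definition, and the converse rests on the observation (attributed to Kr\'al and Sgall) that the obstructions produced at the end of Rubin's proof of Theorem~\ref{prop: 2-4-choos} --- after reducing via the core, Lemma~\ref{lem:rubin1} and Lemma~\ref{lem:rubin2} --- all use at most four colors, so a non-$2$-choosable graph already fails to be $[2,4]$-choosable and hence, by monotonicity of the palette size, fails to be $[2,k]$-choosable for every $k\geq 4$. The paper states this in a single sentence rather than spelling out the bookkeeping, but the argument is the same.
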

 
 \begin{remark}\label{rem:kk+1}
 This situation \-- where there is a threshold value $k_0$ such that a graph $G$ is $\ell$-choosable if and only if it is $[\ell,k]$-choosable for a fixed $k\geq k_0$ \-- is specific to 2-choosability. Indeed, in~\cite{kral} it is shown that, for any $\ell \geq 3$ and arbitrarily large $k$, there are graphs which are $[\ell,k]$-choosable but not $[\ell,k+1]$-choosable. Moreover, in~\cite{old version} we have shown that this is even the case for bipartite graphs. \end{remark}

$[2, 2]$-choosability corresponds to 2-colorability and to settle the case $\ell = 2$ we need to characterize $[2,3]$-choosable graphs. 

\begin{remark}\label{rem:K2m}
	$K_{2,m}$, $m\geq 2$ is $[2,3]$-choosable but 2-choosable only if $m\leq 3$. 
\end{remark}

Denote indeed by $B$ and $W$ the two parts of $K_{2,m}$ with $|B|=2$ and $|W|=m$ and take any 2-list assignment with palette $\{1,2,3\}$. Lists in $B$ have at least one common color $c$ and coloring $B$ with this color allows to greedily color $W$. Note however that, if the palette is $\{1,2,3,4\}$ and lists in $B$ are $\{1,2\}$ and $\{3,4\}$, then $B$ can be list colored with four different sets of two colors and if $|W|\geq 4$ it suffices to have four vertices in $W$ with respective lists $\{1,3\}$, $\{1,4\}$, $\{2,3\}$ and $\{2,4\}$ to make the graph not list colorable, which is as well a direct consequence of Theorem~\ref{prop: 2-4-choos}.

This example constitutes the key difference between $[2,3]$-choosability and 2-choosability as illustrated by the following result: 

\begin{theorem}\label{prop: 2-3-choos}
A connected graph is $[2,3]$-choosable if and only if its core belongs to $\widetilde T=\{K_1, C_{2m+2}, \theta_{2,2,2m}, K_{2,m+3}, m\geq 1\}$
\end{theorem}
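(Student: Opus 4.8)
The plan is to characterize $[2,3]$-choosable graphs by mirroring the structure of the proof of Theorem~\ref{prop: 2-4-choos}, adapting it to the palette restriction $k=3$. The set $\widetilde T$ differs from $T$ precisely by the addition of the graphs $K_{2,m+3}$; by Remark~\ref{rem:K2m} these are $[2,3]$-choosable but fail to be $2$-choosable once they are large enough, so they are exactly the extra cores that survive when the palette is capped at three colors. I would therefore prove the two directions separately: (sufficiency) every connected graph whose core lies in $\widetilde T$ is $[2,3]$-choosable, and (necessity) every connected $[2,3]$-choosable graph has its core in $\widetilde T$.

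For the sufficiency direction, by Claim~\ref{claim:core} it suffices to show that each member of $\widetilde T$ is itself $[2,3]$-choosable, since a graph is $[2,3]$-choosable iff its core is. The cases $K_1$, $C_{2m+2}$ and $\theta_{2,2,2m}$ are already $2$-choosable by Theorem~\ref{prop: 2-4-choos}, and $2$-choosability implies $[2,3]$-choosability because a $[2,k]$-choosable graph is $[2,k']$-choosable for $k'\le k$ (the hereditary-in-$k$ property noted in Section~\ref{sec:notations}). The only genuinely new work is $K_{2,m+3}$, which is handled exactly by the argument already sketched after Remark~\ref{rem:K2m}: with a palette of three colors the two vertices of $B$ share a common color $c$, coloring both of them $c$ leaves each vertex of $W$ with at least one admissible color, and the coloring extends greedily.

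For the necessity direction I would argue the contrapositive: if the core $H$ of a connected graph is not in $\widetilde T$, then $H$ (and hence $G$, by Claim~\ref{claim:core}) is not $[2,3]$-choosable. Here I would lean on Lemma~\ref{lem:rubin1}: since $\widetilde T \supset T$ fails to contain $H$, in particular $H\notin T$, so $H$ contains a partial induced subgraph of one of the four types (1)--(4). Because $[2,3]$-choosability is hereditary (it is preserved under partial induced subgraphs, as $[f,k]$-choosability is hereditary), it is enough to exhibit an infeasible $2$-list assignment \emph{using only three colors} for each of these obstruction types. The crucial point is that the obstructions supplied by Rubin's proof for Theorem~\ref{prop: 2-4-choos} use up to \emph{four} colors, which is why $K_{2,m}$ with large $m$ escapes as a new $[2,3]$-choosable core; so I must construct $3$-color obstructions instead, and I expect this to require Lemma~\ref{lem:rubin2} (valid for $k=3$ by the extended proof given above) to reduce each type to a small canonical graph on which a concrete three-color infeasible assignment can be displayed. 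Type (4), $\theta_{2,2,2,2m}$, is the delicate one: large $K_{2,m}$ is a subgraph of such $\theta$-graphs yet is $[2,3]$-choosable, so I must verify that the genuine $\theta_{2,2,2,2m}$ obstruction can still be realized with three colors, and the classification must carefully separate these from the $K_{2,m+3}$ cores that are allowed to remain.

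The main obstacle is exactly this three-color bookkeeping in the necessity direction: I must re-examine each of the four obstruction families of Lemma~\ref{lem:rubin1} and confirm that an infeasible assignment can be found within the palette $\{1,2,3\}$ (rather than $\{1,2,3,4\}$), while simultaneously showing that the \emph{only} cores for which no such three-color obstruction exists are those in $\widetilde T$ --- i.e.\ that the family $K_{2,m+3}$ captures precisely the graphs that slip through when the fourth color is forbidden. I would structure this as a case analysis keyed to the path-length parameters in each $\theta$- and $\Gamma$-type, treating $K_{2,m}=\theta_{2,2,\dots,2}$ and its relatives with special care.
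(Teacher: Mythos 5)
Your overall architecture matches the paper's proof: sufficiency via Claim~\ref{claim:core} together with the common-color argument for $K_{2,m}$ from Remark~\ref{rem:K2m}, and necessity via Lemma~\ref{lem:rubin1}, the $k=3$ version of Lemma~\ref{lem:rubin2}, and explicit three-color obstructions on a few canonical graphs. The sufficiency half and the treatment of types (1), (2) and (3) go through exactly as you describe.

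There is, however, one concrete missing step, and it sits precisely at the point you flag as ``delicate'' without resolving it. Lemma~\ref{lem:rubin1} only guarantees a type-(4) partial induced subgraph $\theta_{2,2,2,2m}$ with $m\geq 1$, and for $m=1$ this is $\theta_{2,2,2,2}=K_{2,4}$, which \emph{is} $[2,3]$-choosable: no infeasible $2$-list assignment with three colors exists on it, so your plan of ``exhibiting a $3$-color obstruction for each of the four types'' fails outright in this case. The paper closes the hole with a structural argument (Claim~\ref{claim:theta2222}): if the core $H\notin\widetilde T$ is bipartite, contains no $\Gamma_{2p,2q,r}$ with $p,q\geq 2$, and contains a $K_{2,4}$, then --- because $H$ is connected with minimum degree at least~$2$ and is not itself a $K_{2,m}$ --- there is a vertex outside the $K_{2,4}$, and tracing a path through it forces either a forbidden $\Gamma$-configuration or an additional elementary path of length at least~$3$ between the two degree-$4$ vertices, hence a $\theta_{2,2,2,2m}$ with $m\geq 2$; after the Lemma~\ref{lem:rubin2} reduction this becomes $\theta_{2,2,2,4}$, which does admit a three-color obstruction (Figure~\ref{fig:theta-2224}). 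Without an argument of this kind your necessity direction does not close, since ``$H$ contains a type-(4) subgraph'' is compatible with the only visible witness being the harmless $K_{2,4}$, and you would be unable to distinguish such an $H$ from the genuinely choosable cores $K_{2,m+3}$. Everything else in your outline is consistent with the paper's proof.
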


\begin{proof}
The proof is an adaptation of the proof of Theorem~\ref{prop: 2-4-choos} in~\cite{rubin}. Consider a connected graph $G$ and denote by $H=(V_H,E_H)$ its core which is also connected. Using Theorem~\ref{prop: 2-4-choos}, if $H$ is in $T$, then 
$G$ is 2-choosable and thus $[2,3]$-choosable. Moreover, as mentioned in Remark~\ref{rem:K2m}, $K_{2,m}, m\geq 2$ is $[2,3]$-choosable and consequently, using Claim~\ref{claim:core}, if $H$ is in $\widetilde T$, then $G$ is $[2,3]$-choosable. 

Suppose now that $H$ is not in $\widetilde T$ and show that it is not  $[2,3]$-choosable. In particular $H$ is not in $T\subset \widetilde T$ and using Lemma~\ref{lem:rubin1}, $H$ contains, as partial induced subgraph, a graph of one of the types (1), (2), (3) or (4).
We then claim the following:

\begin{claim}\label{claim:theta2222}
If $H$ contains a $K_{2,4}=\theta_{2,2,2,2}$ (type (4) with $m=1$), is bipartite (no type (1)) and does not contain  a graph $\Gamma_{2p,2q,r}$, $p,q\geq 2$ (no type (2)), then it contains a graph $\theta_{2,2,2,2m}, m\geq 2$.
\end{claim}
 \begin{proof} (of the claim)
 Suppose  $H$ satisfies the conditions of the claim and consider, in $H$, a $\theta_{2,2,2,2}=(A\cup B,E_\theta)$ whose two parts are $A$, of size~2 and $B$, of size~4. Since $H$ is bipartite, there is no other edge in $E_H$ between two different vertices of $A\cup B$ and since $H\notin \widetilde T$, it  is not itself  a graph $K_{2,m}, m\geq 1$ and consequently there is a vertex  $v\in V_H\setminus (A\cup B)$. Since $H$ is connected of minimum degree at least~2, there is a path in $H$ starting from $v_0\in A\cup B$, passing through $v$ and that either contains a cycle or is elementary and arrives in $A\cup B$ to a vertex $v_1\neq v_0$. Since $H$ does not contain a $\Gamma_{2p,2q,r}$, $p,q\geq 2$, this is an elementary path from $v_0$ to $v_1$ with all internal vertices in $V_H\setminus (A\cup B)$. Necessarily $A=\{v_0,v_1\}$ since in all other cases  $H$ would contain a $\Gamma_{2p,2q,r}$, $p,q\geq 2$. Denote by ${\cal P}$ the set of all such elementary paths between $v_0$ and $v_1$. Since $H$ is not itself  a graph $K_{2,m}, m\geq 1$, at least one path in ${\cal P}$ is of length at least~3, which means that $H$ contains a graph $\theta_{2,2,2,p}, p\geq 3$. Since $H$ is bipartite, $p=2m, m\geq 2$, which concludes the proof of the claim. 
\end{proof}

Using Claim~\ref{claim:theta2222} and the fact that $H$ contains, as partial induced subgraph, a graph of one of the types (1), (2), (3) or (4) in Lemma~\ref{lem:rubin1}, it contains a graph of one of the types (1), (2), (3) or (4'), where (4') is $\theta_{2,2,2,2m}, m\geq 2$. We now use Lemma~\ref{lem:rubin2} to reduce the cases to be considered. If $H$ contains an odd cycle, then it is not bipartite and thus not $[2,3]$-choosable. Using the reduction in Lemma~\ref{lem:rubin2}, to check  $[2,3]$-choosability of bipartite graphs of types (2), (3) or (4') it suffices to check the  $[2,3]$-choosability of the following graphs:

\begin{quote}
	(a) $\Gamma_{4,4,1}$ or $\Gamma_{4,4,0}$; 
(b) $\theta_{3,3,1}$;
and (c) $\theta_{2,2,2,4}$.
\end{quote}

Indeed, case (2) leads to $\Gamma_{4,4,1}$ or $\Gamma_{4,4,0}$ depending on the parity of the path between the two cycles; case (3) with $a,b,c$ all odd leads to $\theta_{3,3,1}$; case (3) with $a,b,c$ all even leads to $\Gamma_{4,4,0}$ and case (4') leads to  
$\theta_{2,2,2,4}$. 

\begin{figure}[h]
\begin{center}
\includegraphics[scale=0.35]{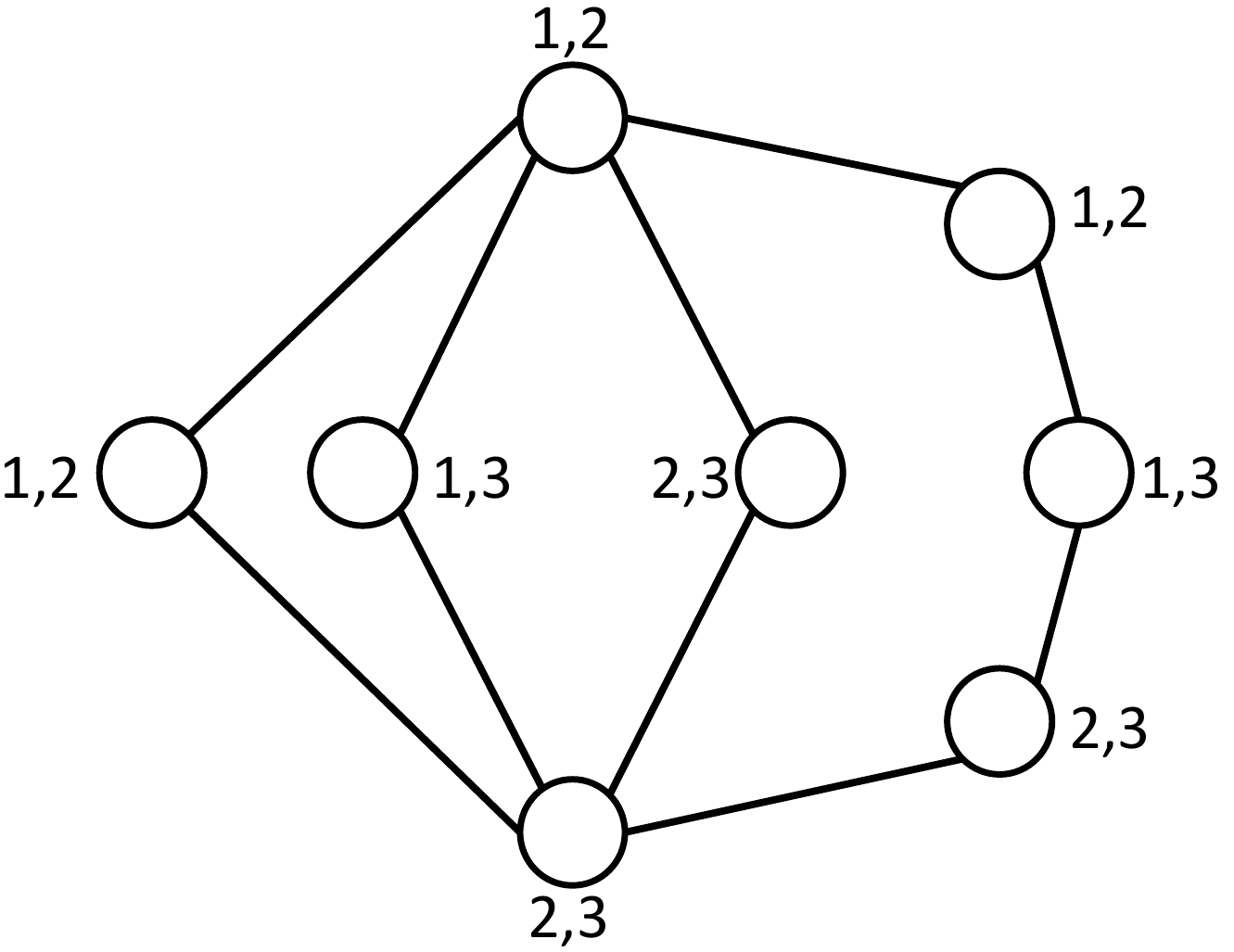}
\caption{$\theta_{2,2,2,4}$ and an infeasible 2-list assignment with three colors.}
\label{fig:theta-2224}
\end{center}
\end{figure}

It is shown in~\cite{rubin} that $\Gamma_{4,4,1},\Gamma_{4,4,0}$ and  $\theta_{3,3,1}$ are not $[2,3]$-choosable. The 2-list assignment for $\theta_{2,2,2,4}$ given in Figure~\ref{fig:theta-2224}
shows that $\theta_{2,2,2,4}$ is not $[2,3]$-choosable. This concludes the proof.
\end{proof}

We immediately deduce the following:

\begin{corollary}
	$[2,3]$-CH is polynomial.
\end{corollary}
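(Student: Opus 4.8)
The plan is to exhibit a polynomial-time algorithm deciding $[2,3]$-choosability by reducing it, via Theorem~\ref{prop: 2-3-choos}, to membership in the finite family $\widetilde T$. The theorem states that a connected graph is $[2,3]$-choosable if and only if its core lies in $\widetilde T=\{K_1, C_{2m+2}, \theta_{2,2,2m}, K_{2,m+3}, m\geq 1\}$. Since, by the last paragraph of Section~\ref{sec:notations}, a graph is $[2,3]$-choosable precisely when each of its connected components is, the first step is to decompose the input graph into connected components, which takes $O(|V|+|E|)$ time.

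Next, for each connected component I would compute its core. By the definition recalled before Claim~\ref{claim:core}, the core is obtained by repeatedly deleting degree-one vertices; this iterative pruning can clearly be carried out in linear time (maintain a queue of current degree-one vertices, decrement neighbor degrees as vertices are removed). Having the core in hand, the decision reduces to testing whether it is isomorphic to one of the graphs in $\widetilde T$. The key observation making this polynomial is that each of the four types in $\widetilde T$ is a highly structured graph determined by a single parameter $m$, and all of them have a simple degree sequence and a small, easily recognizable shape: $K_1$ is a single isolated vertex; $C_{2m+2}$ is a cycle (all vertices of degree~2, even length); $\theta_{2,2,2m}$ is the graph of two degree-3 vertices joined by three paths of the prescribed lengths; and $K_{2,m+3}$ is a complete bipartite graph with one side of size~2.

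The recognition step is therefore straightforward to implement in polynomial time: one can first read off the number of vertices and the degree sequence of the core, which already separates the four families (for instance, $C_{2m+2}$ is $2$-regular, $K_{2,m+3}$ has exactly two vertices of high degree $m+3$ and the rest of degree~2, and $\theta_{2,2,2m}$ has exactly two vertices of degree~3 with the remaining vertices of degree~2), and then verify the precise parameters and the parity/length constraints. Each such check runs in $O(|V|+|E|)$, and there are at most $|V|$ components, so the total running time is polynomial. A graph is declared $[2,3]$-choosable exactly when every component's core passes one of these tests.

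I do not expect a genuine obstacle here, since all the mathematical content is already carried by Theorem~\ref{prop: 2-3-choos}; the only care needed is to make sure the finite characterizing family is recognized correctly. The one point warranting attention is confirming that membership in $\widetilde T$ is decidable by a purely local/structural test rather than an expensive graph-isomorphism routine: this is justified by the fact that $C_{2m+2}$, $\theta_{2,2,2m}$, and $K_{2,m+3}$ are each rigidly determined by their degree sequences together with a single length or parity parameter, so no general isomorphism algorithm is required.
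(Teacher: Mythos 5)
Your proposal is correct and matches the paper's (implicit) argument: the paper derives this corollary immediately from Theorem~\ref{prop: 2-3-choos}, the point being exactly what you spell out — compute cores componentwise and recognize membership in the finite parameterized family $\widetilde T$ by elementary structural tests. Your added detail on linear-time core computation and degree-sequence-based recognition is a faithful elaboration of what the paper leaves to the reader.
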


\subsection{Extending hardness results for $\{2,3\}$-choosability}\label{subsec: 2-3-choos}

If list sizes are either 2 or 3,  
note that $[\{2,3\},3]$-CH is hard in planar graphs and in triangle-free graphs since 3-COL, equivalent to $[3,3]$-CH, is already hard in these classes~(\cite{gj,maffray-triangle-free}). 

It has been shown in~\cite{grids} that bipartite graphs are $[(2,3),3]$-choosable. The following proposition points out hard cases when list sizes are only required to be in $\{2,3\}$: 

\begin{proposition}\label{prop: 2,3-choos 3 col}\mbox{}\\
(i) $[\{2,3\},3]$-CH is NP-complete in bipartite graphs, even if there are only 6 vertices with 2-lists and they induce a $C_6$.\\
(ii) $[\{2,3\},3]$-CH is polynomial in bipartite graphs if there are 6 vertices with 2-lists and they do not induce a $C_6$.\\
(iii) Every bipartite graph is $[\{2,3\},3]$-choosable if at most 5 vertices have 2-lists.
\end{proposition}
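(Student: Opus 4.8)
The plan is to handle the three parts separately, giving an NP-membership argument plus a reduction for (i), and direct structural arguments for (ii) and (iii).

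For part (i), membership in NP is the starting point. Since only the six vertices of the $C_6$ carry $2$-lists and every other vertex is forced to the full list $\{1,2,3\}$, there are at most $3^6$ distinct list systems to examine (three possible $2$-subsets per $C_6$-vertex, the $3$-lists being fixed). Now $G$ is $[f,3]$-choosable iff all of these constantly many systems are feasible, so a family of proper list colorings, one per system, is a polynomially checkable certificate; hence the problem is in NP. For hardness I would reduce from an NP-complete constraint problem to the choosability question, building a bipartite $G$ whose only $2$-list vertices are the six vertices of a prescribed $C_6$. A natural source is monotone NAE-3SAT, or equivalently a suitably restricted single $[\{2,3\},3]$-LISTCOL instance, whose intractability I would want to transfer to the universal statement over all $2$-list patterns.

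The engine of the reduction is the observation that a $3$-list (hence free) vertex $w$ can fail to be colorable only when its neighbours already occupy all three colours. The idea is to use the $C_6$ to manufacture such obstructions: under a suitably chosen $2$-list pattern the cycle is forced into colorings whose two colour classes display several distinct colours, so that a free vertex attached to one, two, or three prescribed $C_6$-vertices inherits an effective list of size~$2$ or is pinned to a single colour. Attaching many free ``variable'' and ``clause'' vertices to the cycle in this way lets the palette $\{1,2,3\}$ simulate truth values and lets the blocking condition simulate an unsatisfied clause, so that $G$ is choosable iff the source instance is positive. The main obstacle is precisely this gadget design: one must choose the $C_6$-lists and the attachment pattern so that (a) $G$ stays bipartite while every non-cycle vertex keeps a full $3$-list, (b) the critical $2$-list pattern reproduces the intended constraints, and (c) every other pattern on the $C_6$ relaxes the constraints enough to stay feasible, so the quantifier over the $3^6$ patterns collapses to the single hard instance; checking both directions of the equivalence is where the real work lies.

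For part (iii) I would argue directly that no infeasible list system exists once at most five vertices carry $2$-lists. Starting from the proper $2$-colouring of $G$ given by its bipartition $(X,Y)$ with colours $1$ and $2$, I use colour $3$ as slack: a $2$-list vertex whose list misses its bipartition colour is forced onto~$3$, and the only way this repair can fail is an edge joining two such forced vertices, one in each part. With at most five constrained vertices I can always break every such conflict, either by re-selecting which pair of colours plays the role of the bipartition base (there are three choices, cf.\ the $[(2,3),3]$-choosability of bipartite graphs in~\cite{grids} together with monotonicity) or by locally rerouting one offending vertex; a short case analysis on the at most five lists shows a consistent choice always survives. That the bound $5$ is best possible is exactly the content of (i).

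For part (ii), with exactly six $2$-list vertices whose induced subgraph $H$ is bipartite but not $C_6$, I would run through the finitely many possibilities for $H$ up to isomorphism. For each, deciding choosability means testing the $3^6$ patterns, and the claim is that whenever $H\neq C_6$ the cyclic rigidity exploited in (i) is absent: $H$ is disconnected, has a vertex of degree at most one, or otherwise fails to force a colour class to be rainbow, so feasibility of each pattern reduces to a polynomially solvable sub-problem (typically a $2$-SAT instance arising from the induced size-$2$ effective lists, or an outright guarantee of choosability). Deciding $[\{2,3\},3]$-choosability then amounts to a constant number of polynomial feasibility tests. The delicate point is to confirm that $C_6$ is genuinely the unique six-vertex configuration that retains the forcing power, which I would verify case by case using Lemmas~\ref{lem:rubin1} and~\ref{lem:rubin2}.
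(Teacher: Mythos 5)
Your NP-membership argument for (i) matches the paper's and is fine, but the rest of the proposal has genuine gaps. For the hardness in (i) you never produce the reduction, and the plan you sketch is unlikely to succeed: a fixed six-vertex $C_6$ cannot carry an unbounded number of constraints, so "attaching many free variable and clause vertices to the cycle" cannot simulate monotone NAE-3SAT. The idea you are missing is that the hardness lives entirely in the 3-list part of the graph: the paper reduces from 3-coloring precoloring extension in bipartite graphs (given $v_1,v_2,v_3$ in one part, is there a 3-coloring giving them three distinct colors?), threads a $C_6$ through $v_1,u_1,v_2,u_2,v_3,u_3$, and exhibits a single 2-list pattern, namely $\{1,2\},\{2,3\},\{1,3\},\{1,2\},\{2,3\},\{1,3\}$ around the cycle, that forces $v_1,v_2,v_3$ to receive three distinct colors in every list coloring of the cycle; all other patterns are harmless because any cycle coloring using at most two colors on $v_1,v_2,v_3$ extends greedily to the rest of $G'$. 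The "collapse of the quantifier" you worry about is thus handled by a case split on how many colors appear on $v_1,v_2,v_3$, not by making every non-critical pattern easy to decide.

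For (ii) your central claim --- that when $H\neq C_6$ the feasibility of each of the $3^6$ patterns "reduces to a polynomially solvable sub-problem" --- is unjustified and in general false: deciding feasibility of a single such pattern is precisely the precoloring-extension-type problem that makes (i) hard, and it does not become a 2-SAT instance because all non-$H$ vertices keep 3-lists. The correct dichotomy, which the paper proves, is that the answer depends only on $H$ and not on the rest of $G$: if one side of $H$ has at most two 2-list vertices, their lists (two 2-subsets of a 3-palette) share a color, so that whole side of the bipartition can be colored monochromatically and $G$ is always choosable --- this also disposes of (iii) in two lines, far more directly than your color-3-as-slack repair, which as written does not explain why 5 is the threshold (with three 2-lists on one side the common color may disappear, e.g.\ $\{1,2\},\{1,3\},\{2,3\}$); if $|B_H|=|W_H|=3$ and $H$ has a vertex of degree at most 1, a similar extension works; and if $H$ has minimum degree 2 but is not $C_6$, it contains a $C_6$ with a chord, i.e.\ a chocolate, which is not $[2,3]$-choosable, so $G$ is never choosable. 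The resulting algorithm just tests for a degree-$\le 1$ vertex in $H$; your appeal to Lemmas~\ref{lem:rubin1} and~\ref{lem:rubin2} is not the right tool here.
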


\begin{proof}
{\bf (i)}.
Membership to NP is clear since under the assumptions  there are only $3^6$ different list systems; for each one of them one needs to  guess a coloring and check whether it is a feasible list coloring. If the answer is Yes, then the graph is choosable, else it is not. The reduction is made from 3-color Pre-Ext in bipartite graphs defined as follows: an instance is a bipartite graph $G=(B\cup W,E)$ with three specified vertices $v_1, v_2, v_3$ in $B$. The question is whether there is a 3-coloring of $G$ such that $v_1, v_2$ and $v_3$ get different colors. This problem was shown NP-complete in~\cite{bojawoe94}. 

Consider an instance of 3-color Pre-Ext consisting of a bipartite graph $G=(B\cup W,E)$ and three vertices $\{v_1, v_2, v_3\}\subset B$. We add three vertices $u_1, u_2, u_3$ in $W$ and edges $v_1u_1$, $u_1v_2$, $v_2u_2$, $u_2v_3$, $v_3u_3$ and $u_3v_1$. We denote by $C$ the cycle $v_1,u_1,v_2,u_2,v_3,u_3$. The resulting bipartite graph $G'$ is seen as an instance of $[\{2,3\},3]$-CH, where vertices in $C$ have 2-lists and all others have 3-lists. 

Suppose that the instance $(G, \{v_1,v_2,v_3\})$ is satisfiable for 3-color Pre-Ext; we shall prove that $G'$ is $[\{2,3\},3]$-choosable  if vertices of $C$ are the only vertices with 2-lists. Consider a 2-list assignment to vertices of $C$ in $G'$. Since $C$ is $[2,3]$-choosable (see Theorem~\ref{prop: 2-3-choos}) it has a list 3-coloring. If $v_1,v_2, v_3$ are colored with one or two colors, then this list coloring can be extended to  $G'$ by coloring $W$ with the third color. If they are colored using three colors then this coloring can be extended to $G'$ using the fact that $(G, \{v_1,v_2,v_3\})$ is satisfiable. 

Suppose conversely that $(G, \{v_1,v_2,v_3\})$ is not satisfiable and consider the following 2-list assignment for $C$: $L(v_1)=\{1,2\}$, $L(u_1)=\{2,3\}$, 
$L(v_2)=\{1,3\}$, $L(u_2)=\{1,2\}$, $L(v_3)=\{2,3\}$, $L(u_3)=\{1,3\}$. It is straightforward to verify that in any list coloring of $C$ $v_1, v_2$ and $v_3$ are colored using three colors and consequently it cannot be extended to the whole graph $G'$. This shows that in this case $G'$ is not choosable and concludes the proof of {\bf (i)}.

{\bf (ii)} and {\bf (iii)}.
Consider an instance  $[\{2,3\},3]$-CH and denote by $H$ the subgraph induced by vertices with 2-lists. Then the instance can be written $(G,H)$ where $G=(B\cup W, E)$ and $H=(B_H\cup W_H, E_H)$ with $B_H\subset B$ and $W_H\subset W$. Suppose that $|B_H\cup W_H|\leq 6$  and $H$ is not a $C_6$. 
Consider any list assignment with 2-lists on $B_H\cup W_H$ and 3-lists elsewhere. If $|B_H|\leq 2$ then it will be possible to color $B$ with a single color and $W$ with the two other colors; the same holds symmetrically if $|W_H|\leq 2$. In particular, this holds if $|B_H\cup W_H|\leq 5$, proving {\bf (iii)}. It remains to consider the case where $|B_H|=|W_H|=3$. Suppose first that there is in $H$ a vertex $u$ of degree  $d_H(u)\leq 1$; w.l.o.g. we can suppose $u\in W_H$. We can color $W\setminus \{u\}$ with one single color  since it contains only two vertices with 2-lists. Then it is possible to color  $(B\setminus B_H)\cup \Gamma (u)$ where $\Gamma(\cdot)$ denotes the neighborhood. Every vertex in $B_H\setminus (\Gamma(u)\cap B_H)$ has a monochromatic neighborhood and the same holds for $u$; consequently we can extend this list coloring to the whole graph.  Suppose now that every vertex in $H$ has a degree at least 2. Since 
$|B_H|=|W_H|=3$, $H$ contains a $C_6$ as a partial subgraph. Since $H$ is not a $C_6$, this cycle has a chord which means that $H$ contains a chocolate as a partial subgraph. Thus, $H$ is not $[2,3]$-choosable and $G$ is not choosable. As a conclusion,  to decide whether $G$ is $[\{2,3\},3]$-choosable we just have to detect whether $H$ contains a vertex of degree at most~1, which can be done in polynomial time. This concludes the proof of {\bf (ii)}.
\end{proof}

For complexity of choosability problems the natural class to consider is the class $\Pi_2^p$.
To our knowledge the previous results are among the first NP-completeness results in this area.
In~\cite{rubin}, it is shown that $\{2,3\}$-CH is $\Pi_2^p$-complete in bipartite graphs of maximum degree~4.
In~\cite{gutner}, the proof is adapted to show that $\{2,3\}$-CH is $\Pi_2^p$-complete in planar bipartite graphs of maximum degree~5. 
The proof uses only list assignments with a palette of seven colors and thus shows that  $[\{2,3\},7]$-CH is $\Pi_2^p$-complete in planar bipartite graphs of degree~5.
One can easily find list assignments with only four colors  and with exactly the same properties; as a consequence the  result effectively derived in~\cite{gutner} is that $[\{2,3\},4]$-CH is $\Pi_2^p$-complete in planar bipartite graphs of degree~5. In what follows, we  modify the reduction to obtain $\Pi_2^p$-completeness on subgrids using only three colors, an improvement of the previous hardness result:

\begin{theorem}\label{prop: 2,3-choos 4 col grids}
$\{2,3\}$-CH and, for any $k\geq 3$, $[\{2,3\},k]$-CH are $\Pi_2^p$-complete in subgrids, even if vertices of degree~4 have 3-lists and vertices of degree~2 have 2-lists.
\end{theorem}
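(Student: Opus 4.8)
The plan is to prove membership and then combine the two modifications announced just before the statement. \textbf{Membership.} For fixed $k$, $[\{2,3\},k]$-CH lies in $\Pi_2^p$ as already observed. For the unbounded version $\{2,3\}$-CH it suffices to note that only colors appearing in some list are relevant and that every list has size at most~$3$, so that $G$ is $\{2,3\}$-choosable if and only if it is $[f,k]$-choosable for $k=3|V|$; the universal quantifier then ranges over a family of list assignments of polynomial size, with an inner list-coloring test in NP, placing $\{2,3\}$-CH in $\Pi_2^p$ as well.

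\textbf{Source and global scheme.} For hardness I would start from the $\Pi_2^p$-complete instances produced by the reduction of~\cite{gutner} (planar bipartite, maximum degree~$5$, with $2$- and $3$-lists and four colors) and transform them into subgrids over a three-color palette, arranging the whole construction so that \emph{yes}-instances become \emph{$\{2,3\}$-choosable} for every palette, whereas \emph{no}-instances fail to be even \emph{$[\{2,3\},3]$-choosable}. By the monotonicity recalled in Section~\ref{sec:notations} (choosability at a larger palette implies choosability at any smaller admissible one), a single polynomial reduction then settles simultaneously $\{2,3\}$-CH and $[\{2,3\},k]$-CH for every $k\geq 3$: a yes-instance is $[\{2,3\},k]$-choosable for all $k$, while a no-instance, not being $[\{2,3\},3]$-choosable, is not $[\{2,3\},k]$-choosable for any $k\geq 3$.

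\textbf{The two modifications.} The first lowers the palette from four colors to three. In~\cite{gutner} the universal choice attached to each universally quantified variable is realized by a $K_{2,m}$-type gadget whose two high-degree vertices carry the disjoint lists $\{1,2\}$ and $\{3,4\}$ (cf. Remark~\ref{rem:K2m}). I would replace each such gadget by the three-color selector isolated in Proposition~\ref{prop: 2,3-choos 3 col}: a cycle of degree-$2$ vertices carrying $2$-lists which, under an adversarial assignment, forces three prescribed port vertices to receive the colors $1,2,3$ pairwise distinctly, exactly as the $C_6$ there does. Since the integer grid contains no induced $C_6$, the selector must be implemented as a longer induced even cycle $C_{2m}$ (which is $2$-choosable~\cite{rubin}) into which forced-alternation segments are inserted so as to reproduce the rigidity of the $C_6$ of Proposition~\ref{prop: 2,3-choos 3 col}; the clause gadgets are likewise redrawn so that the existential coloring never needs a fourth color. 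The second modification embeds the resulting planar instance into a grid. I would first bring the maximum degree down from~$5$ to~$4$ by splitting every vertex of degree~$5$ into vertices of degree at most~$4$ joined by forced-equality wires, then take a rectilinear drawing of the planar degree-$\leq 4$ graph on a grid of polynomial size. Each original edge is routed as a path of degree-$2$ vertices carrying $2$-lists, its length chosen with the correct parity so that the path transmits the intended inequality between its endpoints; branchings are placed at degree-$4$ crossing vertices carrying $3$-lists, while straight pieces and corners keep degree~$2$. The occupied grid points induce the desired subgrid; it is automatically bipartite, its degree-$2$ vertices carry $2$-lists and its degree-$4$ vertices carry $3$-lists, and the degree-$1$ vertices created along the way are discarded using the core reduction of Claim~\ref{claim:core}.

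\textbf{Correctness and main difficulty.} For the reverse direction, if the formula is false one selects the universal assignment that cannot be satisfied, sets the selector lists to force it, and checks that the clause gadgets then admit no list coloring, so the subgrid is not $[\{2,3\},3]$-choosable. The forward direction is the delicate one: one must show that \emph{every} admissible list assignment of the subgrid \-- over any palette, not only the clean selector ones \-- extends to a coloring whenever the formula is true. The key lemma to establish is that a forced-alternation wire behaves, for the purpose of choosability, exactly like the edge it replaces, and that the even-cycle selectors reproduce the $C_6$ rigidity without ever creating a spurious obstruction; granting this, loose assignments are colored greedily while the worst-case assignments are colored using the truth of the $\forall\exists$ formula. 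I expect this faithful-transmission analysis \-- controlling wire parities and proving that the grid gadgets introduce no unintended infeasibility while staying within three colors and the degree profile $\{2,4\}$ \-- to be the main obstacle, precisely because the grid forbids the short $C_6$ on which the three-color obstruction of Proposition~\ref{prop: 2,3-choos 3 col} rests.
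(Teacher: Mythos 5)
Your membership argument and your overall yes/no asymmetry (yes-instances become $f$-choosable for every palette, no-instances fail already at $k=3$) match the paper's scheme. The genuine gap is in the hardness construction, and it is concentrated in the step you yourself flag as "the main obstacle": the key lemma you would need --- that a forced-alternation wire "behaves, for the purpose of choosability, exactly like the edge it replaces" --- is false as stated, and no grid gadget with the correct behaviour appears in your sketch. A path of degree-$2$ vertices with $2$-lists does not force its endpoints to receive different colors under every list assignment (color one end, propagate greedily; for most assignments the far end may repeat the first color), so subdivided edges and "forced-equality wires" do not preserve the clause/variable logic in the forward direction, while in the reverse direction they only transmit a constraint for specially crafted lists. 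What the reduction actually requires is an \emph{asymmetric} pair of properties for a propagator gadget: (a) for \emph{every} admissible list assignment there is a "nice" color at the input under which all colors of the output remain available (used when the formula is true), and (b) there \emph{exists} one specific assignment under which the input color rigidly determines the output color (used when the formula is false). Plain paths satisfy (a) trivially but then cannot be coupled to the clause gadgets; your $C_6$-style selectors implement the wrong functionality for a $\forall$-variable, which must allow the adversary to force exactly one of two literal vertices while leaving its negation free (property (5) of the paper's appendix), not force three pairwise distinct colors. Without a concrete subgrid gadget realizing (a) and (b) simultaneously, together with the degree/list profile claimed in the statement, the reduction does not go through.

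For comparison, the paper does not post-process the planar instances of~\cite{gutner}: it reduces directly from Restricted Planar Satisfiability and builds new subgrid gadgets from scratch. The half-propagator $P_{1/2}$ is a specific subgrid with three designated $3$-list vertices $a,b,O'$ (all other vertices carrying $2$-lists) chosen so that properties (a) and (b) above hold; two copies glued at $O'$ form the propagator, variables are represented by an edge (existential) or a six-vertex gadget built on a $C_4$ (universal), and clause vertices carry $3$-lists. Connection paths of degree-$2$, $2$-list vertices are used only where the one-sided property (b) suffices, with their parities controlled as in your sketch. The embedding into the grid is then arranged so that exactly the degree-$4$ vertices are among the $3$-list vertices, which is how the "degree~4 $\Rightarrow$ $3$-list, degree~2 $\Rightarrow$ $2$-list" refinement of the statement is obtained. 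If you want to salvage your route, the essential missing ingredient is the design and verification of such a propagator; everything else in your outline (membership, monotonicity in $k$, routing and parity bookkeeping) is sound but secondary.
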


\begin{proof}

Membership in $\Pi_2^p$ is clear, as a special case of the general choosability problem. As in the proof of~\cite{gutner}, our reduction uses the Restricted Planar Satisfiability (RPS), shown to be $\Pi_2^p$-complete in~\cite{gutner}. An instance ${\cal I}$ is an expression of the form $\forall U_1 \cdots \forall U_k \exists V_1 \cdots \exists V_r \Phi$, where $\Phi$ is a formula in conjunctive normal form over the variables $\{U_1, \ldots, U_s, V_1, \ldots, V_t\}$, each clause contains exactly three literals (either a variable or its opposite) corresponding to different variables, each variable occurs in at most three clauses, and the bipartite graph $\Gamma_{\Phi}$ with clauses in one part and variables in the other part, linking a clause to the three variables it involves, is planar. The question is whether the expression is true or not, which means that, for all truth assignment of variables $U_i$, $i=1, \ldots, s$ (called {\em $\forall$-variables}) there is a truth assignment of variables $V_j$, $j=1, \ldots, t$ (called {\em $\exists$-variables}) such that each clause contains at least one true literal.

\begin{figure}[h]
\begin{center}
\includegraphics[scale=0.35]{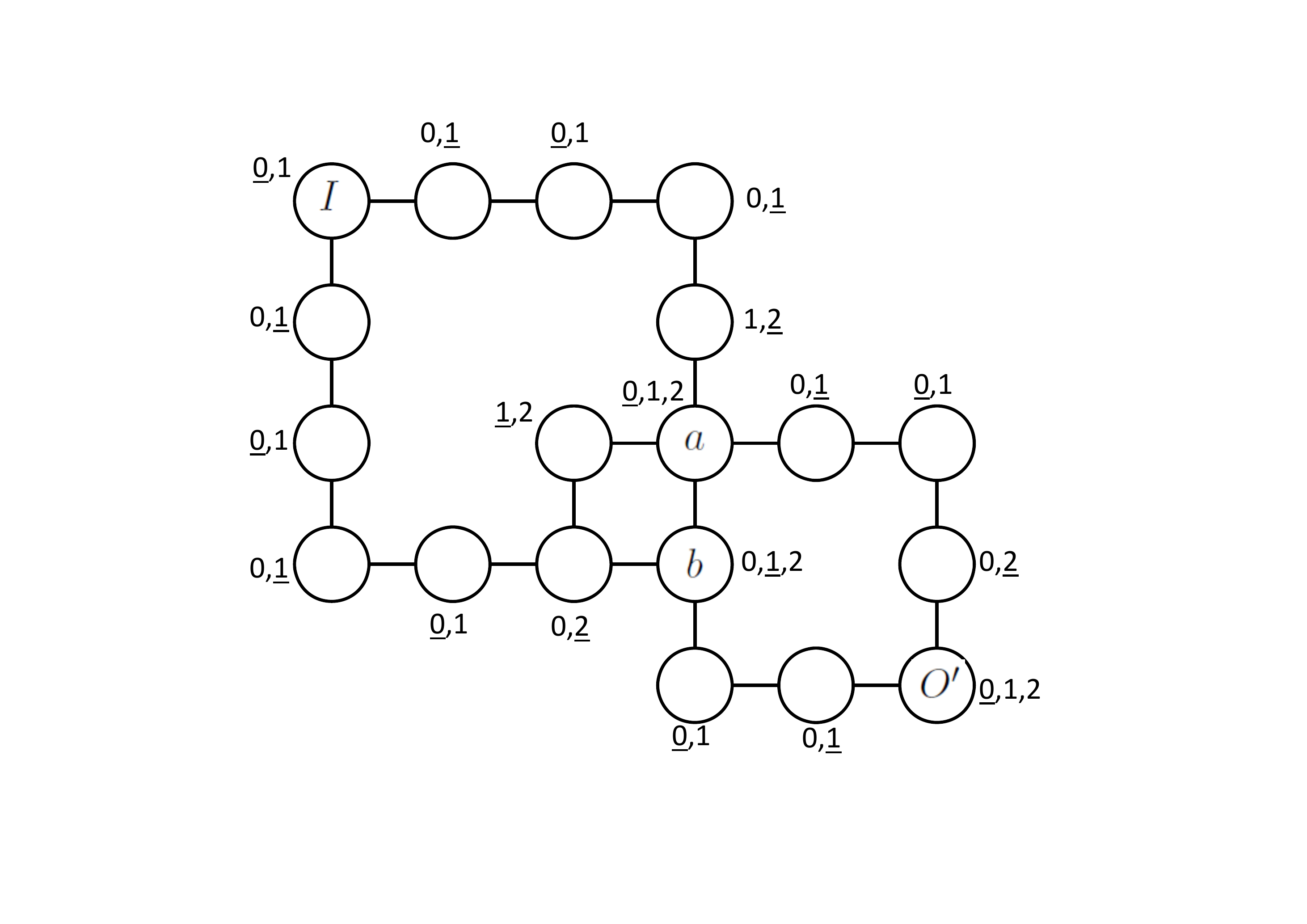}
\caption{The gadget $P_{1/2}$ with a list assignment for which imposing color $0$ at $I$ imposes color $0$ at $O'$.}
\label{fig:half-propagator}
\end{center}
\end{figure}

\begin{figure}[h]
\begin{center}
\includegraphics[scale=0.35]{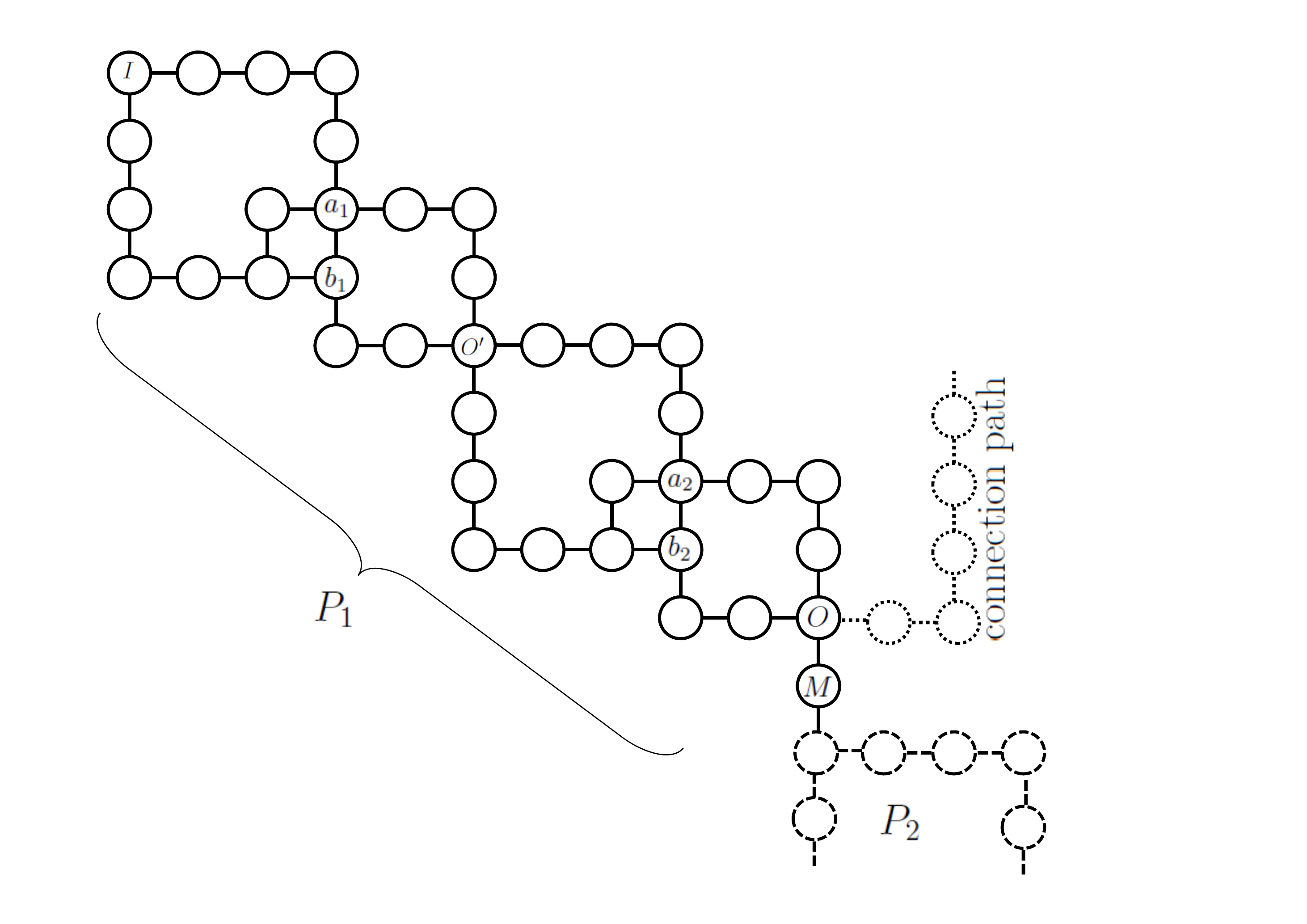}
\caption{Two gadgets $P_1$ and $P_2$ linked by a two-edge path.}
\label{fig:propagators}
\end{center}
\end{figure}

\begin{figure}[h]
\begin{center}
\includegraphics[scale=0.25]{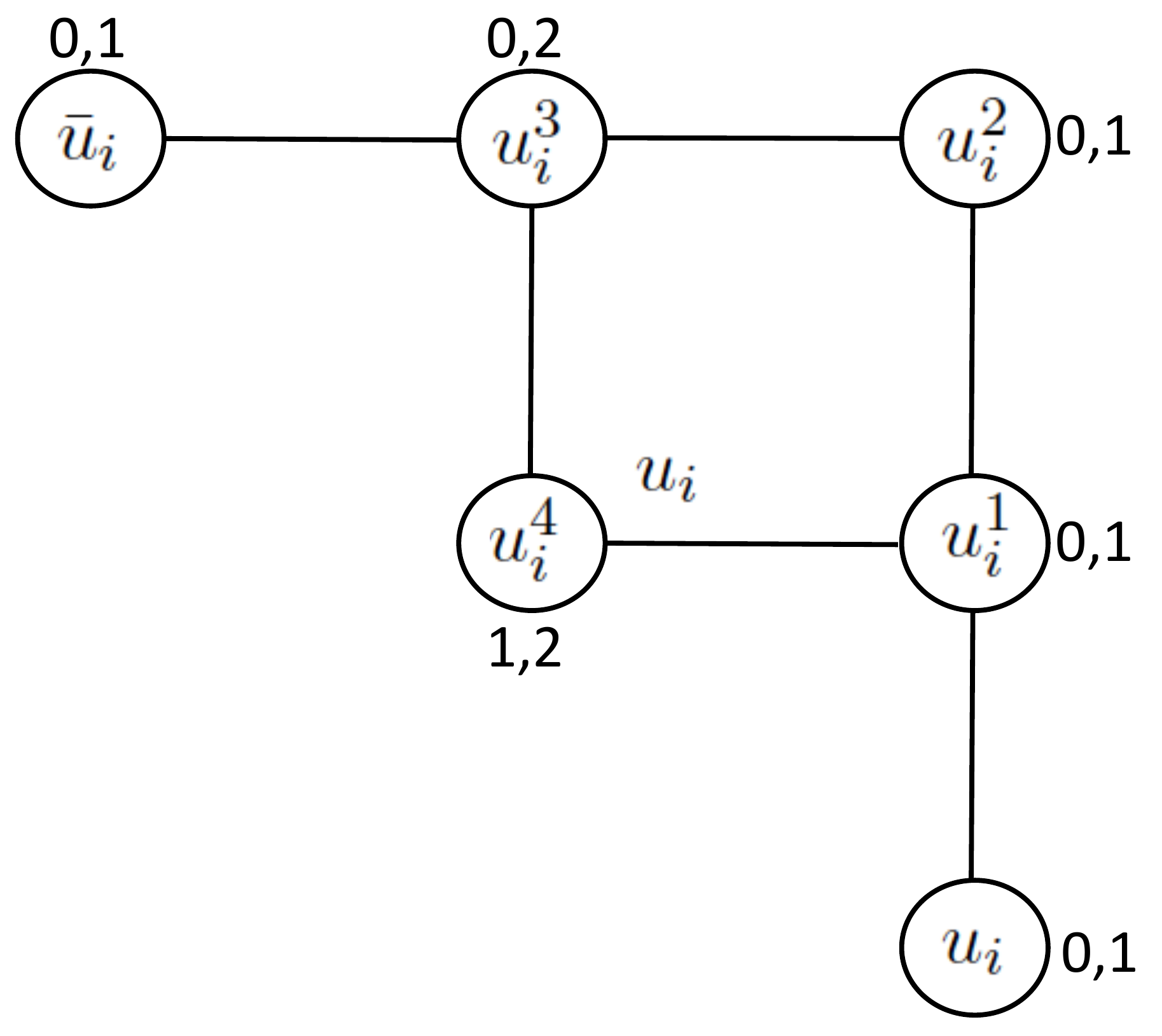}
\caption{The gadget associated with a $\forall$-variable $U_i$ with a 2-list assignment imposing color 1 to vertex $u_i$. }
\label{fig:forall}
\end{center}
\end{figure}

Given an instance ${\cal I}$ of RPS, we need to construct in polynomial time a subgrid $\widetilde{\Gamma}_{\Phi}=(V,E)$ and a function $f:V\rightarrow \{2,3\}$ such that   ${\cal I}$ is satisfiable if and only if $\widetilde{\Gamma}_{\Phi}$ is $[f,k]$-choosable, $k\geq 3$. 

We will now give a construction of $\widetilde{\Gamma}_{\Phi}=(V,E)$ which is different from those in~\cite{rubin} and~\cite{gutner}.

Without loss of generality we can assume that, in ${\cal I}$, every variable appears at least once in positive form and once in negative form and consequently, each literal has either 1 or 2 occurrences in different clauses of expression~$\Phi$.

We use gadgets which have properties  similar to those in~\cite{rubin} and~\cite{gutner}, but which can be embedded in a grid.  The main gadget, called, {\em half propagator} in~\cite{rubin} and~\cite{gutner} is replaced by the graph $P_{1/2}$ in Figure~\ref{fig:half-propagator}. Vertex $I$ is  the {\em Input} while $O'$ is  the {\em Output}. In $P_{1/2}$, we also specify two vertices $a,b$ as shown in Figure~\ref{fig:half-propagator}. 

A propagator $P$ consists of two gadgets $P_{1/2}$, glued by identifying  the Output of the first one and the Input of the second one. We then call $I$ the Input of the first $P_{1/2}$ and $O$ the Output of the second one, they form respectively the Input and the Output of the gadget $P$ (see Figure~\ref{fig:propagators}). We also define the function $f_P$ that associates value 3 with $O', O$ and the four vertices $a_1,b_1, a_2, b_2$. Two such gadgets $P_1, P_2$ may be connected by a two-edge path between the Output of $P_1$ and the Input of  $P_2$ (see Figure~\ref{fig:propagators}). 
We define $f_{P_{1/2}}(x)=3$ for $x\in\{a,b,O'\}$ and $f_{P_{1/2}}(x)=2$ for other vertices of $P_{1/2}$. 

We then consider the planar bipartite graph $\Gamma_{\Phi}=(V_\Phi, E_\Phi)$. Every vertex associated with a $\exists$-variable $V_j$ is replaced by two vertices $v_j,\bar v_j$ (one per literal) linked by an edge. Every vertex associated with a  $\forall$-variable $U_i$, $i=1, \ldots, s$ is replaced by six vertices $u_i,\bar u_i, u_i^1, u_i^2 , u_i^3, u_i^4$ with edges $u_i^1 u_i^2, u_i^2 u_i^3,u_i^3 u_i^4, u_i^4 u_i^1,$ $u_i u_i^1, \bar u_i u_i^3$ (see Figure~\ref{fig:forall}). 
These first two  gadgets are called {\em variable gadgets} with a specific vertex associated with each literal. Consider a literal $W$  and its related vertex $w$; if $W$ has a single occurrence in $\Phi$, then we combine $w$ with a gadget $P$ by identifying $w$ and $I$. If $W$ has two occurrences in the expression $\Phi$, then we combine   $w$ with two gadgets $P$ separated by a two-edge path  by identifying the vertex $w$ with the vertex $I$ of the first gadget $P$. We then denote by  $M$ the intermediate vertex between the two consecutive $P$'s (See Figure~\ref{fig:propagators}).   Each variable  is then associated with a subgrid with two or  three Output vertices (one by occurrence of the variable in $\Phi$) all situated on the external face of this subgrid and two Input vertices. Replacing in $\Gamma_{\Phi}$ the vertex associated with this variable by the related subgrid, it is possible to connect the three clause vertices linked to this variable in $\Gamma_{\Phi}$ to the related Output vertices without edge crossing. We denote by $E'_\Phi$ these edges that replace edges $E_\Phi$ of $\Gamma_\Phi$.  The resulting graph $\Gamma'_{\Phi}$ remains planar. Then it is possible to embed $\Gamma'_{\Phi}$ in a grid such that each edge in $E'_\Phi$ is replaced by a path called  {\em connection path}. This is possible since the degree of Output vertices is at most~4. Without loss of generality we can assume that connection paths contain at least two edges. Moreover the connection paths can be drawn in such a way that the resulting graph $\widetilde{\Gamma}_{\Phi}$ is a subgrid. Figure~\ref{fig:3-propagators} shows for instance the representation in the subgrid $\widetilde{\Gamma}_{\Phi}$ of a $\forall$-variable $U_i$ involved in three clauses $(u_1, u_3, \bar u_i), (\bar u_2, u_i, u_4), (\bar u_1, u_i, u_2), i\neq 1,2,3,4$. 

\begin{figure}[h]
\begin{center}

\includegraphics[scale=0.5]{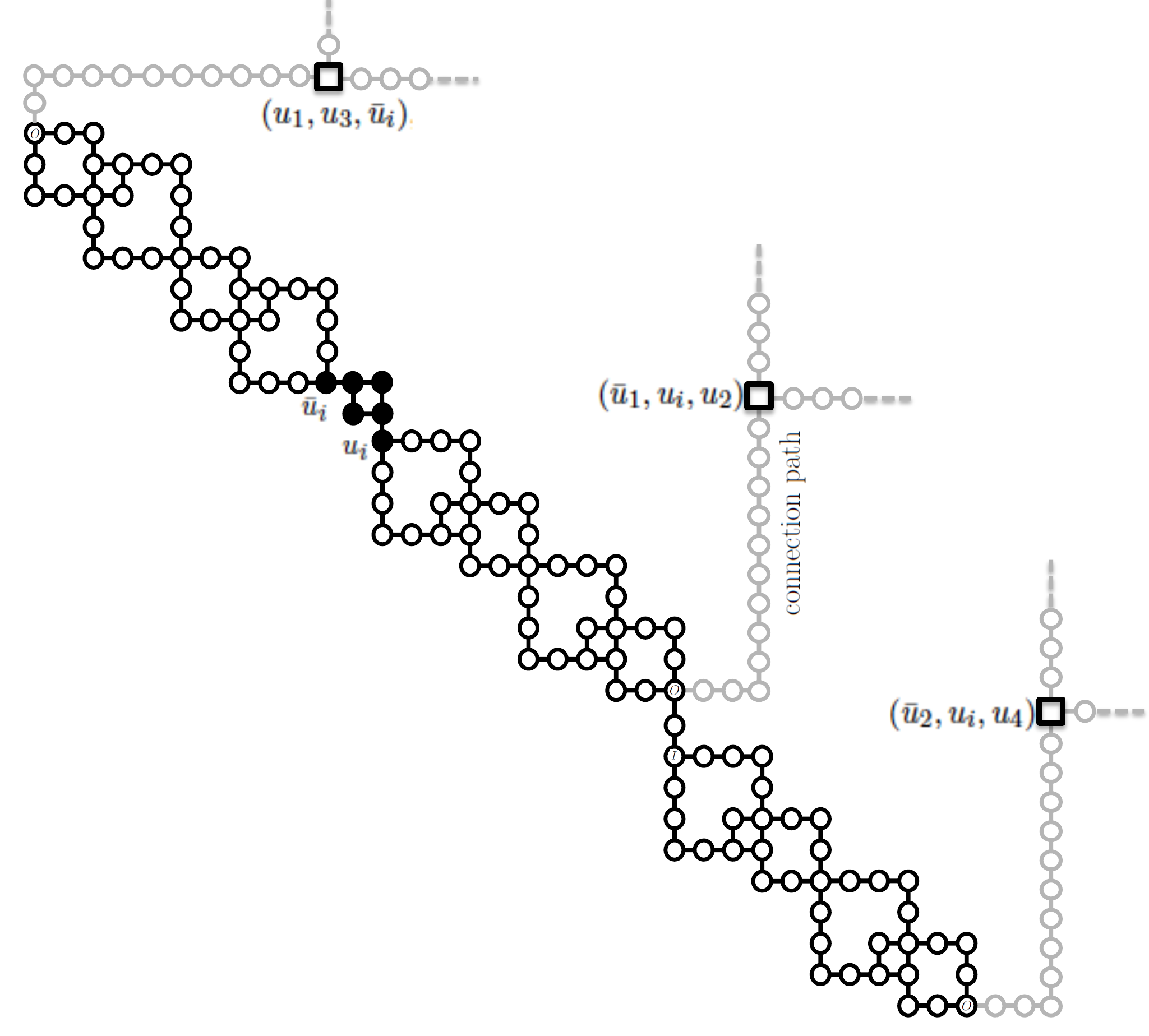}

\caption{Representation in $\widetilde{\Gamma}_{\Phi}$ of a $\forall$-variable $U_i$ involved in three clauses $(u_1, u_3, \bar u_i), (\bar u_2, u_i, u_4), (\bar u_1, u_i, u_2), i\neq 1,2,3,4$. The black subgraph (six black vertices) corresponds to the variable gadget associated with $U_i$ (as given in Figure~\ref{fig:forall}) and the three square vertices correspond to clause vertices.}
\label{fig:3-propagators}
\end{center}
\end{figure}

Function $f$ is obtained by associating the value~3 to vertices $a,b, O', O$ in gadgets $P$. Clause vertices are also associated with~3 and all other vertices get the value~2. In particular, the function $f_P$ is the restriction of $f$ to the related gadget $P$. Note that the only vertices of degree~4 in the subgrid $\widetilde{\Gamma}_{\Phi}$ are $a$ and $O'$ vertices of propagators $P$ and $O$ vertices of the first propagators $P_1$ associated with literals having two occurrences in  $\Phi$. All these vertices have 3-lists. All other vertices with 3-lists are of degree~3 and consequently vertices of degree~2 have 2-lists. There is no vertex of degree~1 or 0.
This completes the construction; it is performed in polynomial time. 

We now use the following Lemma.  For readability purposes its proof is given in the appendix. The arguments are similar to those of~\cite{rubin} and~\cite{gutner}. 

\begin{lemma}\label{lem_annex} \mbox{}\\
(i) if ${\cal I}$ is satisfiable, then $\widetilde{\Gamma}_{\Phi}$ is $f$-choosable.\\
(ii) if $\widetilde{\Gamma}_{\Phi}$ is $[f,3]$-choosable, then ${\cal I}$ is satisfiable.
\end{lemma}

We conclude the proof by observing that, as mentioned in Section~\ref{sec:notations}, $f$-choosability implies $[f,k]$-choosability for any $k$ and that, for any $k\geq 3$, $[f,k]$-choosability implies $[f,3]$-choosability.
\end{proof}

\begin{corollary}\label{cor:grid 2,3,5}
$\{2,3, 5\}$-CH and, for any $k\geq 5$, $[\{2,3, 5\},k]$-CH are $\Pi_2^p$-complete in grids.
\end{corollary}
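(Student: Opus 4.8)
The plan is to reduce from Theorem~\ref{prop: 2,3-choos 4 col grids}, which already establishes $\Pi_2^p$-completeness of $[\{2,3\},k]$-CH on subgrids for every $k\geq 3$. The only gap between that statement and Corollary~\ref{cor:grid 2,3,5} is that the instances produced there are \emph{subgrids} (induced subgraphs of a grid), whereas here we need genuine \emph{grids} $G(p,q)$. The natural idea is to take a subgrid $\widetilde\Gamma_\Phi$ embedded in some ambient grid $G(p,q)$ and ``fill in'' the missing vertices of the grid, assigning them list sizes that render them harmless. This is exactly where the extra list size $5$ and the larger palette $\{1,\dots,5\}$ enter.

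First I would embed $\widetilde\Gamma_\Phi$ into the smallest grid $G(p,q)$ containing it, and let $f$ be the list-size function from Theorem~\ref{prop: 2,3-choos 4 col grids} on the vertices of $\widetilde\Gamma_\Phi$ (values in $\{2,3\}$, using palette $\{1,2,3\}$). For each vertex of $G(p,q)$ not belonging to $\widetilde\Gamma_\Phi$, I would set $f=5$. The key point is that a vertex in a grid has degree at most $4$, so assigning it a list of size $5$ makes it \emph{ideally} colorable in the sense of Corollary~\ref{claim: B}: whatever colors its (at most four) neighbors receive, at least one color in its size-$5$ list remains free. To make the two parts genuinely decouple, the filler vertices should use colors $\{4,5\}$ that never appear in the lists of $\widetilde\Gamma_\Phi$. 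Concretely, if each degree-$\leq 4$ filler vertex always includes $4$ and $5$ in its list, then in any list assignment those two ``private'' colors suffice to greedily color all filler vertices after $\widetilde\Gamma_\Phi$ is colored, since a vertex cannot be blocked on both $4$ and $5$ by neighbors that themselves only ever use $4,5$ among the private colors.

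The forward direction is then routine: if $\widetilde\Gamma_\Phi$ is $[f,3]$-choosable then $\mathcal I$ is satisfiable by Theorem~\ref{prop: 2,3-choos 4 col grids}, so I must argue the contrapositive cleanly. If $\mathcal I$ is satisfiable, then $\widetilde\Gamma_\Phi$ is $f$-choosable; given any $f$-list assignment on the full grid $G(p,q)$, I color the $\widetilde\Gamma_\Phi$-part using only colors from its lists and then extend greedily over the filler vertices, which is always possible because each filler vertex has five allowed colors but at most four neighbors. Conversely, if $\mathcal I$ is unsatisfiable, the infeasible assignment on $\widetilde\Gamma_\Phi$ remains infeasible inside $G(p,q)$ regardless of the filler lists, since the induced subgraph $\widetilde\Gamma_\Phi$ must still be properly list-colored. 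Hence $G(p,q)$ is $[f,k]$-choosable (for $k\geq 5$) if and only if $\mathcal I$ is satisfiable, and since $f$ takes values in $\{2,3,5\}$ this yields $\Pi_2^p$-completeness of $[\{2,3,5\},k]$-CH and of $\{2,3,5\}$-CH in grids.

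The main obstacle I anticipate is ensuring that the filler vertices are truly \emph{ideally} choosable in the combined instance, i.e.\ that the greedy extension over the filler region never gets stuck. The degree bound $4$ versus list size $5$ gives a one-color slack pointwise, which by Proposition~\ref{claim: C} (order the filler vertices arbitrarily and note $d^-\leq 4 < 5$) guarantees a valid greedy completion \emph{once the boundary with $\widetilde\Gamma_\Phi$ is fixed}; the only subtlety is that filler vertices adjacent to $\widetilde\Gamma_\Phi$ see already-colored neighbors, but since they still have at most four colored neighbors total and five available colors, Proposition~\ref{claim: C} applies verbatim. This is why list size $5$ (and palette size $5$) is exactly the right choice for grids, explaining the appearance of $\{2,3,5\}$ and $k\geq 5$ in the statement.
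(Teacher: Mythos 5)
Your proposal is correct and follows essentially the same route as the paper: embed the subgrid into an ambient grid, assign $5$-lists to the filler vertices, use the maximum-degree-$4$ bound to extend any list coloring of the subgrid greedily over the filler region, and invoke heredity of choosability for the converse. The only quibble is your remark about making filler lists contain the ``private'' colors $4$ and $5$ --- in a choosability instance the adversary picks the lists, so you cannot prescribe them --- but your subsequent degree-versus-list-size argument renders that tangent unnecessary.
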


\begin{proof}
Since both problems are in $\Pi_2^p$ we just need to show the completeness of $[\{2,3, 5\},k]$-CH for $k\geq 5$.
Starting from any instance  $({\cal S}, f_{\cal S})$, of $[\{2,3\},k]$-CH, where ${\cal S}$ is a subgrid, we embed in polynomial time ${\cal S}$ into a grid ${\cal G}$ and define $f_{\cal G}$ which extends $f_{\cal S}$ to $V({\cal G})$ by setting $f_{\cal G}(v)=5\ \forall v\in V({\cal G})\setminus V({\cal S})$. 
If ${\cal S}$ is $[f_{\cal S},k]$-choosable, then ${\cal G}$ is $[f_{\cal G},k]$-choosable: considering any list $k$-coloring of ${\cal S}$, we can  extend the coloring to vertices in $V({\cal G})\setminus V({\cal S})$. Indeed consider the vertices in $V({\cal G})\setminus V({\cal S})$ in any order;   since ${\cal G}$ is of maximum degree~4 and lists are of size~5 we can always find an available color in the list.

Conversely if  ${\cal G}$ is $[f_{\cal G},k]$-choosable, then its subgraph ${\cal S}$ is also $[f_{\cal S},k]$-choosable, which concludes the proof. 
\end{proof}

\subsection{3-choosability with bounded palette}\label{subsec: 3 choos}

Many results in choosability are concerned with subclasses of planar graphs (See, e.g.,~\cite{alta92,krato-tuza,bordeaux-3-col,thomassen}). Bipartite planar graphs are known to be 3-choosable~\cite{alta92}; this excludes the possibility to extend to 3-choosability hardness results known for $\{2,3\}$-choosability for these graphs. Triangle-free planar graphs are known to be 4-choosable~\cite{krato-tuza} and 3-colorable (reference (Grotzsch’s theorem, 1959)). In~\cite{gutner}, 3-CH is proved $\Pi_2^p$-complete in triangle-free planar graphs; however the proof involves 17 colors, which shows that $[3,17]$-CH is $\Pi_2^p$-complete in this class.  We will establish hardness results for $[3,k]$-CH in 3-colorable planar graphs for $k\geq 4$ and in triangle-free planar graphs for $k\geq 5$. Finally we settle the complexity of $[\ell,k]$-CH in bipartite graphs. 

The following technical lemma is a crucial instrument for deriving from Theorem~\ref{prop: 2,3-choos 4 col grids} hardness results for $[\ell,k]$-CH and in particular for $\ell=3$.  We remind that  criticality is defined in Section~\ref{sec:notations}.

\begin{lemma}\label{lem: ff'}
Let $H=(V_H,E_H)$ be a graph, $S_H\subset V_H$ a stable set in $H$ and $f_H: V_H \longrightarrow \mathbb{N}$ a function. We suppose:
\begin{enumerate}
\item\label{item: lemff'1} $H$ is  $([f_H,k], S_H)$-critical for some $k\geq \max\left (\max\limits_{v\in V_H}(f_H(v)); \max\limits_{v\in S_H}(f_H(v)+1)\right )$.
\item\label{item: lemff'2} Every list assignment $L_H$ with a palette of $k$ colors satisfying the following conditions is feasible:  $\forall v\in V_H\setminus S_H, |L_H(v)|=f_H(v)$ and  $\forall v \in S_H, L_H(v)=\{c\}$  for some color~$c$.
\end{enumerate}
Given a graph $G=(V,E)$, $f: V \longrightarrow \mathbb{N}$, $v_0\in V$, $\max_{v\in V}(f(v))\leq k$, $f(v_0)\leq k-1$, we define $G'=(V_H\cup V, E')$ obtained by adding $H$ to $G$ and linking every vertex in $S_H$ with $v_0$ and $f': V_H\cup V \longrightarrow \mathbb{N}$, defined by $f'(v)=f_H(v)+1\ \forall v\in S_H, f'(v_0)=f(v_0)+1, f'(v) =f_H(v),\ \forall v\in V_H\setminus S_H$ and $f'(v)=f(v)\ \forall v\in V\setminus \{v_0\}$.

Then $G'$ is $[f',k]$-choosable if and only if $G$ is $[f,k]$-choosable.
\end{lemma}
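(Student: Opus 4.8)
The plan is to prove the two implications separately, using the two halves of the hypotheses for opposite directions: the infeasible ``critical'' list assignment supplied by condition~\ref{item: lemff'1} drives the implication ``$G$ not choosable $\Rightarrow$ $G'$ not choosable'', while the bump-to-choosable half of criticality together with condition~\ref{item: lemff'2} drives ``$G$ choosable $\Rightarrow$ $G'$ choosable''. Throughout I write $L'$ for a generic $f'$-list assignment of $G'$ with colors in $\{1,\dots,k\}$, and I use that choosability is monotone in the list-size function and that $S_H$ is stable, so all of $S_H$ may legally receive one common color.

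For the reverse direction I would argue by contraposition. Assuming $G$ is not $[f,k]$-choosable, fix an infeasible $f$-list assignment $L_0$ on $G$, and let $L^*$ be the infeasible $f_H$-list assignment on $H$ given by criticality, with $|\bigcup_{v\in S_H}L^*(v)|\le k-1$. Pick a color $c^*\notin\bigcup_{v\in S_H}L^*(v)$; since $|L_0(v_0)|=f(v_0)\le k-1$, I can first permute the palette on the $H$-side so that also $c^*\notin L_0(v_0)$ (a color permutation preserves infeasibility of $L^*$). I then build $L'$ on $G'$ by keeping $L_0$ on $V\setminus\{v_0\}$, keeping $L^*$ on $V_H\setminus S_H$, setting $L'(v)=L^*(v)\cup\{c^*\}$ on $S_H$ and $L'(v_0)=L_0(v_0)\cup\{c^*\}$; the size bookkeeping matches $f'$ exactly. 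In any putative list coloring of $G'$ I split on the color of $v_0$: if $v_0$ gets $c^*$ then every neighbour in $S_H$ avoids $c^*$, so $H$ is colored from $L^*$, which is impossible; if $v_0$ gets a color of $L_0(v_0)$ then $G$ is colored from $L_0$, again impossible. Hence $L'$ witnesses that $G'$ is not $[f',k]$-choosable.

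For the forward direction, assume $G$ is $[f,k]$-choosable and fix any $f'$-list assignment $L'$. The key preliminary step is to control the color of $v_0$: let $A\subseteq L'(v_0)$ be the set of colors $c$ for which some $L'$-coloring of $G$ assigns $c$ to $v_0$. Deleting any single color from $L'(v_0)$ leaves an $f$-list assignment of $G$, which is colorable, so $A$ meets every $f(v_0)$-subset of $L'(v_0)$, which forces $|A|\ge 2$. Writing $C^*=\bigcap_{w\in S_H}L'(w)$, I split into two cases. If $A\not\subseteq C^*$, I color $v_0$ with some $c_0\in A\setminus C^*$; then some $w_0\in S_H$ has $c_0\notin L'(w_0)$, so after forbidding $c_0$ on $S_H$ the induced list assignment on $H$ has size $\ge f_H$ everywhere and size $f_H(w_0)+1$ at $w_0$, hence is feasible by the bump-choosability of $H$ at $w_0$ plus monotonicity, while $c_0\in A$ lets me color $G$. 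If instead $A\subseteq C^*$, I pick distinct $a_1,a_2\in A$ (possible as $|A|\ge2$), color all of $S_H$ with $a_1\in C^*$ and extend over $H$ using condition~\ref{item: lemff'2}, then color $v_0$ with $a_2\in A$ and extend over $G$; the edges between $v_0$ and $S_H$ are satisfied since $a_2\neq a_1$.

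I expect the genuine obstacle to be this forward direction, specifically the coordination of the single color of $v_0$ between the two sides of $G'$: choosability of $G$ only yields colorings for $f$-sized lists, not a coloring with $v_0$ pre-assigned, which is exactly why the auxiliary fact $|A|\ge2$ and the case distinction on $A$ versus $C^*$ are needed. Checking that each induced list assignment indeed has the right sizes, so that criticality/monotonicity and condition~\ref{item: lemff'2} apply verbatim and all values stay $\le k$ (via $k\ge\max_{v\in S_H}(f_H(v)+1)$ and $f(v_0)\le k-1$), is routine bookkeeping rather than a real difficulty.
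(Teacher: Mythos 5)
Your proof is correct and follows essentially the same strategy as the paper's: the same two-case analysis in the forward direction (one case resolved by hypothesis~\ref{item: lemff'2} after giving all of $S_H$ a common color, the other by the criticality of $H$ after deleting the color of $v_0$ from the $S_H$-lists), and the same obstruction list assignment $L'$ in the reverse direction, merely phrased contrapositively. The only cosmetic difference is that you organize the forward cases around the set $A$ of achievable colors at $v_0$ (with the observation $|A|\geq 2$) rather than simply on whether $\bigcap_{v\in S_H}L'(v)$ is empty, which is a slightly heavier but equally valid bookkeeping device.
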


\begin{proof}
Note that under the assumptions we have $\max (f')\leq k$. Suppose first that $G$ is $[f,k]$-choosable and consider an $f'$-list assignment $L'$ for $G'$. We consider two cases:

{\underline{Case 1.}} Suppose $\cap_{v\in S_H}L'(v)\neq \emptyset$. We color $S_H$ with a single color $c\in \cap_{v\in S_H}L'(v)$. Using assumption~\ref{item: lemff'2}, this coloring can be extended to  a feasible list coloring in $V_H$. Removing $c$ from $L'(v_0)$ or any color if $c\notin L(v_0)$, we can extend the list coloring to $V$ since $G$ is $[f,k]$-choosable.

{\underline{Case 2.}} If $\cap_{v\in S_H}L'(v)= \emptyset$, then we first consider a list coloring of $G$; let $c$ be the color of $v_0$ and remove it from lists in $S_H$. Since at least one list in $S_H$ did not change; by using the fact that $H$ is $([f_H,k], S_H)$-critical, we can find a list $k$-coloring of $V_H$ without using $c$ in $S_H$, which shows that $G'$ is $[f',k]$-choosable. 

Suppose now that 
$G'$ is $[f',k]$-choosable and consider an $f$-list assignment $L$ of $G$. By definition, since $H$ is $([f_H,k], S_H)$-critical, there is an infeasible $f_H$-list assignment $L_H$  such that $|\cup_{v\in S_H}L_H(v)|\leq k-1$. Let $c\in \{1, \ldots, k\}$ such that $c\notin \cup_{v\in S_H}L_H(v)$; w.l.o.g.,  since $f(v_0)\leq k-1$, by performing a circular permutation on colors we can suppose that $c\notin L(v_0)$. We then add $c$ in $L_H(v), v\in S_H$ and in $L(v_0)$ to get an $f'$-list assignment $L'$ for $G'$. $G'$ being $[f',k]$-choosable, there is a feasible list coloring and by property of the assignment $L_H$, at least one element in $S_H$ gets color $c$. Consequently the list coloring of $G'$ with respect to $L'$ defines a list coloring of $G$ with respect to $L$, which concludes the proof.  
\end{proof}

 We remind that an {\em odd hole} in a graph $G$ is an induced cycle $C_{2i+1}, i\geq 2$.

\begin{theorem}\label{prop: 3-choos 4 col}
$[3,4]$-CH is $\Pi_2^p$-complete in 3-colorable odd hole-free planar graphs of maximum degree~6.
\end{theorem}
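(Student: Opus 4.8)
The plan is to apply Lemma~\ref{lem: ff'} to transfer the $\Pi_2^p$-hardness of $[\{2,3\},k]$-CH on subgrids (Theorem~\ref{prop: 2,3-choos 4 col grids}) to $[3,4]$-CH on the required restricted planar class. The source instance is a subgrid $G$ where vertices of degree~2 carry $2$-lists and vertices of degree~4 carry $3$-lists, which is $[f,4]$-choosable (with $f\in\{2,3\}$) if and only if the underlying RPS instance is satisfiable. The obstacle to overcome is that $f$ takes the value~2 at many vertices, whereas the target problem requires \emph{all} lists to have size exactly~3. So the real work is to locally ``upgrade'' every $2$-list vertex $v_0$ to a $3$-list vertex without changing satisfiability, and Lemma~\ref{lem: ff'} is precisely the tool that does this: it attaches a gadget $H$ to $v_0$ so that raising $f(v_0)$ from $2$ to $3$ is compensated by the criticality of $H$.

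First I would construct a fixed gadget $(H, S_H, f_H)$ satisfying the two hypotheses of Lemma~\ref{lem: ff'} with $k=4$, $f_H\equiv 3$ on $V_H$, and $f_H\equiv 2$ (i.e.\ $f_H(v)+1=3$) on the stable attachment set $S_H$. Hypothesis~\ref{item: lemff'1} asks that $H$ be $([f_H,4],S_H)$-critical: there is an infeasible $f_H$-list assignment whose $S_H$-lists jointly use at most $3$ colors, yet bumping any single $S_H$-list to size one more makes $H$ choosable. Hypothesis~\ref{item: lemff'2} asks that once $S_H$ is monochromatically pre-colored, $H$ can always be completed. A natural candidate is a small $K_{2,4}$-type or $\theta$-type core built from Theorem~\ref{prop: 2-3-choos}/Theorem~\ref{prop: 2-4-choos}, whose failure of $[2,3]$-choosability supplies the obstruction; I would verify both properties by exhibiting the explicit infeasible assignment and checking the one-color-extension directly. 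Crucially $H$ must be \emph{planar}, \emph{$3$-colorable}, \emph{odd hole-free}, and attachable to a degree-$\le 4$ grid vertex so that the combined degree stays at most~$6$ and the whole graph remains planar.

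Next I would apply the lemma at every vertex $v_0$ of the subgrid $G$ that currently has $f(v_0)=2$: attach a private copy of $H$, link each vertex of $S_H$ to $v_0$, and set $f'(v_0)=f(v_0)+1=3$ while $f'$ on $S_H$ becomes $f_H+1=3$ and $f'\equiv 3$ on the rest of $H$. By Lemma~\ref{lem: ff'}, each such replacement preserves $[f,4]$-choosability, and iterating over all $2$-list vertices yields a graph $G'$ on which $f'\equiv 3$ everywhere; thus $G'$ is $[3,4]$-choosable iff $G$ is $[f,4]$-choosable iff the RPS instance is satisfiable, giving the reduction. I would then record that the construction is polynomial (each attachment is a fixed gadget) and that membership in $\Pi_2^p$ is inherited from the general choosability problem.

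The main obstacle I anticipate is the simultaneous bookkeeping of the three structural constraints on $G'$. I must argue that $G'$ is \emph{$3$-colorable}: $G$ itself, being a subgrid, is bipartite hence $2$-colorable, so I need each $H$ to be $3$-colorable in a way compatible with its attachment to $v_0$ (a proper $3$-coloring of the union exists because the $S_H$-to-$v_0$ edges only forbid one color at $S_H$). I must argue \emph{odd hole-freeness}: the grid part is bipartite so has no odd holes, $H$ must be designed with no odd hole, and no new odd \emph{induced} cycle can be created through the attachment edges — this is the delicate point, since linking all of $S_H$ to a single $v_0$ could close short odd cycles, so $S_H$ and the neighborhoods must be chosen to avoid induced odd cycles (e.g.\ by keeping $|S_H|$ small or inserting even-length padding). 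Finally the \emph{degree bound~6} forces $|S_H|$ plus the original degree of $v_0$ (at most~4) to stay $\le 6$, so $|S_H|\le 2$ at the original grid vertices — I would pick the gadget accordingly. Verifying these three properties against a single explicit gadget is where the care lies; the choosability equivalence itself is handed to us by Lemma~\ref{lem: ff'}.
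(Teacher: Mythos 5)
Your high-level strategy is exactly the paper's: reduce from Theorem~\ref{prop: 2,3-choos 4 col grids} by attaching, at every $2$-list vertex $v_0$ of the subgrid, a private copy of a gadget $H$ that is $([f_H,4],S_H)$-critical and satisfies hypothesis~\ref{item: lemff'2} of Lemma~\ref{lem: ff'}, thereby raising every list size to~$3$. However, the entire technical content of the theorem lives in the construction and verification of that gadget, and this is where your proposal has a genuine gap. You only gesture at ``a small $K_{2,4}$-type or $\theta$-type core built from Theorem~\ref{prop: 2-3-choos}, whose failure of $[2,3]$-choosability supplies the obstruction,'' but those obstructions consist of $2$-lists over a palette of $3$ colors on \emph{every} vertex, whereas here the infeasible assignment must use $3$-lists over a palette of $4$ on all of $V_H\setminus S_H$ and $2$-lists only on $S_H$. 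Bipartite cores are a poor fit for this regime (bipartite graphs are ideally $[3,4]$-choosable by Proposition~\ref{claim: A}), and the paper instead builds $H$ from two diamonds $K_{1,1,2}$ glued at a degree-$2$ vertex, with $S_H=\{X,Y,Z\}$ of size~$3$; the triangles are what make a $4$-color, mostly-$3$-list assignment infeasible on so few vertices, and the criticality and monochromatic-extension checks rest on the degree-choosability of the diamond (Claim~\ref{claim:diamond}). Since the target class permits triangles (it excludes odd \emph{holes}, i.e., induced odd cycles of length at least~$5$, not odd cycles per se), your implicit preference for bipartite or triangle-avoiding gadgets is an unnecessary and counterproductive constraint.

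There is also a concrete bookkeeping error: you bound the degree of $v_0$ in the subgrid by~$4$ and conclude $|S_H|\le 2$. But Theorem~\ref{prop: 2,3-choos 4 col grids} guarantees that every degree-$4$ vertex carries a $3$-list, so the $2$-list vertices have degree at most~$3$ in $G$, and $|S_H|=3$ yields maximum degree $3+3=6$ as required (inside $H$ the vertex $X$ reaches degree $4+1=5$). Restricting yourself to $|S_H|\le 2$ would make finding a gadget with the two properties of Lemma~\ref{lem: ff'} substantially harder, if possible at all. Your remaining verifications ($3$-colorability via coloring $G$ with two colors and $S_H$ with the third, odd-hole-freeness, polynomiality, membership in $\Pi_2^p$) are correct in outline but cannot be completed until a concrete gadget is fixed; in the paper's gadget the $5$- and $7$-cycles created by the attachment all have chords, which is the check you would need to perform.
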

\begin{proof}
Consider a $K_4$ on vertex set $\{v_1,v_2,v_3,v_4\}$ and let $D$ be the diamond (or $K_{1,1,2}$) obtained by removing edge $v_1v_4$. We denote by $f_D$ the function that associates to each vertex of $D$ its degree. We then have the following claim:

\begin{claim}\label{claim:diamond}
	$D$ is $f_D$-choosable.
\end{claim}
\begin{proof}(of Claim~\ref{claim:diamond})
Suppose any $f_D$-list assignment with $k$ colors ($k\geq 3$) and without loss of generality we suppose that $L(v_1)=\{1,2\}$ and $L(v_4)=\{a,b\}$. Suppose there is no list coloring with $v_4$ colored $a$. Then we claim that $L(v_2)\cup L(v_3)\setminus \{a,1,2\}=\emptyset$. In the opposite case indeed, $v_2$ or $v_3$ could be colored with a color $c\neq a,1,2$ and then the second vertex of degree~3 in $D$ could be colored since at least one color would be available in its list. Finally $v_1$ could be colored since at most one of its neighbors would be colored 1 or 2. Since $|L(v_2)\cup L(v_3)|\geq 3$ and $L(v_2)\cup L(v_3)\setminus \{a,1,2\}=\emptyset$, we have $a\neq 1,2$ and$L(v_2)\cup L(v_3)=\{a,1,2\}$. Consequently, $L(v_2)= L(v_3)=\{a,1,2\}$. In this case we can color $v_4$ with $b$, $v_2$ with $a\neq b$, $v_3$ with 1 or 2 different from $b$ and thus, at least one color is still available for coloring $v_1$, which concludes proof of the claim.
\end{proof}

\begin{figure}[h]
\begin{center}
\includegraphics[scale=0.4]{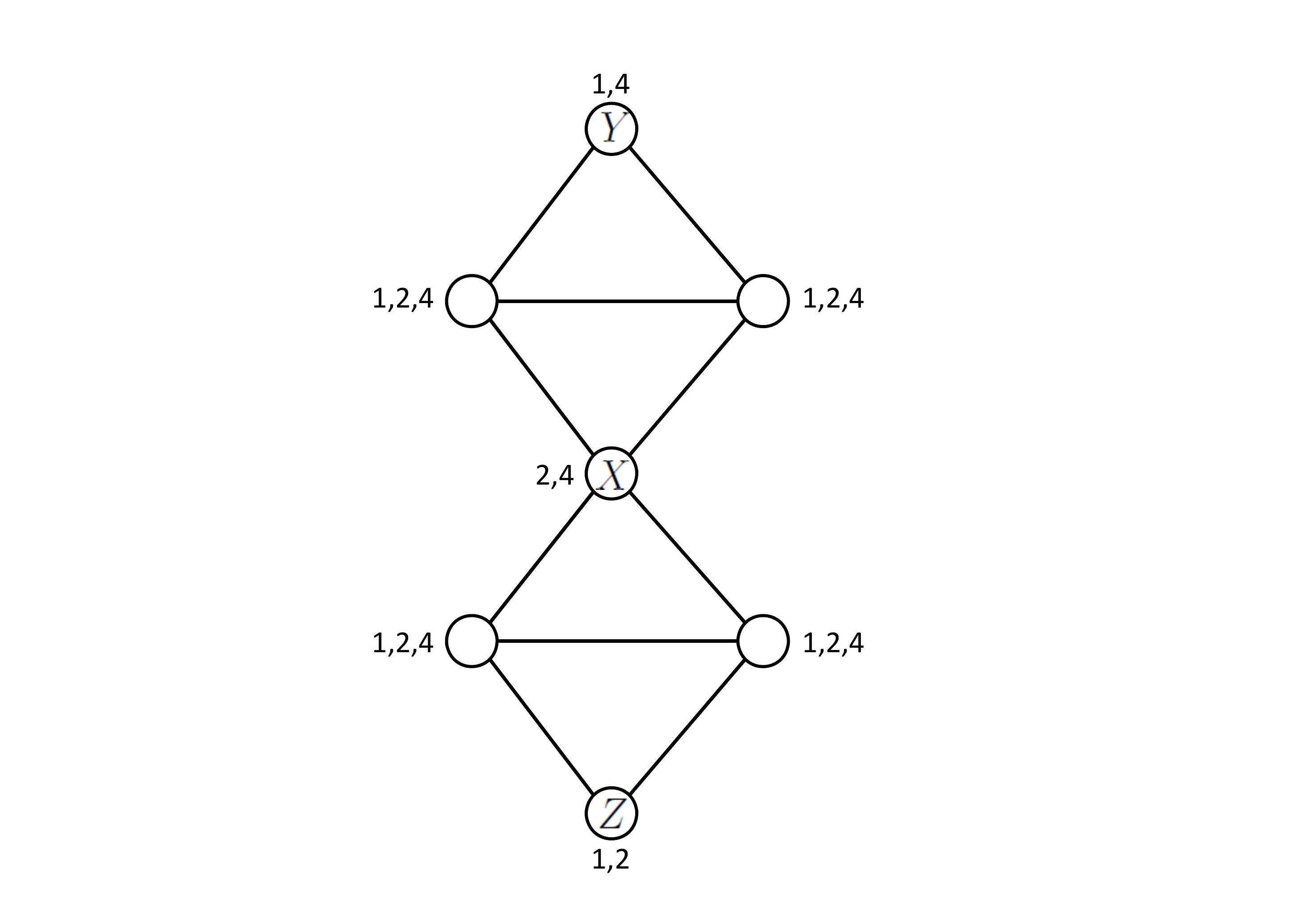}
\caption{The gadget $H$ with an infeasible $f_H$-list assignment.}
\label{fig: 3-choos 4 col}
\end{center}
\end{figure}

Consider the gadget $H$ in Figure~\ref{fig: 3-choos 4 col}; it is obtained from two diamonds by merging one vertex of degree~2 in each diamond into a new vertex $X$. We call $Y,Z$ the two remaining vertices of degree~2.
We apply Lemma~\ref{lem: ff'} to $H$ with $S_H=\{X,Y,Z\}$, $f_H(v)= 3, v\notin S_H$,  $f_H(v)=2, v\in S_H$ and $k=4$. To show that $H$ is  $([f_H,4], S_H)$-critical we consider the $f_H$-list assignment in Figure~\ref{fig: 3-choos 4 col}. Color 3 does not appear in $L_H(X)\cup L_H(Y)\cup L_H(Z)$ and it is straightforward to verify that the list assignment is infeasible. Suppose now that $X$ has a 3-list while $Y$ and $Z$ have a 2-list with colors in $\{1,2,3,4\}$. Suppose, without loss of generality, that $L_H(Y)= \{1,2\}$ and $L_H(X)= \{a,b,c\}$. Then, using Claim~\ref{claim:diamond} in the diamond $D_Y$  built on $X$ and $Y$ with $L(X)=\{a,b\}$ we know that at least one color, say $a$, can be used for $X$ in a feasible coloring of this diamond and then, using similarly Claim~\ref{claim:diamond} in $D_Y$ with $L(X)=\{b,c\}$, another color between $b,c$ for $X$, say $b$, allows to color $D_Y$. Assigning to $X$ the list $L(X)=\{a,b\}$ and applying  Claim~\ref{claim:diamond} in the diamond $D_Z$ built on $X$ and $Z$, it is possible to list color $D_Z$ using, for $X$, one of the two selected colors $a,b$ and by assumption on these colors it is possible to extend the list coloring to the whole gadget $H$.  Suppose now that $|L_H(Y)|=3$, while $|L_H(X)|=|L_H(Z)|=2$. Using Claim~\ref{claim:diamond} it is possible to find a feasible list coloring of the diamond $D_Z$ in $H$ and then the three last vertices of $H$ can be colored greedily, $Y$ being the last one. By symmetry, if $|L_H(Z)|=3$, while $|L_H(X)|=|L_H(Y)|=2$ it is also possible to list color $H$. This concludes that $H$ is  $([f_H,4], S_H)$-critical. To verify Assumption~\ref{item: lemff'2} of Lemma~\ref{lem: ff'} suppose that $X,Y,Z$ are colored with the same color $c$ and other vertices in $V_H\setminus\{X,Y,Z\}$ have 3-lists. Removing $c$ from their lists and removing $X,Y,Z$ from $H$ we get two independent edges (forming a $2K_2$) with lists of size at least~2. Since $2K_2$ is 2-choosable this list coloring of $X,Y,Z$ can be extended to $H$. So, Lemma~\ref{lem: ff'} applies.

Consider now a 
 subgrid $G$, instance of $[\{2,3\},4]$-CH and denote by $V^2, V^3$ the vertices of $G$ with lists of size 2 and 3, respectively. Using Theorem~\ref{prop: 2,3-choos 4 col grids} we can assume that $V^2\neq \emptyset$ and the maximum degree in $G$ of vertices in $V^2$ is~3. For every vertex $v\in V^2$, we add a gadget $H_v$ isomorphic to $H$  and link the related vertices $X_v,Y_v,Z_v$ to $v$. Let $\widetilde{G}$ be the resulting graph. $\widetilde{G}$ is planar and of maximum degree~6 (this is obtained from vertices in $V^2$ of degree~3 in $G$). Note that $\widetilde{G}$ is 3-colorable: color $G$ with colors 1,2 and, for every gadget $H_v$ color vertices $X,Y,Z$ with color 3 and the other vertices with colors 1,2. Moreover the only elementary odd cycles in $\widetilde{G}$ can be found in subgraphs induced by a vertex $v$ in $V^2$ and its related gadget $H_v$. Such a graph, isomorphic to a gadget $H$ with an additional vertex linked to $X,Y,Z$, has no hole: the only odd cycles of length 5 or 7 have chords. This shows that  $\widetilde{G}$ is odd hole-free. 
 
 Using Lemma~\ref{lem: ff'} for every $v\in V^2$, we show that
$\widetilde{G}$ is $[3,4]$-choosable if and only if $G$ is $[\{2,3\},4]$-choosable. Using Theorem~\ref{prop: 2,3-choos 4 col grids} we conclude the proof.
\end{proof}

Note that graph $W_3$ in~\cite{gutner} is 3-colorable and  $([3,5],\{v_0\})$-critical for a vertex $v_0$ of degree~6. Using this graph as $H$ is the previous proof allows to derive hardness of $[3,5]$-CH in 3-colorable planar graphs of maximum degree~7. Our proof allows to reduce the palette to~4, which is optimal since these graphs are $[3,3]$-choosable.

We remind that $[3,17]$-CH is known to be $\Pi_2^p$-complete in triangle-free planar graphs~\cite{gutner}. We propose a construction allowing to reduce the palette size to~5. This leaves open the status of $[3,4]$-choosability in this class.
\begin{theorem}\label{prop: 3-choos 5 col planar}
$[3,5]$-CH is $\Pi_2^p$-complete in triangle-free planar graphs of maximum degree at most~12.
\end{theorem}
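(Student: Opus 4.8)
Membership in $\Pi_2^p$ is immediate, as $[3,5]$-CH is a restriction of the general choosability problem, exactly as in the preceding theorems. For completeness I would reduce from $[\{2,3\},5]$-CH on subgrids, which is $\Pi_2^p$-complete by Theorem~\ref{prop: 2,3-choos 4 col grids} (taking $k=5\geq 3$); by that theorem one may assume that $2$-list vertices exist and each has degree at most~$3$ (degree-$4$ vertices carry $3$-lists). The whole argument is the triangle-free counterpart of the proof of Theorem~\ref{prop: 3-choos 4 col}: starting from a subgrid instance $G$ -- which, being bipartite, is already planar and triangle-free -- I would attach to every $2$-list vertex a triangle-free forcing gadget and invoke Lemma~\ref{lem: ff'} to upgrade its $2$-list to a $3$-list, so that the resulting graph $\widetilde G$ carries only $3$-lists and is $[3,5]$-choosable if and only if $G$ is $[\{2,3\},5]$-choosable.

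The heart of the argument, and the step I expect to be the main obstacle, is to build a triangle-free planar gadget $H$ playing, for palette~$5$, the role that the two-diamond gadget plays for palette~$4$ in Theorem~\ref{prop: 3-choos 4 col}. Concretely I would exhibit a triangle-free planar $H=(V_H,E_H)$ with a stable set $S_H=\{X,Y,Z\}$ and $f_H$ equal to~$2$ on $S_H$ and to~$3$ elsewhere, and verify the two hypotheses of Lemma~\ref{lem: ff'} with $k=5$: first, that $H$ is $([f_H,5],S_H)$-critical, i.e.\ there is an infeasible $f_H$-list assignment whose lists on $S_H$ jointly use at most~$4$ colors while enlarging the list of any single vertex of $S_H$ restores choosability; second, that whenever $S_H$ is colored with one common color~$c$ and the remaining vertices keep $3$-lists, a list coloring exists. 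Since the diamond $K_{1,1,2}$ used for $k=4$ contains a triangle, it must be replaced by a triangle-free forcing block assembled from short even cycles and small complete bipartite pieces (such as $K_{2,3}=\theta_{2,2,2}$); the reason the palette must grow from~$4$ to~$5$ is precisely that a triangle-free block cannot enforce the color propagation encoded by Claim~\ref{claim:diamond} with only four colors, so one extra color is needed to keep the forcing rigid while staying triangle-free. The second hypothesis would be checked as in the $[3,4]$ case: after deleting the common color~$c$, what remains is a union of paths and even structures with lists of size at least~$2$, hence colorable by $2$-choosability (Theorem~\ref{prop: 2-4-choos}).

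With such an $H$ at hand I would, for every $2$-list vertex $v$ of $G$, add a disjoint copy $H_v$ and join $X_v,Y_v,Z_v$ to $v$, applying Lemma~\ref{lem: ff'} once per vertex (with $v_0=v$, $f(v_0)=2$, $f'(v_0)=3$, the other hypotheses $\max f\le 5$ and $f(v_0)\le k-1=4$ being trivially met). Each application preserves the choosability status, so the final graph $\widetilde G$, now carrying only $3$-lists, is $[3,5]$-choosable if and only if $G$ is $[\{2,3\},5]$-choosable. It then remains to check the structural guarantees: $\widetilde G$ is planar because each $H_v$ is planar with $X_v,Y_v,Z_v$ placeable on its outer face and $v$ has bounded degree in the subgrid; it is triangle-free because $G$ is bipartite, every $H_v$ is triangle-free, and $\{X_v,Y_v,Z_v\}$ is stable and shares no further neighbor with $v$; and, since each $2$-list vertex of $G$ (of degree at most~$3$) gains at most three incident edges and thus reaches degree at most~$6$, the overall maximum degree of $\widetilde G$ equals the maximum internal degree of the gadget, which the construction keeps at most~$12$.

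Combining the equivalence furnished by the repeated use of Lemma~\ref{lem: ff'} with the $\Pi_2^p$-hardness of $[\{2,3\},5]$-CH on subgrids (Theorem~\ref{prop: 2,3-choos 4 col grids}) then yields $\Pi_2^p$-completeness of $[3,5]$-CH on triangle-free planar graphs of maximum degree at most~$12$. The delicate part is the explicit design and verification of the triangle-free critical gadget $H$: one must simultaneously secure the infeasible ``at most four colors on $S_H$'' assignment, the single-vertex criticality, and the monochromatic-extension hypothesis, all while controlling planarity and keeping the degree down to the stated bound~$12$.
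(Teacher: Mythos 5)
Your proposal correctly identifies the architecture of the proof --- membership in $\Pi_2^p$, reduction from $[\{2,3\},5]$-CH on subgrids via Theorem~\ref{prop: 2,3-choos 4 col grids}, attachment of a critical gadget to each $2$-list vertex, and repeated application of Lemma~\ref{lem: ff'} --- and this is exactly the skeleton the paper uses. But you explicitly defer the one step that carries all the mathematical content: the construction and verification of the triangle-free planar gadget. That is a genuine gap, not a detail. Without an explicit gadget satisfying both hypotheses of Lemma~\ref{lem: ff'} for $k=5$, the reduction does not exist, and the verification of criticality is where essentially all of the paper's effort goes.

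Moreover, the shape you guess for the gadget points in the wrong direction. You propose assembling it from ``short even cycles and small complete bipartite pieces (such as $K_{2,3}$)'' with a three-element stable set $S_H=\{X,Y,Z\}$. Even cycles and $K_{2,3}$ are $2$-choosable and too flexible to force colors; the forcing device the paper actually uses is the \emph{odd} cycle $C_5$, whose only infeasible $2$-list assignment consists of five identical $2$-lists, together with the module of two $C_5$'s joined by a perfect matching (borrowed from Gutner's $W_2$). The paper's gadget ${\cal G}$ glues three copies of a sub-gadget, each made of four $C_5$'s (two sharing an edge at a central vertex $A$, two more attached by matchings of size~$5$), so that each copy forbids one color of $L(A)$ and the three together make the assignment infeasible; the stable set $S$ consists of \emph{nine} vertices on the outer face, with lists jointly using only four of the five colors. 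This is also why the degree bound is~$12$: a $2$-list vertex of degree~$3$ in the subgrid gains nine new neighbors (one per vertex of $S$), and the central vertex $A$ itself has degree~$12$. Your accounting (``gains at most three incident edges and thus reaches degree at most~$6$'') presupposes $|S_H|=3$ and would not explain the stated bound. Finally, your heuristic that the palette must grow to~$5$ because no triangle-free gadget can work with four colors is not something the paper establishes --- the existence of a triangle-free planar graph that is not $[3,4]$-choosable, and the complexity of $[3,4]$-CH in this class, are left open there.
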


\begin{proof}

\begin{figure}[h]
\begin{center}
\includegraphics[scale=0.4]{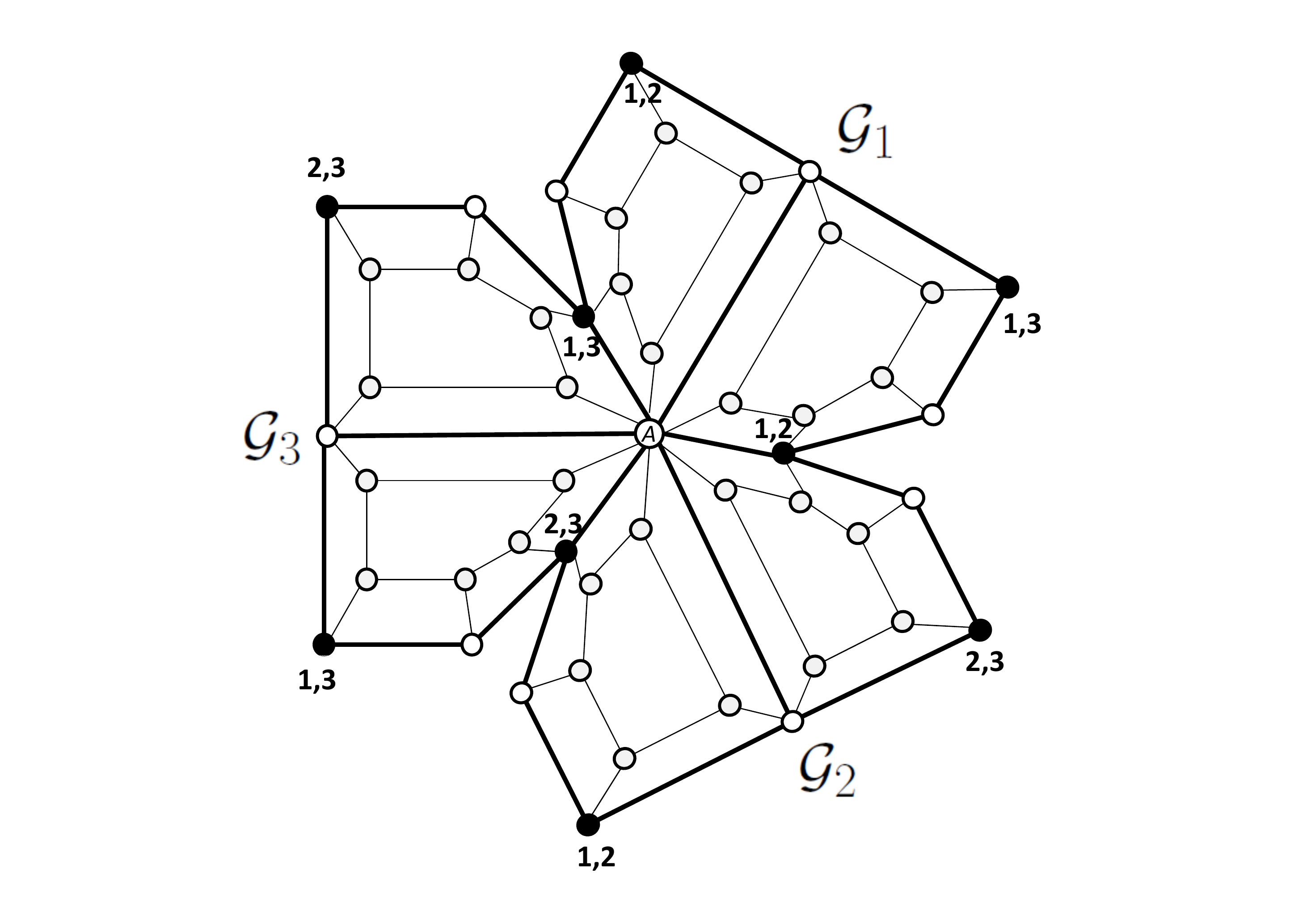}
\caption{The gadget ${\cal G}$.}
\label{fig: 3-5-fig-gen}
\end{center}
\end{figure}

\begin{figure}[h]
\begin{center}
\includegraphics[scale=0.4]{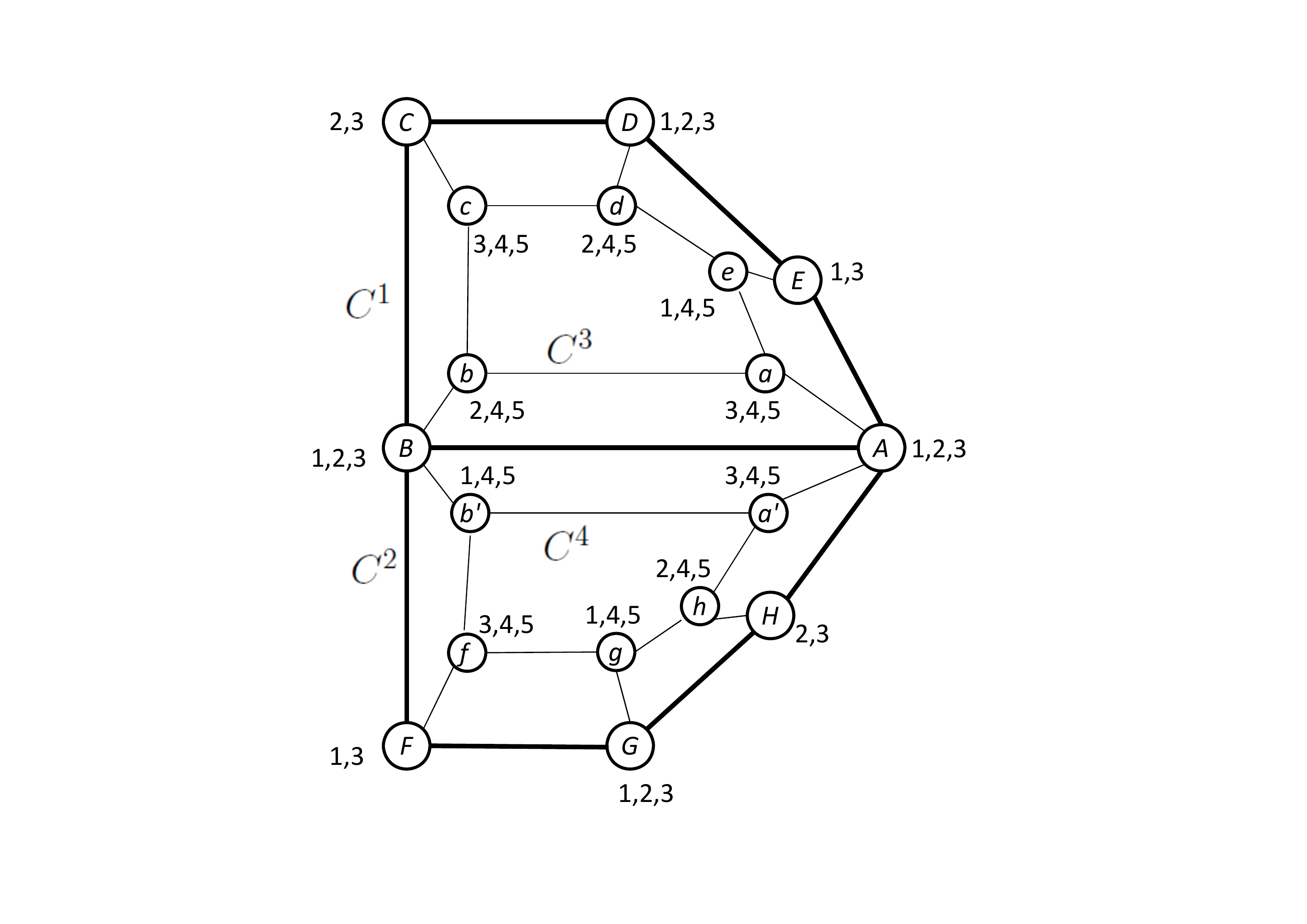}
\caption{The gadget ${\cal G}_3$ with  $f_3$-list assignment forbidding color 3 for $A$.}
\label{fig: 3-5-fig-part}
\end{center}
\end{figure}

Here, we consider the gadget ${\cal G}$ in Figure~\ref{fig: 3-5-fig-gen} with central vertex $A$, of degree 12 and all other vertices of degree at most~5. It is made of thee isomorphic gadgets ${\cal G}_1, {\cal G}_2, {\cal G}_3$ pairwise glued by identifying two vertices. Figure~\ref{fig: 3-5-fig-part} represents ${\cal G}_3$ with a list assignment making color 3 forbidden at vertex $A$.
${\cal G}_3$ is made of two $C_5$'s, $C^1= (A,B,C,D,E)$ and $C^2=(A,B,F,G,H)$, glued by the edge $AB$ with two other $C_5$'s, namely $C^3=(a,b,c,d,e)$ linked to $C^1$ by a matching of size 5 and $C^4=(a',b',f,g,h)$ linked to $C^2$ by a matching of size 5. We define a function $f_3: V({{\cal G}_3})\rightarrow \{2,3\}$ associating 2 to $C,E,F,H$ and 3 to all other vertices. $f_3$ can be immediately extended to $V({\cal G})$ and called $f$: ${\cal G}$ is planar and vertices on the external face are alternately associated with 2 and 3 with vertices in two different ${\cal G}_i$ associated with~2. They form a set  $S$ which corresponds to black vertices in Figure~\ref{fig: 3-5-fig-gen}. Note that $|S|=9$.  

In what follows, we show that ${\cal G}$ is $([f,5], S)$-critical. It is straightforward to verify that with the $f_3$-list assignment for ${\cal G}_3$  in Figure~\ref{fig: 3-5-fig-part}, color 3 is forbidden for vertex $A$. Hence, by a permutation of colors we can extend this list assignment to ${\cal G}$ such that color 2 will be forbidden for $A$ (using lists in ${\cal G}_2$) as well as color 1 (using lists in ${\cal G}_1$). Figure~\ref{fig: 3-5-fig-gen} shows the related lists for vertices in $S$. 
We now need to show that changing $f(s_0)$ from 2 to 3,  for one vertex $s_0$ in $S$  and denoting by $f'$ the new function, ${\cal G}$ becomes $[f',5]$-choosable. Without loss of generality we can consider that $s_0$ is a vertex of ${\cal G}_3$ and moreover, by symmetry, that it is either $F$ or $H$. 

We first need to point out some choosability properties of a $C_5$. It is $\{2,3\}$-choosable if at least one vertex has a 3-list. If all other vertices have 2-lists, then it may happen that only one single list coloring is possible. If three consecutive vertices have their color fixed and the two other vertices have 3-lists of possible colors, then there are at least two different list colorings of the whole $C_5$. Finally, the $C_5$ is not 2-choosable but the only infeasible 2-list assignment consists of five identical 2-lists. Consider now two $C_5$'s say $C^\alpha$ and $C^\beta$ linked by a matching of size~5 (say $C^\beta$ inside $C^\alpha$), assign 3-lists to vertices in $C^\beta$ and $\{2,3\}-lists$ to vertices in $C^\alpha$. If $C^\alpha$ has two different list colorings, then it will be possible to extend it on vertices of $C^\beta$. Consider indeed a first list coloring of $C^\alpha$ that cannot be extended on $C^\beta$. Necessarily, after removing from the list of every vertex $v$ of $C^\beta$ the color of its unique neighbor in $C^\alpha$, all vertices in $C^\beta$ get the same list with two colors $c_1$ and $c_2$ that are not used in the coloring of $C^\alpha$. Since at least three different colors were needed to color $C^\alpha$, $c_1$ and $c_2$ are the only colors appearing in all 3-lists in $C^\beta$. Hence, considering the second possible list coloring of $C^\alpha$,  and removing from each list in $C^\beta$ the color of its neighbor in $C^\alpha$ (if it is in its list), at least two colors are still available for every vertex in $C^\beta$ and we cannot have five identical 2-lists. Thus, $C^\beta$ can be colored.

Consider now  ${\cal G}_3$ with any $f_3$-list assignment $L_3$, let $L_3(A)=\{i,j,k\}$. If there is no list coloring assigning color $i$ to $A$, then we show that there is at least one list coloring with $A$ colored $j$ and one with $A$ colored $k$. Moreover in each case, the color of $E$ can be arbitrarily chosen in $L_3(E)$ provided it is compatible with the color in $A$. Indeed, after coloring $A$ with $i$, there are still at least two possibilities for coloring  $C^1$ and only one blocks the coloring of the internal $C^3$. We call $col^1$ this coloring of $C^1$. This imposes the color of $B$ and then there is only one way to color $C^2$, called $col^2$, which blocks the coloring of the internal $C^4$. Hence, choosing the color of $A$ among $\{k,j\}$ and the color of $E$, compatible with the color of $A$ leads to an alternative way to color $C^1$ and $C^2$, different from $col^1$ and $col^2$ and consequently in each case the list coloring of the whole graph ${\cal G}_3$ can be completed. 

By a similar argument, we also see that if $i\notin L_3(H)$ then there is a list coloring of ${\cal G}_3$ with $A$ colored $i$ and $E$ colored with an arbitrary color of $L_3(E)$ not equal to $i$.

We are now ready to show that ${\cal G}$ is $[f',5]$-choosable. Consider any $f'$-list assignment, choose for $A$ a color that is neither forbidden in ${\cal G}_1$ nor in ${\cal G}_2$. 

{\em {\underline{Case 1:}}} $s_0=F$. Using the previous remarks, it is then possible to extend the coloring of $A$ in a list coloring of ${\cal G}_1\cup {\cal G}_2$. We then have for ${\cal G}_3$ the color of $A, E$ and $H$ fixed. There are still at least two possibilities for coloring $C^1$ with at least one that can be extended to $C^3$. Then, even if the color of $B$ is fixed, both $F$ and $G$ having 3-lists, there are at least two possibilities for coloring $C^2$ with at least one that can be extended to $C^4$.

{\em {\underline{Case 2:}}} $s_0=H$. Remove the color of $A$ from $L'(H)$ and apply the same arguments as previously.
This concludes that ${\cal G}$ is $([f,5], S)$-critical. 

The end of the proof is now similar to the proof of Theorem~\ref{prop: 3-choos 4 col}.
Consider a 
 subgrid $G$ instance of $[\{2,3\},5]$-CH, and denote by $V^2, V^3$ the vertices of $G$ with lists of size 2 and 3, respectively. Using Theorem~\ref{prop: 2,3-choos 4 col grids} we assume that $V^2\neq \emptyset$ and the maximum degree of vertices in $V^2$ is~3.
 
  For every vertex in $v\in V^2$, we add a gadget ${\cal G}_v$ isomorphic to ${\cal G}$  and link the related vertices in $S$ to $v$. Let $\widetilde{G}$ be the resulting graph. It is planar, triangle-free and of maximum degree at most 12 since $|S|=9$ and the maximum degree in $G$ of vertices in $V^2$ is~3. In $\widetilde{G}$, vertices of degree~12 are vertices in $V^2$ of degree~3 in $G$ as well as central vertices $A$ in gadgets~${\cal G}$.
  
   Using Lemma~\ref{lem: ff'} for every $v\in V^2$, we show that
$\widetilde{G}$ is $[3,5]$-choosable if and only if $G$ is $[\{2,3\},5]$-choosable. Using Theorem~\ref{prop: 2,3-choos 4 col grids} we conclude the proof. 
\end{proof}

\begin{remark}
In~\cite{gutner}, by using an auxiliary graph $W_2$, a triangle-free planar graph with 164 vertices which is not 3-choosable was constructed. A crucial module in this construction was the occurrence of two $C_5$'s linked by a matching. Our gadget ${\cal G}$ is based on the same module but the modules are combined in a different way in order to limit the number of colors used to $k=5$. This allows us to exhibit a triangle-free planar graph with only 148 vertices which is not $[3,5]$-choosable and hence not 3-choosable: take 3 copies of ${\cal G}$ and link a new vertex to all black vertices. 

An open question is the existence of a triangle-free planar graph that is not $[3,4]$-choosable.
\end{remark}

We conclude this part by a complete characterization  of the complexity of $[\ell,k]$-CH for bipartite graphs. 
For any $\ell\geq 3$, $\ell$-CH is known to be $\Pi_2^p$-complete in bipartite graphs~\cite{gutner-tarsi}.
More precisely, this result is obtained by reducing $[\{2,3\},k]$-CH to $[3,k+3]$-CH and then, for any $\ell\geq 3$, $[\ell,k]$-CH to $[\ell,k+\ell+1]$-CH. Using Theorem~\ref{prop: 2,3-choos 4 col grids}, it shows that $[\ell,\frac{\ell(\ell+1)}{2}]$-CH is  $\Pi_2^p$-complete in bipartite graphs. 

As pointed out in~\cite{cogis}, every bipartite graph is $[3,4]$-choosable and moreover for $k\leq 2(\ell -1)$ every bipartite graph is $[\ell,k]$-choosable. The problem turns out to be hard in bipartite graphs with a fifth color. More generally we have:

\begin{proposition}\label{prop: l-choos k col bip}
$[\ell,k]$-CH in bipartite graphs is:\\
(1) trivial if $2\leq k<2\ell-1$ (always choosable)\\
(2) polynomial for $k=2$ and $\ell \geq 3$,\\
(3)  $\Pi_2^p$-complete for $\ell\geq 3$ and $k\geq 2\ell-1$, this holds in particular for $[3,5]$-CH. 
\end{proposition}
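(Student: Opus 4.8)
The plan is to prove the three parts of Proposition~\ref{prop: l-choos k col bip} separately, reusing the machinery already assembled in the excerpt. For part~(1), the claim is that every bipartite graph is $[\ell,k]$-choosable whenever $2\le k<2\ell-1$, i.e.\ $k\le 2\ell-2=2(\ell-1)$. First I would fix a bipartite graph $G=(B\cup W,E)$ and an arbitrary $\ell$-list assignment $L$ with colors in $\{1,\dots,k\}$. The key combinatorial observation is that since $k\le 2(\ell-1)$, any two lists of size $\ell$ drawn from a palette of size $k$ must share a color: two sets of size $\ell$ in a universe of size $k$ have intersection at least $2\ell-k\ge 2$. More usefully, one can try the greedy strategy of Proposition~\ref{claim: A}: pick a color and color a maximal monochromatic part of $B$, then argue that the remaining vertices of $B$ still admit a common color, and that the small palette forces $W$ to be colorable. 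The cleanest route is probably to show directly that one can color $B$ using few colors so that each $w\in W$ still sees an available color in $L(w)$; since $|L(w)|=\ell$ and $w$ can be blocked only by colors used on $B\cap\Gamma(w)$, the bound $k\le 2(\ell-1)$ is exactly what guarantees feasibility. This part generalizes the $[3,4]$-choosability of Proposition~\ref{claim: A}, attributed here to~\cite{cogis}, and I would cite that source for the precise argument.

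For part~(2), the case $k=2$, $\ell\ge 3$, note that $[\ell,2]$-choosability with $\ell\ge 3$ is vacuous in the sense that no $\ell$-list with $\ell\ge 3$ fits in a palette of size $2$; but the standing convention $k\ge\max(f(v))$ means $k=2$ forces $\ell\le 2$, so the only meaningful reading is $[2,2]$-choosability, which equals $2$-colorability (stated in Section~\ref{sec:notations}). Bipartiteness is exactly $2$-colorability, so every bipartite graph is $[2,2]$-choosable and the decision problem is trivial; I would remark that under our conventions the statement for $\ell\ge 3$ with $k=2$ reduces to testing bipartiteness, which is polynomial, and I would state this carefully to avoid a degenerate reading of the hypothesis.

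The substance is part~(3): $\Pi_2^p$-completeness of $[\ell,k]$-CH in bipartite graphs for $\ell\ge 3$ and $k\ge 2\ell-1$. Membership in $\Pi_2^p$ is inherited from the general choosability problem as noted in Section~\ref{sec:hardness}. For hardness I would follow the reduction chain sketched in the text just before the proposition: start from $[\{2,3\},3]$-CH in bipartite graphs, which is NP-complete by Proposition~\ref{prop: 2,3-choos 3 col}(i), or better from the $\Pi_2^p$-complete problem $[\{2,3\},k]$-CH in bipartite graphs guaranteed by Theorem~\ref{prop: 2,3-choos 4 col grids} (subgrids are bipartite). Then I would invoke the reduction of~\cite{gutner-tarsi} that lifts $[\{2,3\},k]$-CH to $[3,k+3]$-CH, and iterates $[\ell,k]$-CH to $[\ell+1,k+\ell+1]$-CH; applied to the base palette size $3$ this produces $\Pi_2^p$-completeness of $[3,6]$-CH and more generally $[\ell,\ell(\ell+1)/2]$-CH. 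To reach the claimed optimal threshold $k=2\ell-1$ rather than the quadratic $\ell(\ell+1)/2$, the better tool is Lemma~\ref{lem: ff'}: I would construct, for each occurrence, a bipartite critical gadget $H$ that is $([f_H,k],S_H)$-critical with the palette-union bound $|\cup_{v\in S_H}L_H(v)|\le k-1$, attaching it to the degree-$2$ vertices of a $[\{2,3\},\cdot]$-hard subgrid instance so as to promote $2$-lists to $3$-lists while preserving bipartiteness and pushing the palette down to exactly $2\ell-1$. The main obstacle is exactly this gadget design: one must exhibit a \emph{bipartite} critical gadget meeting hypotheses~\ref{item: lemff'1} and~\ref{item: lemff'2} of Lemma~\ref{lem: ff'} at palette size $2\ell-1$ (rather than the non-bipartite diamond-based $H$ used for Theorem~\ref{prop: 3-choos 4 col}), and to check that $2\ell-1$ is tight by invoking part~(1), which shows no hardness can occur below $2\ell-1$ since all bipartite graphs are choosable there. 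Once the gadget is in place, Lemma~\ref{lem: ff'} gives the biconditional between choosability of the augmented graph and the original $[\{2,3\},\cdot]$ instance, and the reduction is polynomial by construction, completing the hardness proof.
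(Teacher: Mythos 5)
Your overall architecture for part~(3) matches the paper's: reduce from a $\Pi_2^p$-complete bipartite $[\{2,3\},k]$-CH instance and use Lemma~\ref{lem: ff'} with a bipartite critical gadget to promote list sizes while keeping the palette at $k=2\ell-1$. But you stop exactly where the proof actually lives. You write that ``the main obstacle is exactly this gadget design'' and then do not design it; without an explicit gadget satisfying both hypotheses of Lemma~\ref{lem: ff'} there is no proof. The paper's gadget is $H=K_{\binom{2\ell-2}{\ell},\binom{2\ell-2}{\ell-1}}$ with $S_H=W$ (the larger side): criticality is witnessed by giving the $B$-vertices all $\ell$-subsets of $\{1,\dots,2\ell-2\}$ and the $W$-vertices all $(\ell-1)$-subsets, so that $B$ needs at least $\ell-1$ colors and some $W$-vertex is blocked, while the union of $W$-lists misses color $2\ell-1$; and enlarging any single $W$-list to size $\ell$ restores choosability by a counting argument ($\binom{2\ell-1}{\ell-1}-\binom{2\ell-2}{\ell}=\binom{2\ell-2}{\ell-1}$ candidate $(\ell-1)$-sets can color $B$, one of which avoids being a forbidden $W$-list). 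None of this is recoverable from your text. A second, smaller gap: you only describe attaching gadgets to the degree-$2$ vertices to promote $2$-lists to $3$-lists, which at best handles $\ell=3$; for general $\ell$ the paper iterates the construction $(\ell-f(v_0))$ times at \emph{every} vertex $v_0$ so that all lists reach size $\ell$, and you would need to say this (and check bipartiteness is preserved at each step).

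Two further remarks. For part~(1) your counting sketch can be made to work (any $(\ell-1)$-subset $S$ of the palette meets every $\ell$-list when $k\le 2\ell-2$, so color $B$ inside $S$ and $W$ outside $S$), and the paper simply cites~\cite{cogis}, so that part is acceptable. For part~(2) you reinterpret the hypothesis as degenerate; the paper's proof instead derives~(2) from Theorem~\ref{prop: 2-3-choos}, i.e.\ the intended content is the polynomial characterization of $[2,3]$-choosability (the case $\ell=2$, $k\ge 3$), so your reading, while defensible against the literal statement, does not reconstruct what the proposition is actually asserting there.
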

\begin{proof}
(1) is already shown in~\cite{cogis} (note that for $k =2$ and $\ell =2$ it just states that the graph is bipartite) and (2) is a consequence of Theorem~\ref{prop: 2-3-choos}.

We only need to prove (3).
Membership to $\Pi_2^p$ is already established.  
We first remind that instances of $[\ell,k]$-CH are also instances of $[\ell,k']$-CH for $k'\geq k$ and consequently if $[\ell,k]$-CH is $\Pi_2^p$-complete, so does $[\ell,k']$-CH for $k'\geq k$. So, to prove (3) we assume $k=2\ell-1$ and denote by ${\cal K}$ the palette.

We first claim that, for any $\ell\geq 3$ and every $k\geq 2\ell-1$, the gadget $H=(B\cup W, E)=K_{\binom{2\ell-2}{\ell},\binom{2\ell-2}{\ell-1}}$ is $([(\ell,\ell-1),2\ell-1],W)$-critical: suppose indeed that $\ell$-lists of  vertices in $B$ describe all sets of $\ell$ colors among $\{1, \ldots, 2\ell-2\}$, and $(\ell-1)$-lists of vertices in $W$ describe all sets of $(\ell-1)$ colors among $\{1, \ldots, 2\ell-2\}$. Then at least $(\ell-1)$ colors are needed to color vertices in $B$ and consequently at least one $W$-vertex cannot be list colored. 

Moreover suppose that all vertices in $B$ are assigned to $\ell$-lists with colors in $\{1, \ldots, 2\ell-1\}$, while vertices of $W$ but one are assigned to $(\ell-1)$-lists and the last one has an $\ell$-list.  We will color $B$ with only $(\ell-1)$ colors. A set $S$ of $(\ell-1)$ colors (called a $(\ell-1)$-set) cannot color $B$ if and only if at least one of the lists of vertices in $B$ is ${\cal K}\setminus S$. So, at most $\binom{2\ell-2}{\ell}$ $(l-1)$-sets of colors do not allow to color all vertices in $B$ and consequently at least $\binom{2\ell-1}{\ell-1} -\binom{2\ell-2}{\ell}=\binom{2\ell-1}{\ell} -\binom{2\ell-2}{\ell}=\binom{2\ell-2}{\ell-1}$ $(l-1)$-sets of colors allow to color $B$. We choose one of them that is not one of the $\binom{2\ell-2}{\ell-1}-1$ $(\ell-1)$-lists of vertices in $W$. Now, $B$ being colored with these $(\ell-1)$ colors, it is possible to extend the list coloring to $W$: vertices with a $(\ell-1)$-list have at least one available color and the last vertex with a $\ell$-list has as well at least one available color.

We will apply Lemma~\ref{lem: ff'} with $H=(B\cup W, E)=K_{\binom{2\ell-2}{\ell},\binom{2\ell-2}{\ell-1}}$ and $S_H=W$. The  first assumption has just been stated. The second assumption also trivially applies: if we color $W$ with a single color, then ($\ell\geq 2$) we always can complete the list coloring to $B$. 

Consider now a bipartite graph $G=(V,E)$ instance of $\{2,3\}$-CH and call $f_G$ the function, with values 2 or 3, assigning to every vertex a list size. For any vertex $v_0$ we call $G'$ the graph obtained from $G$ by adding a copy of $H$ and connecting $v_0$ to all vertices in $W$. $G'$ is bipartite since $G$ is bipartite. We extend $f_G$ to $G'$ by setting $f_G(v)=f'_{G'}(v), v\in V,  v\neq v_0$,  $f'_{G'}(v_0)= f_G(v_0)+1$ and $f'_{G'}(v)=\ell$ for $v\in B\cup W$. Using Lemma~\ref{lem: ff'} we know that $G$ is $[f,k]$-choosable if and only if $G'$ is $[f',k]$-choosable. 
Repeating $(\ell-f(v_0))$-times this construction for every vertex $v_0\in V$ we construct in polynomial time a bipartite graph $\widetilde{G}$ that is $[\ell,k]$-choosable if and only if $G$ is $[f,k]$-choosable. 
Since $\{2,3\}$-CH is $\Pi_2^p$-complete in bipartite graphs~\cite{gutner} we conclude that $[\ell,k]$-CH  
is $\Pi_2^p$-complete in bipartite graphs. This concludes the proof.
\end{proof}

\section{Some remarks about complexity of choosability}\label{sec:remarks}

We conclude this work with a few remarks on the complexity of choosability. We first give evidences that the complexities of list coloring and choosability are independent by exhibiting classes of graphs where one of them is polynomial while the other one is hard. Then we discuss the relative complexity of $[\ell, k]$-choosability and $[\ell, k+1]$-choosability and give examples where adding a color in the palette makes the problem harder or easier.

\subsection{Comparing complexity of list coloring and choosability}\label{subsec:remarks1} 

Problems $[\Lambda, k]$-CH and $[\Lambda, k]$-LISTCOL  appear to be close to each other and one may wonder about their relative complexity: is one of them more difficult than the other? It turns out that it is not the case. More precisely there are classes of instances where one of the problems is polynomially solvable while the other one is NP-hard. The aim of this subsection is to derive such examples. 

Consider the class ${\cal H}$ of subgrids containing a chocolate as a subgraph. Such a subgrid is trivially non $[\{2,3\},3]$-choosable since a chocolate itself is already non $[2,3]$-choosable (see~\cite{rubin}). Consequently $[\{2,3\},3]$-CH is trivially polynomial on the class ${\cal H}$ since the answer is NO for any of these instances.  On the other hand, we have shown in~\cite{grids}
 that $[\{2,3\},3]$-LISTCOL is NP-complete on subgrids of maximum degree~3 and the same result trivially holds on the class ${\cal H}$. 

Similarly, in~\cite{grids} we have given a subgraph of $G(9,9)$ with 41 vertices that is not $[(2,3),4]$-choosable and we have shown that $[(2,3),4]$-LISTCOL is NP-complete in grids and in particular in grids $G(p,q), p,q\geq 9$.  Using the same  argument as above in the class ${\cal H}$,  $[(2,3),4]$-CH is trivial for the class of grids $G(p,q), p,q\geq 9$ since these graphs are never $[(2,3),4]$-choosable. 

These simple examples give evidence that, in some classes of graphs $[\Lambda, k]$-LISTCOL is harder than $[\Lambda, k]$-CH. The following proposition gives an example of the reverse situation:

\begin{proposition}\label{prop: co-np}
There is a class ${\cal J}$ of graphs with known chromatic number $\chi$ for which $[\{\chi-2,\chi-1, \chi\},\chi]$-LISTCOL is polynomial while $[\{\chi-2,\chi-1, \chi\},\chi]$-CH is co-NP-complete.
\end{proposition}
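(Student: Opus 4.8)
The plan is to construct an explicit class $\mathcal{J}$ of graphs in which list colorability is easy to certify (placing $[\{\chi-2,\chi-1,\chi\},\chi]$-LISTCOL in P) while deciding choosability requires proving that \emph{every} admissible list assignment is feasible, which we will relate to a known co-NP-complete problem. The most natural vehicle is the family of complete multipartite graphs, where the chromatic number is transparent and list colorability reduces to a combinatorial covering/matching condition. Concretely, I would take $\mathcal{J}$ to consist of graphs $G$ built from a clique (or complete multipartite skeleton) of prescribed size so that $\chi(G)$ is immediate, and such that list $\chi$-colorings correspond to systems of distinct representatives (SDR) for the lists across the parts. For such graphs, $[\{\chi-2,\chi-1,\chi\},\chi]$-LISTCOL becomes a bipartite matching / Hall-type feasibility test, hence polynomial; this disposes of the first half of the statement.

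For the hard direction, I would aim to show that certifying choosability—i.e.\ that no infeasible admissible list assignment exists—is co-NP-complete. The key is that the complement problem (``does there exist an $f$-list assignment with $f\in\{\chi-2,\chi-1,\chi\}$ that is infeasible?'') should be NP-complete. First I would exhibit, for a gadget graph, a correspondence between infeasible list assignments and a combinatorial obstruction such as a blocking family of $(\chi-1)$-subsets of the palette $\{1,\dots,\chi\}$, mirroring the counting argument used in the proof of Proposition~\ref{prop: l-choos k col bip} (where $K_{\binom{2\ell-2}{\ell},\binom{2\ell-2}{\ell-1}}$ witnesses criticality via the $(\ell-1)$-sets that fail to color one side). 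I would then reduce a known NP-complete problem—plausibly a covering or set-cover-style problem, or directly the existence problem underlying $\Pi_2^p$-completeness of choosability—to the existence of such an obstruction inside a graph of $\mathcal{J}$.

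The main steps, in order, are: (1) fix the construction of $\mathcal{J}$ so that $\chi$ is known by inspection and list $\chi$-colorings reduce to an SDR/matching condition, giving polynomial LISTCOL; (2) identify the precise combinatorial certificate for \emph{infeasibility} of an $f$-list assignment with $f$-values in $\{\chi-2,\chi-1,\chi\}$, in terms of which palette-subsets are forbidden by the lists on each part; (3) choose a known NP-complete source problem and build a polynomial reduction whose yes-instances correspond exactly to infeasible admissible list assignments of a graph in $\mathcal{J}$; and (4) confirm membership in co-NP by observing that an infeasible assignment is a short certificate of non-choosability checkable against the polynomial LISTCOL oracle from step~(1). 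The delicate calibration is that the list sizes are restricted to the three values $\{\chi-2,\chi-1,\chi\}$ relative to exactly $\chi$ available colors, so the freedom in each list is very small; the hardness must be squeezed out of this narrow regime.

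The hard part will be step~(3): engineering a reduction that simultaneously respects the rigid list-size constraint $f(v)\in\{\chi-2,\chi-1,\chi\}$, keeps the palette at exactly $\chi$ colors, and still encodes an NP-complete problem faithfully. With so little slack per list, the gadget must force infeasibility through the global interaction of many lists rather than any single local conflict, so the reduction has to control the blocking family of $(\chi-1)$- and $(\chi-2)$-subsets across all parts at once. I expect the bookkeeping to resemble the binomial counting in Proposition~\ref{prop: l-choos k col bip}, and the challenge is to make that counting programmable by an arbitrary instance of the chosen NP-complete problem while preserving the polynomial-time LISTCOL property that yields co-NP (rather than $\Pi_2^p$) completeness.
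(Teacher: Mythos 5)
Your high-level architecture is right: establish co-NP membership by combining a polynomial LISTCOL algorithm with an infeasible list assignment as certificate of non-choosability, and establish hardness by showing that the \emph{existence} of an infeasible admissible assignment is NP-hard. This matches the skeleton of the paper's argument. But the proposal stops exactly where the proof begins: you never fix the class ${\cal J}$, never name the source NP-complete problem (you say ``plausibly a covering or set-cover-style problem''), and never give the reduction, while explicitly flagging step~(3) as the hard part. Since the entire content of the proposition is that construction, this is a genuine gap rather than an alternative route.

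There is also a concrete flaw in step~(1). You propose complete multipartite graphs on the grounds that list $\chi$-colorability there reduces to an SDR/matching condition. It does not: with a palette of size $\chi$ growing with the graph, list coloring of complete bipartite (hence complete multipartite) graphs is NP-complete, so your candidate class would not even give the polynomial LISTCOL half. The paper avoids this by building a bespoke graph from an instance $(X,F)$ of Hypergraph 2-colorability (Set Splitting, restricted to hyperedges of size at most~3): two cliques $V_0\cong K_{m+1}$ and $V_F\cong K_m$, a stable set $V_X$ (one vertex per hypergraph vertex, adjacent to $V_F$ according to hyperedge membership), and a padding stable set $V_S$ of size $\binom{m+1}{2}-1$ whose $(m-1)$-lists each forbid one pair of colors. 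The joins are arranged so that $V_X\cup V_S$ must be colored with exactly two of the $m+1=\chi$ colors; LISTCOL then becomes polynomial by enumerating the $O(m^2)$ color pairs, and an infeasible list assignment exists if and only if $(X,F)$ is 2-colorable (the two surviving colors on $V_X$ encode the 2-coloring, and the clause vertices in $V_F$ are blocked exactly when no hyperedge is monochromatic). Your ``blocking family of $(\chi-1)$-subsets'' idea, borrowed from the bipartite criticality gadget, is not what carries the reduction here; the hardness is encoded in which single color each $V_X$-list retains, not in a binomial counting argument.
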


\begin{proof}

\begin{figure}[h]
\begin{center}
\includegraphics[scale=0.4]{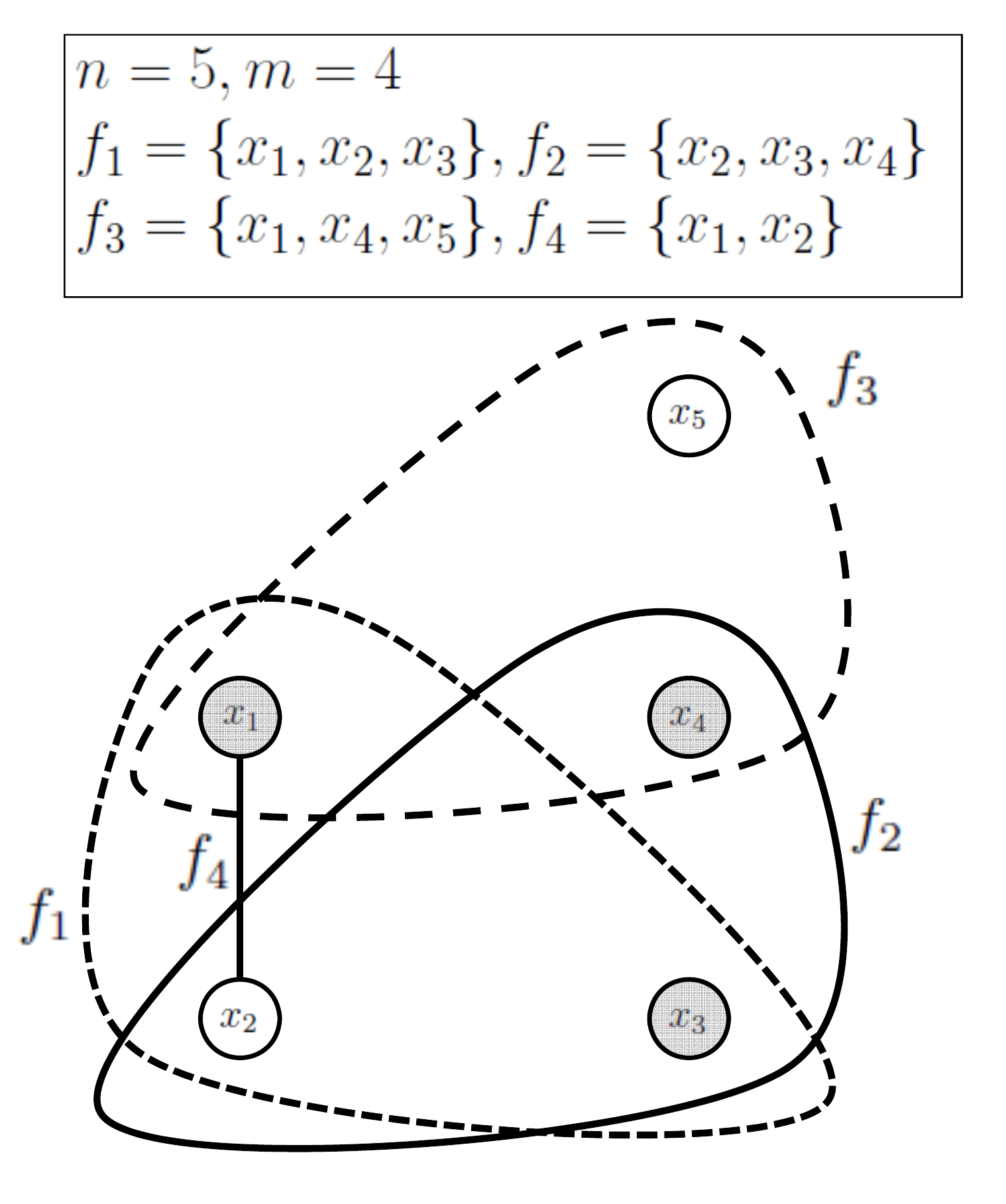}
\caption{A 2-colorable hypergraph $(X,F)$.}
\label{fig:hyper}
\end{center}
\end{figure}

\begin{figure}[h]
\begin{center}
\includegraphics[scale=0.5]{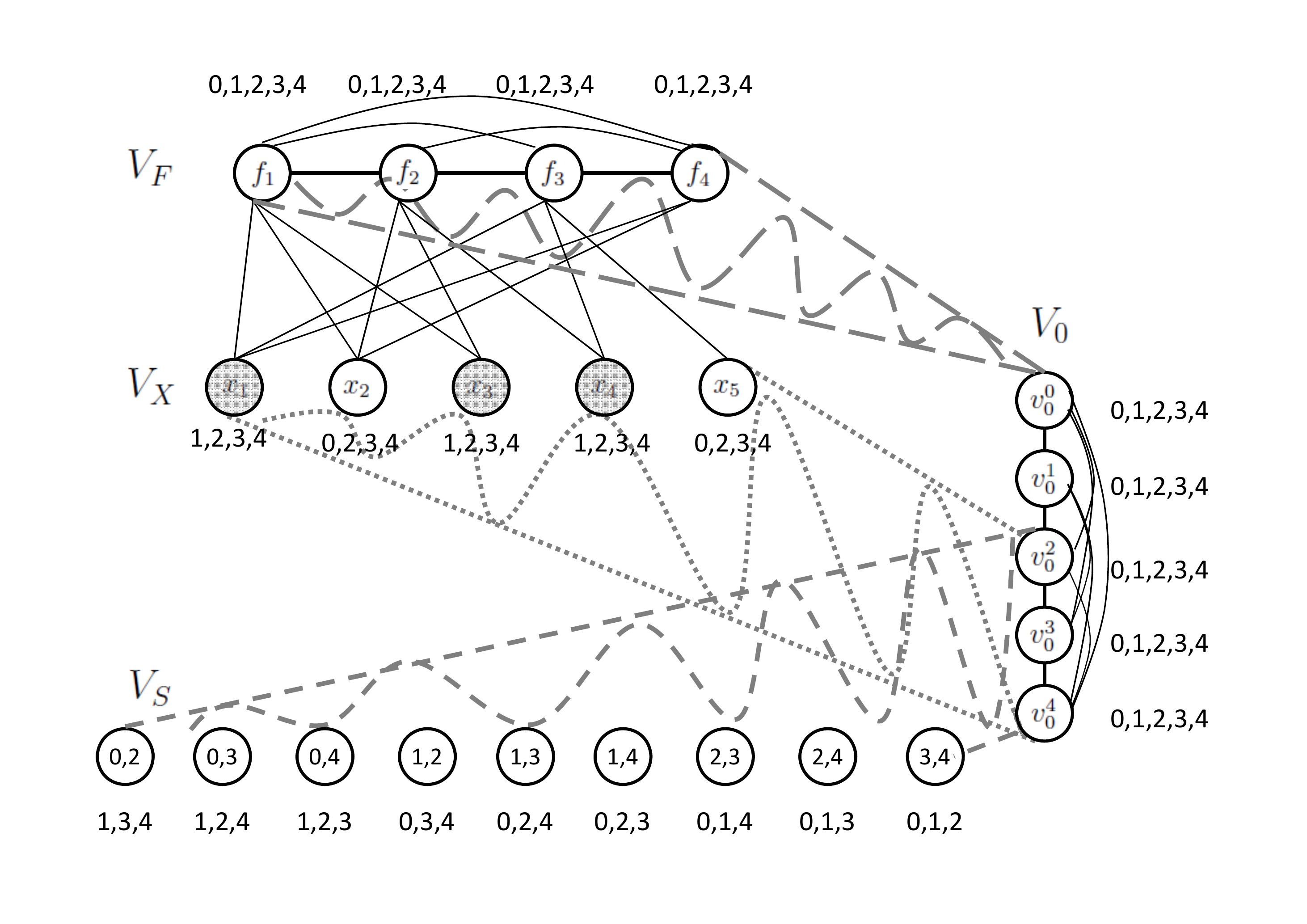}
\caption{The graph associated with $(X,F)$ of Figure~\ref{fig:hyper} and the related infeasible list assignment. }
\label{fig:CH-LCOL}
\end{center}
\end{figure}

We will use  Hypergraph 2-colorability (also called Set Splitting): any instance is defined by a hypergraph $(X,F)$ with vertex set $X=\{x_1, \ldots, x_n\}$ and hyperedge set $F=\{f_1, \ldots, f_m\}, f_i\subset X, i=1, \ldots m$; the question is whether there is a 2-coloring of $X$  such that no hyperedge is monochromatic. It is known to be NP-complete even if hyperedges are of size at most~3~\cite{gj}.

Let $(X,F)$ be such an instance with $n$ vertices and $m$ hyperedges. 
We consider the graph $G=(V,E)$ defined as follows: $V=V_0\cup V_F\cup V_X\cup V_S$, $V_0=\{v_0^0, v_0^1, \ldots, v_0^{m}\}$ induces a clique $K_{m+1}$, $V_F=\{v_2^{f_1}, \ldots, v_2^{f_m}\} $ induces a clique $K_m$, $V_X=\{v_X^{x_1}, \ldots v_X^{x_n}\}$  and $V_S=\{v_S^1, \ldots, v_S^{\binom{m+1}{2}-1}\}$. $V_X\cup V_S$ induces a stable set. We have in $E$ additional edges:  all edges between $v_0^2, \ldots, v_0^{m}$ and $V_X\cup V_S$, all edges between $v_0^0$ and $V_F$ and edges $v_F^{f_i}v_X^{x_j}$ for every vertex $x_j$ in $f_i$, $i\in\{1, \ldots m\}$. Finally vertices in $V_X$ will have $m$-lists, those in $V_S$ have $(m-1)$-lists and all other vertices $(m+1)$-lists. Let ${\cal J}$ be the class of such graphs.

$G$ can be constructed in polynomial time and it is straightforward to verify that its chromatic number is $\chi(G)=m+1$. There is indeed a clique of size $m+1$ and moreover an $(m+1)$-coloring can be obtained  as follows:  color vertex $v_0^i$ with color $i\in \{0, \ldots, m\}$, vertex $v_F^{f_i}$ with color $i\in\{1, \ldots m\}$ and vertices in $V_X\cup V_S$ with  0. 

Since $|V_S|=\binom{m+1}{2}-1$, there is a one-to-one correspondence between $V_S$ and the set of pairs of colors taken among $\{0, \ldots, m\}$, except the pair $\{0,1\}$; we denote by $(s_i,t_i)$ the pair corresponding to $i\in \{1, \ldots, \binom{m+1}{2}-1\}$.

\vspace{0,5 cm}

Figure~\ref{fig:hyper} gives an instance $(X,F)$ of Hypergraph 2-colorability with $n=5, m=4$. It is a 2-colorable hypergraph as shown by the black \& white coloring of $X=\{x_1, x_2, x_3, x_4, x_5\}$. Figure~\ref{fig:CH-LCOL} illustrates the construction of $G$ for the example of Figure~\ref{fig:hyper}. This graph is not $[\{3,4,5\},5]$-choosable as illustrated by the related infeasible list assignment which will be defined below. In this figure,  $V_X\cup V_F$ is directly identified with $X\cup F$ and vertices in $V_S$ are identified by  pairs of colors different from $(0,1)$, the black color is associated with 1 and the white one with 0. Finally, curved dashed lines indicate a complete link between two sets of vertices.

\vspace{0,5 cm}

Suppose that $(X,F)$ is 2-colorable and consider a feasible 2-coloring with colors 0 and 1. We then define a list system as follows: for every $i\in\{1, \ldots, n\}$, we take  $L(v_X^{x_i})= \{\theta_i\}\cup\{2, \ldots m\}$, where $\theta_i\in\{0,1\}$ is the color of $x_i$ in the 2-coloring of $(X,F)$.  For vertices in $V_S$ we define $L(v_S^i)=\{0, \ldots, m\}\setminus\{s_i,t_i\})$. In particular $L(v_S^i)$ contains at least one color among 0 and 1.  All other vertices have the list $\{0, \ldots, m\}$. In any feasible list coloring of $G[V_0\cup V_X\cup V_S]$ only two colors can be used for coloring $V_X\cup V_S$  and the list system of $V_S$ imposes that these two colors are necessarily 0 and 1. Consequently, any vertex in $V_X$ gets the same color as in the 2-coloring of $(X,F)$. Then at least one vertex  $v\in V_F$ should be colored either with 0 or 1 and since there is no monochromatic hyperedge, $v$ has two neighbors colored 0 and 1, a contradiction.  Consequently this list system is not feasible, proving that $G$ is not choosable.

\vspace{0,5 cm}

Suppose conversely that $G$ is not choosable for the above list sizes (i.e. $(m-1)$ in $V_S$, $m$ in $V_X$ and $(m+1)$ elsewhere) and consider an infeasible list system. Every vertex in $V_S$ has exactly two forbidden colors among $\{0, \ldots, m\}$  and since the total number of pairs of colors is  $\binom{m+1}{2}=|V_S|+1$ it is possible to find two colors for list coloring vertices in $V_S$. The same colors can be used for vertices in $V_X$ since each one has only one forbidden color.
Let us denote by $i,j$ the two colors used for coloring $V_X\cup V_S$. Color $v_0^2, \ldots, v_0^m$ with colors in $\{0, \ldots, m\}\setminus \{i,j\}$. Consider $v\in V_F$; if all its neighbors in $V_X$ are colored with the same color, say $i$, then it would be possible to extend the list coloring to the whole graph, choosing color $i$ for $v_0^0$,  color $j$ for $v_0^1$ and colors in $\{0, \ldots, m+1\}\setminus\{i,j\}$ for the vertices in $V_F\setminus \{v\}$, a contradiction with the fact that the list assignment is not feasible. 
Hence every $v\in V_F$ has a neighbor colored with $i$ in $V_X$ and one colored with $j$, this means that the coloring of $V_X$ defines a feasible 2-coloring of $(X,F)$. Hence, $(X,F)$ is 2-colorable if and only if $G$ is not choosable. 

Note finally that, given any list assignment with the corresponding list sizes, it can be checked in polynomial time whether $G$ is list colorable. Indeed, for every pair of colors $\{i,j\}$ among $\{0, \ldots, m\}$, one can color $v_0^0$ with $i$ and $v_0^1$ with $j$, then for every vertex in $V_F$, it suffices to check whether its neighbors can be colored with $i$ and, in this case, whether this coloring can be extended to $V_X\cup V_S$, using only colors $i$ and $j$. If so, the graph is list colorable. If this is not possible for any pair $\{i,j\}$ and any vertex in $V_S$, $G$ is not list colorable. This shows that 
$[\{\chi-2,\chi-1, \chi\},\chi]$-LISTCOL is polynomial in ${\cal J}$. 

Moreover this also shows that $[\{\chi-2,\chi-1, \chi\},\chi]$-CH is co-NP in the class ${\cal J}$ and consequently, the previous reduction shows that it is co-NP-complete.
\end{proof}

\subsection{Incidence on the complexity when adding one color in the palette}\label{subsec:remarks2} 

For list coloring problem, our definition of $[\Lambda,k]$-LISTCOL ensures that, $[\Lambda,k]$-LISTCOL is a subproblem of   $[\Lambda,k']$-LISTCOL for any $k'$ such that $\max(\Lambda)\leq k\leq k'$ and consequently if the former is hard on a class of instances, then the latter is also hard in this class. So, extending the palette cannot make the list coloring problem easier. 

Using  arguments similar to those used in the proof of Theorem~\ref{prop: 3-choos 4 col} we can easily exhibit an example where it makes it harder. Indeed, we have shown in~\cite{grids}  that  $[(2,3),4]$-LISTCOL is NP-complete in grids. Using the gadget $H$ in Figure~\ref{fig: 3-choos 4 col} we can show the following:

\begin{proposition}\label{prop:liscol}
$[3,4]$-LISTCOL is NP-complete in the class of 3-colorable planar graphs with a given 3-coloring.
\end{proposition}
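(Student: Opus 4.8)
The plan is to reduce from $[(2,3),4]$-LISTCOL in grids, shown NP-complete in~\cite{grids}, reusing the gadget $H$ of Figure~\ref{fig: 3-choos 4 col} exactly as in the proof of Theorem~\ref{prop: 3-choos 4 col}. Membership in NP is immediate: given a $[3,4]$-list assignment one guesses a coloring and checks it in polynomial time. For hardness, start from a grid $G$ with a fixed list assignment $L$ in which the vertices $V^2$ of one part carry $2$-lists and the vertices $V^3$ of the other part carry $3$-lists (recall that grids are bipartite). For every $v\in V^2$ with $L(v)=\{a_v,b_v\}$, fix a color $c_v\in\{1,2,3,4\}\setminus\{a_v,b_v\}$, attach a fresh copy $H_v$ of $H$, and join $v$ to the three vertices $X_v,Y_v,Z_v$ of $S_{H_v}$. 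Call $\widetilde G$ the resulting graph and define the $3$-list assignment $\widetilde L$ by: keeping $L$ on $V^3$; placing on each gadget $H_v$ the infeasible $f_H$-list assignment of Figure~\ref{fig: 3-choos 4 col} (permuted so that its missing color is $c_v$) on the internal vertices, and putting $L_{H_v}(s)\cup\{c_v\}$ on each $s\in S_{H_v}$; and setting $\widetilde L(v)=L(v)\cup\{c_v\}$. Thus every vertex of $\widetilde G$ gets a $3$-list, and the construction is clearly polynomial.

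The correctness rests on the two properties of $H$ already established in the proof of Theorem~\ref{prop: 3-choos 4 col}: $H$ is $([f_H,4],S_H)$-critical via the displayed infeasible assignment, whose colors on $S_H$ avoid $c_v$, and Assumption~\ref{item: lemff'2} of Lemma~\ref{lem: ff'} holds, i.e.\ coloring $X_v,Y_v,Z_v$ with a single common color always extends to $H_v$. First I would show that in any proper $\widetilde L$-coloring one has $c(v)\neq c_v$ for every $v\in V^2$: if $c(v)=c_v$, then $X_v,Y_v,Z_v$ cannot use $c_v$, so $H_v$ would be colored under the infeasible assignment of Figure~\ref{fig: 3-choos 4 col}, a contradiction. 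Hence $c(v)\in L(v)$, and since all original grid edges survive in $\widetilde G$, restricting any $\widetilde L$-coloring to $G$ yields an $L$-coloring. Conversely, given an $L$-coloring of $G$, each $v\in V^2$ receives a color in $\{a_v,b_v\}$, hence distinct from $c_v$; coloring $X_v,Y_v,Z_v$ all with $c_v$ (legal, since $c_v$ lies in each of their lists and $c_v\neq c(v)$) and extending to the interior of $H_v$ by Assumption~\ref{item: lemff'2} produces an $\widetilde L$-coloring of $\widetilde G$. This establishes that $(G,L)$ is list colorable if and only if $(\widetilde G,\widetilde L)$ is.

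It then remains to check that the produced instances lie in the target class. As in the proof of Theorem~\ref{prop: 3-choos 4 col}, $\widetilde G$ is planar: each $H_v$ is planar and is attached to $v$ by only three edges to $X_v,Y_v,Z_v$, which can be routed inside a single face incident to $v$ without crossings. Moreover $\widetilde G$ is $3$-colorable, and a proper $3$-coloring is output together with the instance: color the bipartite grid $G$ with $\{1,2\}$, color $X_v,Y_v,Z_v$ with $3$ in every gadget, and color the remaining vertices of each $H_v$ (which induce a $2K_2$ once $S_{H_v}$ is removed) with $\{1,2\}$; this is proper because $X_v,Y_v,Z_v$ are adjacent only to grid vertices and to interior gadget vertices, all colored in $\{1,2\}$.

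The step I expect to be the crux is the forcing argument together with its converse, namely that setting $c(v)=c_v$ exactly reproduces the infeasible assignment on $H_v$ while $c(v)\in\{a_v,b_v\}$ allows the monochromatic completion of $H_v$. Both directions hinge on the $([f_H,4],S_H)$-criticality and on Assumption~\ref{item: lemff'2} of $H$; once these are imported from the proof of Theorem~\ref{prop: 3-choos 4 col}, the planarity and $3$-colorability bookkeeping is routine.
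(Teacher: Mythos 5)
Your proposal is correct and follows essentially the same route as the paper: a reduction from $[(2,3),4]$-LISTCOL in grids that attaches the gadget $H$ of Figure~\ref{fig: 3-choos 4 col} to each $2$-list vertex $v$, augments the lists with a third color $c_v\notin L(v)$ so that the infeasible assignment on $H_v$ forbids $c_v$ at $v$, and exhibits the same explicit $3$-coloring (grid in $\{1,2\}$, the $S_{H_v}$ vertices in color $3$, the rest in $\{1,2\}$). The only cosmetic difference is that you carry a general $c_v$ and invoke the criticality and Assumption~\ref{item: lemff'2} properties by name, where the paper normalizes to $c_v=3$ via a circular permutation and argues the gadget completion directly.
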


\begin{proof}
The problem is clearly in NP as a special list coloring problem.
Take a palette $\{1,2,3,4\}$  and consider a grid graph $G$ with a $(2,3)$-list assignment,  instance of $[(2,3),4]$-LISTCOL. For any vertex $v$  with a 2-list, consider without loss of generality that  $3\notin L(v)$ and add a gadget $H_v$ isomorphic to $H$ with vertices $X,Y,Z$ linked to  $v$. Denote by $G'$ the new graph. Add color 3 to $L(v)$, set $L(X)=\{2,3,4\}$, $L(Y)=\{1,3,4\}$, $L(Z)=\{1,2,3\}$ and set the lists of the four other vertices of $H_v$ to $\{2,3,4\}$. Since the list assignment given in Figure~\ref{fig: 3-choos 4 col} is not feasible, in any list coloring of the new graph, at least one among $X,Y,Z$ needs to be colored with color~3 and consequently $v$ cannot be colored with~3. Moreover, by coloring $X,Y,Z$ with color~3, we can easily complete the list coloring on the whole gadget $H_v$. As a consequence, with these lists, a list coloring of $G'$ immediately defines a list coloring of $G$ and conversely a list coloring of $G$ can immediately be  extended to a list coloring of $G'$. If $3\in L(v)$, we can do a similar construction  changing the lists by a circular permutation. Repeating the construction for every vertex $v$ with a 2-list we build in polynomial time a graph $\widetilde G$ as well as a 3-list assignment such that $\widetilde G$ is list colorable if and only if $G$ is list colorable. Moreover, since $G$ is bipartite, $\widetilde G$ is 3-colorable and a 3-coloring can be  defined as follows: color $G$ with colors 1 and 2, color all vertices $X,Y,Z$ of gadgets $H-v$ with color~3 and complete the 3-coloring with colors 1 and 2 for the other vertices of gadgets $H_v$. This completes the proof using the fact that $[(2,3),4]$-LISTCOL is NP-complete in grids~\cite{grids}.
\end{proof}

Of course $[3,3]$-LISTCOL \-- equivalent to 3-coloring \-- is trivial in the class of 3-colorable planar graphs with a given 3-coloring and following Proposition~\ref{prop:liscol} makes the problem NP-complete.\\ 

A natural question arises: when increasing the number of colors does it make the problem of deciding whether a graph is choosable easier or even harder? 
It turns out that the situation is completely different from the case of List coloring since both situations may occur. 

In some cases, adding one color makes the problem harder. A trivial example is given by Theorem~\ref{prop: 3-choos 4 col}: $[3,4]$-choosability has been shown hard in a class of 3-colorable graphs, while $[3,3]$-choosability is trivial on this class. A similar situation is exhibited in Proposition~\ref{prop: l-choos k col bip} between $[3,4]$-choosability and $[3,5]$-choosability of bipartite graphs.


On the contrary, in some other cases, adding one color makes the problem easier.
In~\cite{old version}, we have  built a bipartite graph $H_5$ that is $[3,5]$-choosable but not $[3,6]$-choosable (See also Remark~\ref{rem:kk+1} and \cite{kral}). Consider the class obtained by adding to any bipartite graph an independent bipartite graph $H_5$. This transformation does not change the $[3,5]$-choosability while making the new graph not $[3,6]$-choosable. So, using Proposition~\ref{prop: l-choos k col bip}, $[3,5]$-CH is hard on this class while $[3,6]$-CH is trivial since the answer is always negative.

\section{Final remarks}\label{sec:conclusion} 

In this paper we started to investigate the complexity of some choosability problems when the size of the palette of colors is fixed. Table~\ref{table} summarizes our main hardness results.


\begin{center}
\begin{table}[h]
\centering
\footnotesize
\begin{tabular}{|c >{\centering\arraybackslash}p{1.8 cm}|c|>{\centering\arraybackslash}m{2.4 cm}|>{\centering\arraybackslash}m{3.4 cm}|}

\hline 
Problem & Graph class & $k=3$  & $k=4$ & $k=5$\\

\hline
$[\{2,3\},k]$-CH & Subgrids & \multicolumn{3}{|c|}{$\Pi_2^p$-complete (Thm.~\ref{prop: 2,3-choos 4 col grids})}\\
\hline
$[\{2,3,5\},k]$-CH & Grids &\multicolumn{2}{|c|}{\diagbox[dir=SW,width=4cm]{}{}}&$\Pi_2^p$-complete (Cor.~\ref{cor:grid 2,3,5})\\
\hline
\multirow{4}{*}{$[3,k]$-CH} & {3-colorable planar graphs} & \multirow{2}{*}{trivial} & \multicolumn{2}{c|}{\multirow{2}{*}{$\Pi_2^p$-complete (Thm.~\ref{prop: 3-choos 4 col})}} \\
\cline{2-5}
&{Triangle-free planar graphs} & \multirow{2}{*}{trivial} & \multirow{2}{*}{?} &\multirow{2}{*}{$\Pi_2^p$-complete  (Thm.~\ref{prop: 3-choos 5 col planar})} \\
\hline

\end{tabular}
\caption{Complexity status of choosability problem with 3,4 or 5 palette size}
\label{table}
\end{table}
\end{center}

This table shows some cases that still remain open to our knowledge. The main one is the complexity status of  $[3, 4]$-CH in triangle-free planar graphs; we conjecture that it is $\Pi_2^p$-complete but were still not able to prove it.

Another interesting question is  the complexity status of $[(2,3), 4]$-CH in bipartite graphs and more generally of $[(2,3), k]$-CH, $k\geq 4$: every bipartite graph is $[(2,3), 3]$-Choosable and $[(2,3), 4]$-LISTCOL is hard, even in subgrids~\cite{grids}. Also, the gadget we use to study the case of subgrids has vertices of degree~4; a possible direction to strengthen our results in subgrids would be to consider the case of subgrids of maximum degree~3.

Let us finally mention, as another research direction the edge choosability of graphs and list edge-coloring in specific classes of graphs when the number of colors is limited.

\newpage
\section*{Appendix}

In this appendix, we prove Lemma~\ref{lem_annex} which is needed for Theorem~\ref{prop: 2,3-choos 4 col grids}.

We first state some properties of the gadgets we have specified in the proof of Theorem~\ref{prop: 2,3-choos 4 col grids}. These properties are inspired by~\cite{rubin} and~\cite{gutner}.

$P_{1/2}$ has the following properties:
\begin{quote}
 {\bf (1)} for every $f_{P_{1/2}}$-list assignment and any choice of color $c\in L(I)$, there is a list coloring with $I$ colored~$c$,\\
 {\bf(2)} for any choice of color $c'\in L(O')$, there is a list coloring of $ P_{1/2}$ with $O'$ colored~$c'$,\\
 {\bf(3)} there is an $f_{P_{1/2}}$-list assignment with $0\in L(I)$ and $0\in L(O')$ and such that in any possible list coloring with $I$ colored $0$, $O'$ will also be colored $0$. 
\end{quote}

 (1) is immediate by using Proposition~\ref{claim: C} and ordering vertices as in Figure~\ref{fig:numbers}.
 
 \begin{figure}[h]
\begin{center}
\includegraphics[scale=0.3]{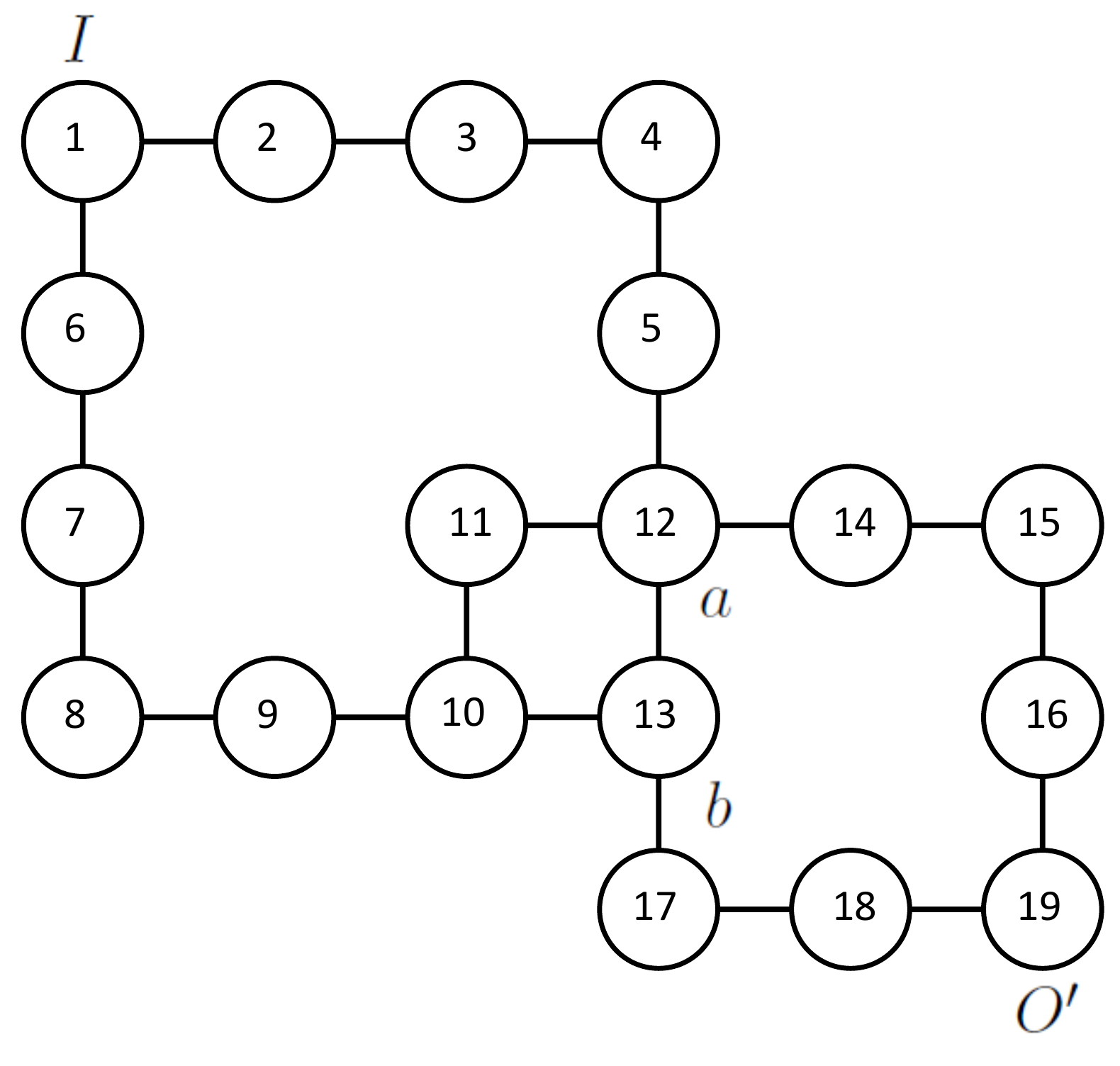}
\caption{A numbering of $P_{1/2}$.}
\label{fig:numbers}
\end{center}
\end{figure}

(2) is a consequence of the fact that $\theta_{2,2,10}$ is 2-choosable.
(3) can be shown using the $f_{P_{1/2}}$-list assignment given in Figure~\ref{fig:half-propagator}.

An immediate consequence  of (2) is:
\begin{quote}
{\bf(4)} for any $f_{P_{1/2}}$-list assignment, there is a color $c\in L(I)$ such that for two different colors $c',d'\in L(O')$ there are two list colorings with $I$ colored $c$ and $O'$ colored $c'$ and $d'$, respectively. 
\end{quote}

For a gadget $P$, if one fixes the color $c$ of its input vertex $I$, a color $d$ will be  {\em available} for the output vertex $O$ if there is a list coloring of $P$ such that $I$ is colored $c$ and $O$ is colored $d$. Note that the colors of the other vertices of $P$ will not affect the rest of the graph.
 
Then, the gadget $P$ with input vertex $I$ and output vertex $O$ will satisfy the following properties:
\begin{quote}
{\bf(1')} for every $f_{P}$-list assignment and any choice of color $c\in L(I)$, there is a list coloring with $I$ colored~$c$, 
 
{\bf (3')} there is an $f_{P}$-list assignment with $0\in L(I)$ and $0\in L(O)$ and such that in any possible list coloring with $I$ colored $0$, $O$ will also be necessarily colored~$0$,\\
{\bf (4')} for any $f_{P}$-list assignment, there is a color $c\in L(I)$ such that all three colors in $L(O)$ are available. We then call $c$ a {\em nice} color in $L(I)$.	
\end{quote}
To show (4'), one selects $c\in L(I)$ such that it is compatible with two different colors in $L(O')$ and then we apply property (2) on the second $P_{1/2}$ forming $P$ with $L(O')$ reduced to these two colors.  
	
	We will  finally  need two additional easy properties:
\begin{quote}
{\bf(5)} For the gadget associated with the $\forall$-variable $U_i$ shown in Figure~\ref{fig:forall}, for any 2-list assignment, one of the two literal vertices $u_i, \bar u_i$  may have only one possible color in any list coloring, but in this case the other one can be given any of its two possible colors,\\
{\bf(6)} For an elementary path from $x$ to $y$ of length $p\geq 2$, there is a 2-list assignment with palette $\{0,1,2\}$ and $L(x)=\{0,1\}$ such that for any $i\in\{0,1,2\}$ if $x$ is colored 0 then $y$ gets color $i$.
\end{quote}

Property (5) has already been noted in~\cite{rubin}.  Note first that the gadget associated with the $\forall$-variable is 2-choosable (its core is a $C_4$). An example of a 2-list assignment imposing  color 1 to $u_i$ is proposed in Figure~\ref{fig:forall}. Consider now any 2-list assignment imposing one color at a literal vertex $w\in\{u_i,\bar u_i\}$. Without loss of generality assume $w=u_i$, $L(u_i)=\{0,1\}$ and 1 is imposed, which means there is no list coloring with $u_i$ colored 0. Then necessarily $L(u_i^1)=\{0,c\}, c\neq 0$ and $c\in L(u_i^2)\cap L(u_i^4)$. Indeed, if $0\notin L(u_i^1)$ any 2-list coloring can be changed to force color 0 for $u_i$ and if $L(u_i^1)=\{0,c\}$ but $c\notin L(u_i^2)\cap L(u_i^4)$, say $c\notin L(u_i^2)$, we can assign color 0 to $u_i$, color $c$ to $u_i^1$ and then complete greedily the list coloring considering vertices in order $u_i^4, u_i^3, u_i^2, \bar u_i$. By symmetry the same can be shown if $c\notin L(u_i^4)$ or if $w=\bar u_i$. \\
For property (6) we consider the three following cases:\\
For $i=0$ and $p$ even all lists are $\{0,1\}$.\\
For $i=0$ and $p$ odd lists are 
$\{0,1\},\ldots \{0,1\},\{1,2\}, \{0,2\}$.\\
For $i\neq 0$, let $j$ be such that $\{0,i,j\}=\{0,1,2\}$ and lists are: 
$\{0,1\},\{0,j\},\ldots$ $\{0,j\},$ $\{\ell,i\}$ where 
$\ell= 0$ if $p$ is odd and $\ell=j$ if $p$ is even.
\vspace{0.5 cm}

{\underline{Proof of the reduction}}:

We are now ready to remind the main arguments for proving statements (i) and (ii).

{\bf (i)} Suppose that ${\cal I}$ is satisfiable and consider any $f$-list assignment for $\widetilde{\Gamma}_{\Phi}$.  Considering property (5), this list assignment may impose some colors among  the literal vertices $u,\bar u$ associated with $\forall$-variables, but in such a case the color of the vertex associated with the negation literal can be any color of its list. We set the related literals (with their color imposed) to False and call them {\em forced}. This defines a truth assignment for some $\forall$-variables. We  choose arbitrarily the truth assignment of the other $\forall$-variables. We then consider a truth assignment of $\exists$-variables such that each clause contains at least one true literal. 

Then for every   vertex $v$ associated with a true literal, we choose $c$, a nice color in its list  using (4') ($v$ is the input vertex of a gadget $P$), letting available all three colors of the output vertex. Note that it is always possible and in particular for the negation of forced literals using Property (5). If a true literal appears twice in $\Phi$ then, we choose a nice color of the input vertex of its second gadget $P$. In this case the output of the first gadget $P$ has at least two available colors and the output vertex of the second gadget $P$ has its three colors available.  We then complete the list coloring of all variable gadgets in particular  false literals of $\forall$-variables and false literals of $\exists$-variables that are not forced.  For any vertex associated with a false literal, the related gadget $P$ and, possibly the second one are list colored using property (1'). We also color the connection paths between the Outputs of these colored $P$'s and clause vertices. Then we color the clause vertices, which is possible since no clause vertex has all its three neighbors already colored (only those corresponding to a false literal are colored). Finally, we  color the connection paths associated with true literals from the clause vertex to the related output vertex of a gadget $P$. It is possible since this output vertex has at least two available colors. Using Property (4') it can be extended to the related $P$'s, which concludes the proof: $\widetilde{\Gamma}_{\Phi}$ is choosable.\\

{\bf (ii)} Suppose now $\widetilde{\Gamma}_{\Phi}$ is $[f,3]$-choosable; consider any truth assignment for $\forall$-variables and we will show there is a truth assignment for $\exists$-variables satisfying all clauses. We construct an $f$-list assignment for $\widetilde{\Gamma}_{\Phi}$ as follows. We first fix the 2-lists of $\forall$-variable gadgets so that the color of the False $\forall$-literal vertices is necessarily 0 and  the related True $\forall$-literal vertices get the list $\{0,1\}$, both compatible (see Figure~\ref{fig:forall}). The $\exists$-literal vertices have the list $\{0,1\}$ such that $0$ for one literal implies 1 for its opposite. We also determine the lists of all gadgets $P$ and intermediate vertices $M$ such that color $0$ for the Input induces  color 0 for all related Outputs (see Figure~\ref{fig:half-propagator}). Clauses are associated with the 3-list $\{0,1,2\}$. The literals of the clause are ordered $x_0,x_1,x_2$, which defines a numbering of the three connection paths arriving at this vertex (path $i$, $i=0,1,2$ corresponds to literal $x_i$). Then define 2-lists on the connection paths such that  color 0 at the Output vertex induces  color $i$ at the last vertex before the related clause vertex.

Since $\widetilde{\Gamma}_{\Phi}$ is $[f,3]$-choosable, there is a list coloring. It defines a truth assignment for all $\exists$-variables, color $0$ corresponding to the value False. By choice of the lists, a clause associated with only false literals could not be colored. This means that this truth assignment satisfies all clauses, which concludes the proof.

\end{document}